%% file: main.tex
\documentclass[11pt,letterpaper]{article}
\usepackage[margin=1in]{geometry}

\usepackage{graphicx} 

\usepackage{amsmath}
\usepackage{amssymb, amsthm, amsfonts, graphicx, color, subcaption, enumerate, bm, enumitem, array, mathtools}
\usepackage{mathdots}
\usepackage{tikz}
\usetikzlibrary{arrows.meta}
\usepackage{booktabs}
\usepackage{xspace}

\definecolor{purple}{HTML}{1A1AB3}
\definecolor{red}{HTML}{D32F2F}
\definecolor{green}{HTML}{2E7D32}

\title{Local Certification of Vertex and Edge Connectivity}

\author{Yi-Jun Chang\footnote{National University of Singapore. ORCID: 0000-0002-0109-2432. Email: cyijun@nus.edu.sg}  \and Yi-Xuan Lee\footnote{National University of Singapore. ORCID: 0009-0002-9754-9931. Email: r12921065@ntu.edu.tw}  \and Meng-Tsung Tsai\footnote{Academia Sinica. ORCID: 0000-0002-2243-8666. Email: mttsai@as.edu.tw}}

\date{}

\usepackage{thmtools, thm-restate}

\usepackage{multirow}

\usepackage[style=trad-alpha,natbib=true,maxcitenames=100]{biblatex}
\usepackage{algpseudocode,algorithm}

\newcommand{\dist}{\operatorname{dist}}
\newcommand{\depth}{\operatorname{depth}}

\newcommand{\diam}{{D}}

\newcommand{\ID}{\operatorname{ID}}
\newcommand{\seg}{\operatorname{seg}}
\newcommand{\rank}{\operatorname{rank}}

\newcommand{\Long}{\textsf{Long}}
\newcommand{\Short}{\textsf{Short}}

\newcommand{\DISJ}{\textsf{DISJ}}
\newcommand{\true}{\text{\small\texttt{TRUE}}}
\newcommand{\false}{\text{\small\texttt{FALSE}}}

\newcommand{\poly}{\operatorname{poly}}

\newcommand{\Fcal}{\mathcal{F}}

\newcommand{\Gcal}{\mathcal{G}}

\newcommand{\Pcal}{\mathcal{P}}

\newcommand{\reject}{\textsf{reject}\xspace}
\newcommand{\accept}{\textsf{accept}\xspace}

\newtheorem{lemma}{Lemma}[section]
\newtheorem{claim}[lemma]{Claim}

\newtheorem{fact}[lemma]{Fact}
\newtheorem{definition}[lemma]{Definition}

\newtheorem{observation}[lemma]{Observation}

\newtheorem{remark}[lemma]{Remark}

\usepackage{hyperref}
\hypersetup{
  colorlinks=true,
  linkcolor=red,
  citecolor=green,
  urlcolor=magenta
}



\usepackage[capitalise]{cleveref}
\crefname{problem}{Problem}{Problems}
\crefname{conjecture}{Conjecture}{Conjectures}
\crefname{fact}{Fact}{Facts}
\crefname{observation}{Observation}{Observations}
\crefname{claim}{Claim}{Claims}
\crefname{remark}{Remark}{Remarks}

\DeclareEmphSequence{\color{purple}\itshape}

\AtBeginEnvironment{algorithm}{%
  \DeclareEmphSequence{\itshape}
}
\AtEndEnvironment{algorithm}{%
  \DeclareEmphSequence{\color{purple}\itshape}
}

\AtBeginEnvironment{thebibliography}{%
  \DeclareEmphSequence{\itshape}
}
\AtEndEnvironment{thebibliography}{%
  \DeclareEmphSequence{\color{purple}\itshape}
}

\usepackage[colorinlistoftodos,prependcaption,textsize=tiny]{todonotes}

\newcommand{\mtt}[1]{\todo[backgroundcolor=orange!45]{MT: #1}}
\newcommand{\yx}[1]{\todo[backgroundcolor=gray!45]{yx: #1}}

\begin{document}


\maketitle
\begin{abstract}
    \input{abstract}

\end{abstract}
\thispagestyle{empty}
\newpage
\pagenumbering{arabic}

\input{intro}

\input{prel}


\input{k-ec-upper}

\input{k-vc-upper}
\input{lower_v2}

\input{2-conn_v2}

\input{conclusion_v2}

\section*{Acknowledgment}
 
ChatGPT was used during the research stage to assist with identifying relevant literature. It was also used during manuscript preparation to polish text and refine figures. The AI tool was not used to generate the exposition of the paper.

\printbibliography


\end{document}

%% file: abstract.tex
Local certification is a framework for verifying global graph properties using only local information. In this model, a prover assigns short labels, called certificates, to the vertices of a graph. Each vertex then exchanges certificates with its neighbors and performs a purely local check to determine whether the entire graph satisfies the desired property. This line of research has led to efficient certification schemes for a broad spectrum of graph classes, including minor-closed families, topological graph classes, and graphs defined by forbidden subgraphs.

In this paper, we study the local certification of graph connectivity, a classical problem that captures network robustness and is widely studied in distributed computing. Prior work by Bousquet, Feuilloley, and Pierron (JPDC 2024) showed that $2$-vertex-connectivity, $2$-edge-connectivity, and $3$-vertex-connectivity admit $O(\log n)$-bit certificates, leveraging structural characterizations such as ear decompositions. We go substantially beyond these cases and investigate general $k$-vertex-connectivity and $k$-edge-connectivity. We develop new approaches that exploit further connections between connectivity and combinatorial structures, including branchings, Eulerian subgraphs, and independent spanning trees. 
Our results are as follows. 
\begin{itemize}
    \item We obtain a tight $\Theta_k(\log n)$ bound for certifying $k$-edge-connectivity for every $k\ge 3$: we give an $O_k(\log n)$-bit certification scheme and prove a matching $\Omega_k(\log n)$ lower bound. Our lower bound applies to both $k$-edge-connectivity and $k$-vertex-connectivity. 
    \item  For $k$-vertex-connectivity, we obtain $\tilde{O}_k(\sqrt{n})$-bit certificates for every $k$ under a conjecture of Itai and Zehavi.
    \item We further show that, for $k=2$, the logarithmic barrier can be broken on sparse graph classes: $2$-edge-connectivity admits constant-size certificates in bounded-expansion graphs, and $2$-vertex-connectivity admits constant-size certificates in bounded-degree graphs. In contrast, for $2$-vertex-connectivity in general graphs, we prove an $\Omega(\log(\log^\ast n))$-bit lower bound.
\end{itemize}



%% file: intro.tex
\section{Introduction}


We study local certification of vertex and edge connectivity in distributed networks, modeled as $n$-vertex simple undirected graphs $G$. Fix a graph property $\mathcal{P}$ to be certified. In a \emph{local certification scheme}~\cite{KKP05}, each vertex of $G$ has a distinct $O(\log n)$-bit identifier, and a centralized prover assigns a short label to every vertex. Each vertex then executes a deterministic local verification algorithm, inspecting only its own identifier and label, together with the labels of its neighbors, and outputs either \accept\ or \reject.

The scheme must satisfy the following completeness and soundness guarantees:
\begin{description}
\item[Completeness:] If $G$ satisfies $\mathcal{P}$, then there exists a labeling under which every vertex accepts.
\item[Soundness:] If $G$ does not satisfy $\mathcal{P}$, then under every possible labeling, at least one vertex rejects.
\end{description}

The efficiency of a scheme is measured by its \emph{per-vertex label size}, defined as the maximum number of bits assigned to any vertex by the prover. As is the norm in this model, we assume that the underlying network $G$ is connected. For each integer $k\ge 2$, we investigate the per-vertex label size required to certify $k$-edge-connectivity and $k$-vertex-connectivity.

A graph is \emph{$k$-edge-connected} if every pair of vertices is connected by at least $k$ edge-disjoint paths. Similarly, a graph is \emph{$k$-vertex-connected} if every pair of vertices is connected by at least $k$ internally vertex-disjoint paths.

Certifying edge or vertex connectivity is a natural preprocessing step whenever connectivity is used as an algorithmic resource. A local certification scheme can turn this global structural property into a succinct, locally verifiable guarantee, allowing subsequent procedures to safely rely on a prescribed level of connectivity.

One example is the work of~\citet{BFP24}. They showed that $k$-vertex-connectivity, for $k\le 3$, can be certified using an  $O(\log n)$-bit label per vertex, and further developed local certification schemes for decompositions into $2$- or $3$-vertex-connected components. The key point is that higher connectivity can dramatically simplify the structure of $H$-minor-free graphs: for some small excluded minors $H$, highly connected $H$-minor-free graphs admit much simpler structural descriptions, which can then be exploited in the design of local certification schemes. Building on this idea,~\citet{BFP24} obtain $O(\log n)$-bit certification schemes for several $H$-minor-free graph classes. This fits into the broader program of locally certifying minor-closed graph classes~\cite{CKM25,esperet2022local,FFM+20,FFM+23}, which include fundamental families such as planar graphs and bounded-treewidth graphs. 

The next challenge is to go beyond $3$-vertex-connectivity. Indeed,~\citet{BFP24} explicitly asked for efficient local certification schemes for higher vertex connectivity~\cite[Open Question~58]{BFP24}:

\begin{center}
\fbox{\parbox{0.72\linewidth}{
\centering
\emph{Can $k$-vertex-connectivity be certified with $O(\log n)$ bits for any $k\ge 4$?}
}}
\end{center}

Such schemes could open the door to extending this approach to more complicated minor-closed classes and larger excluded minors. At the same time, it is equally important to understand whether logarithmic label size is necessary. In this direction,~\citet{BFP24} posed the following question~\cite[Open Question~57]{BFP24}:

\begin{center}
\fbox{\parbox{0.85\linewidth}{
\centering
\emph{Does the certification for $k$-vertex-connectivity require $\Omega(\log n)$ bits for any $k \ge 2$?}
}}
\end{center}

Despite graph connectivity being one of the most fundamental graph properties, the complexity of its local certification remains largely open, for both upper and lower bounds.

Connectivity is also a key resource in the design of distributed algorithms themselves. For example,~\citet{ChandraCD0L24} use edge-disjoint spanning trees guaranteed by high edge connectivity to disseminate information efficiently. Their algorithm solves $q$-message broadcast in a network $G$ of edge connectivity $\lambda$ in $O(((n+q)/\lambda)\log n)$ rounds. Since the vertices do not know $\lambda$ in advance, the algorithm uses a guessing procedure to obtain a suitable estimate. A local certificate for $k$-edge-connectivity suggests another approach: the network can first verify that its edge connectivity is at least $k$, and then run an algorithm under this certified guarantee.

The usefulness of high connectivity extends well beyond information dissemination. High \emph{edge connectivity} has been exploited in distributed algorithms for fault-tolerant and secure computation~\cite{censor2018fast, DBLP:conf/icalp/ChuzhoyPT20, DBLP:conf/wdag/HitronPY22}, and a separate line of work studies algorithms whose guarantees depend explicitly on the \emph{vertex connectivity} of the underlying graph~\cite{DBLP:conf/podc/Censor-HillelGK14,CHGG+17,CHGK13}.

 

\subsection{Our results}
We begin by showing that, for every $k \ge 3$, certifying $k$-vertex-connectivity or $k$-edge-connectivity requires per-vertex labels of size $\Omega((\log n)/k)$ (\cref{thm: k-conn-lower}). This answers \cite[Open Question~57]{BFP24} in the affirmative for every constant $k \ge 3$. Moreover, for constant $k$, the lower bound already holds on \emph{bounded-degree graphs}, that is, graphs whose maximum degree $\Delta$ is bounded by a constant.

Interestingly, the situation changes for $k=2$. Our lower-bound argument does not extend to this case, and this turns out to reflect a genuine change in complexity: as we show in \cref{thm: 2-edge-conn} and \cref{thm: 2-vertex-conn}, on bounded-degree graphs, both $2$-edge-connectivity and $2$-vertex-connectivity can be certified using only a constant number of bits per vertex.

\begin{restatable}{theorem}{thmKConnLower}\label{thm: k-conn-lower}
    For every integer $k \ge 3$, any local certification scheme for $k$-edge-connectivity or for $k$-vertex-connectivity requires per-vertex labels of size $\Omega((\log n)/k)$, even when each vertex has access to the identifiers of its neighbors. The lower bound holds in bounded-degree graphs when $k$ is a constant.
\end{restatable}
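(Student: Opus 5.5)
Our plan is a crossing (cut-and-paste) argument of the kind used for $\Omega(\log n)$ lower bounds on certifying acyclicity or symmetry in cycles. We build a family of yes-instances from a long cycle-like \emph{ladder} of constant-size gadgets, each of which is $k$-edge- and $k$-vertex-connected. We then show that if labels have $o((\log n)/k)$ bits, two ``cuts'' must carry identical local views. Splicing the instance at those two cuts produces a no-instance in which every vertex sees exactly a view it saw in some accepting yes-instance, so every vertex accepts. This contradicts soundness.

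\textbf{Construction.} Fix $k\ge 3$. Take $m=\Theta(n/k)$ \emph{layers} $L_1,\dots,L_m$, each a set of $k$ vertices, arranged cyclically. We join consecutive layers $L_i,L_{i+1}$ by a perfect matching and make each layer a cycle (for $k\ge 3$), or a small constant-degree expander-like gadget if needed. The result is a ``thick cycle'' $C_m \times C_k$-type graph. It has degree $O(1)$ when $k$ is constant and is $k$-vertex-connected, hence $k$-edge-connected. The choice $k \ge 3$ matters here: a cut of a cycle of layers must sever two layer-to-layer interfaces, giving $2k$ crossing edges, which is at least $k$. The splicing below will instead produce a graph in which some pair of interfaces forms a small separator. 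Identifiers are assigned from $[n]$ in varying orders, so that the family contains many instances that differ only in which identifiers sit where.

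\textbf{Splicing.} Given a yes-instance $G$ with accepting labeling $\ell$, look at the interface $I_i$ between $L_i$ and $L_{i+1}$. The \emph{state} at $I_i$ is the tuple of labels (and identifiers, since vertices may see neighbor identifiers) on $L_i\cup L_{i+1}$. The state has $O(k\cdot b)$ bits of labels, where $b$ is the label size.

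If we only swap label content while keeping identifiers distinct, the standard trick is as follows. Take two instances, or two far-apart interfaces $I_i, I_j$ in one instance, with equal label-states, and rewire crosswise. Concretely, replace the matching edges $L_i\!-\!L_{i+1}$ and $L_j\!-\!L_{j+1}$ with $L_i\!-\!L_{j+1}$ and $L_j\!-\!L_{i+1}$. This splits the thick cycle into two thick cycles. The union is disconnected, which is not allowed since we assume $G$ connected. So instead we perform the swap at three interfaces, or we keep a thin connection, for example a single extra path, so that the spliced graph is connected but has an edge cut of size $<k$ (or a vertex separator of size $<k$). Since every vertex's closed neighborhood, labels and identifiers are preserved, all vertices accept.

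The identifier issue is handled by arranging identifiers so that the layers adjacent to the spliced interfaces carry the same identifiers in the relevant positions. For example, we use a family of instances in which the ID sequences around the interfaces are fixed, and the freedom lies in ordering the middle blocks.

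\textbf{Counting, the main obstacle.} We need $2^{O(kb)} < $ (number of interfaces, or number of distinct instances), which forces $kb=\Omega(\log n)$, i.e.\ $b=\Omega((\log n)/k)$. The delicate step is ensuring the spliced graph is simultaneously (i) connected, (ii) genuinely not $k$-connected, and (iii) locally indistinguishable, including neighbor identifiers.

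We would first try a single-instance pigeonhole over the $m$ interfaces, requiring distinct identifiers across the whole ring. The two interfaces $I_i,I_j$ with equal label-states then have different identifiers, which breaks indistinguishability. To fix this, we plan to use the classical path/cycle argument of Göös–Suomela style. We consider instances that are ``two rings'' joined by a fixed constant-size bridge gadget of connectivity exactly $k$, realized with degree $O(1)$. Pigeonhole is then taken across many instances with the same identifier set but different arrangements, in which the states at a fixed interface position collide. Cross-gluing the two halves of colliding instances yields a graph with a repeated identifier structure, or with a cut of size $k-1$ at the gadget. Getting the accounting right so that exactly $k-1$ edges or vertices separate the pieces, while all identifiers stay distinct, is the step we expect to require the most care.
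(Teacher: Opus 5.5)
\textbf{There is a genuine gap.} The step you postpone as ``requiring the most care'' is essentially the whole proof, and the fixes you sketch do not work.

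Rewiring entire layer interfaces of a thick cycle, at two, three, or any number of interfaces, only reconnects thick arcs to one another by full $k$-edge matchings. This holds whether you rewire within one instance or across instances. Every component of the result is therefore again a ring of full layers joined by full matchings, and with clique layers such a ring is $k$-connected. You get either a disconnected graph or one with no small cut, never a connected no-instance. A ``thin connection'' that is a path cannot be part of a $k$-connected yes-instance, because its internal vertices have degree $2<k$. Two smaller errors point to the same confusion. First, your reason for needing $k\ge 3$ (namely $2k\ge k$) holds for every $k$, so it is not where the hypothesis is used. Second, cycle layers joined by perfect matchings give a $4$-regular graph, which is not $k$-connected for $k\ge 5$; clique layers, i.e.\ $C_m\square K_k$, would be.

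\textbf{The missing idea is to swap a single edge across two disjoint copies.} Take any $k$-connected $G$ with an edge $\{u,v\}$ such that $|N(u)|,|N(v)|\le k$ and $N(u)\cap N(v)=\emptyset$. The paper uses $K_{n-2}$ plus $u,v$, joined to each other and to disjoint $(k-1)$-sets. A matching edge of $C_m\square K_k$ also works and gives bounded degree. In $G_1\uplus G_2$, replace $u_1v_1,u_2v_2$ by $u_1v_2,u_2v_1$. The result is connected, the two new edges form a $2$-edge cut, and $\{u_1,v_1\}$ is a $2$-vertex cut. So it is a no-instance precisely because $k\ge 3$.

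For neighbor identifiers, you correctly noticed that pigeonholing over interfaces of one instance, or over instances sharing an identifier set, cannot work. The paper's construction goes as follows:
\begin{enumerate}
\item Partition the identifier space into blocks $A_1,\dots,A_N,B_1,\dots,B_N$ with $N=n^c$.
\item Let $C(i,j)$ be the copy of $G$ on $A_i\cup B_j$ in which $u$ and $N(u)\setminus\{v\}$ take identifiers from $A_i$, and $v$ and $N(v)\setminus\{u\}$ take identifiers from $B_j$.
\item Color edge $(i,j)$ of $K_{N,N}$ by the labels on $N(u)\cup N(v)$ in an accepting labeling of $C(i,j)$. This is at most $2kb$ bits.
\item If $b=o((\log n)/k)$, there are fewer than $N^{1/3}$ colors. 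Some color class then has at least $N^{5/3}$ edges and hence contains a $4$-cycle, giving $c(i,j)=c(i',j')=c(i,j')=c(i',j)$.
\item Swap the special edge between $C(i,j)$ and $C(i',j')$, which have disjoint identifier sets.
\end{enumerate}
After the swap, $u_i$ and $v_{j'}$ each see exactly their view in $C(i,j')$, and $u_{i'}$ and $v_j$ each see exactly their view in $C(i',j)$, neighbor identifiers included. All other views are unchanged, so every vertex accepts a no-instance.
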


Our second result is a matching upper bound for $k$-edge-connectivity (\cref{thm: k-ec}). Together with \cref{thm: k-conn-lower}, this gives tight bounds for certifying $k$-edge-connectivity for every constant $k \ge 3$.

The certification scheme in \cref{thm: k-ec} uses a unified construction for all $k \ge 2$. The key idea is to move to a directed setting: we replace each edge $\{u,v\}$ of $G$ with the two antiparallel arcs $(u,v)$ and $(v,u)$, obtaining a directed graph $\vec{G}$, and apply \emph{Edmonds' branching theorem}~\cite{Edmonds73}. This yields a single construction that works for arbitrary $k$.

This contrasts with prior work~\cite{BFP24}, which uses \emph{ear decompositions} and their variants to certify $2$-edge-connectivity, $2$-vertex-connectivity, and $3$-vertex-connectivity. While effective for these small values of $k$, that approach does not appear to extend naturally to arbitrary $k$. The directed viewpoint is what makes a uniform construction possible.

\begin{restatable}{theorem}{thmkEC}\label{thm: k-ec}
    For every integer $k \ge 2$, there is a local certification scheme for $k$-edge-connectivity with per-vertex label size $O(k\log n)$.
\end{restatable}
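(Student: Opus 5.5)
The plan is to reduce $k$-edge-connectivity to the existence of $k$ arc-disjoint spanning arborescences in the bidirected graph $\vec{G}$, which can be certified locally with parent pointers and depth counters. Fix a root $r$ and recall \emph{Edmonds' branching theorem}~\cite{Edmonds73}: a digraph has $k$ arc-disjoint spanning out-arborescences rooted at $r$ if and only if every nonempty $S\subseteq V\setminus\{r\}$ has in-degree at least $k$. In $\vec{G}$, the arcs entering $S$ correspond bijectively to the edges of the cut $\delta_G(S)$. Hence $\vec{G}$ has such a family iff $|\delta_G(S)|\ge k$ for every such $S$. Since cuts are symmetric and every nonempty proper $S$ either avoids $r$ or has a complement avoiding $r$, this holds iff $G$ is $k$-edge-connected. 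This gives the certificate for completeness: if $G$ is $k$-edge-connected, take $k$ arc-disjoint out-arborescences $T_1,\dots,T_k$ of $\vec{G}$ rooted at $r$.

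The label of a vertex $v$ has three parts:
\begin{enumerate}
\item the identifier $\ID(r)$;
\item for each $i\in\{1,\dots,k\}$, the identifier of its parent $p_i(v)$ in $T_i$ (empty for $r$);
\item for each $i$, the depth $d_i(v)$ of $v$ in $T_i$.
\end{enumerate}
This is $O(k\log n)$ bits in total.

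Vertex $v$ accepts iff all of the following hold:
\begin{enumerate}
\item all neighbors store the same root identifier as $v$;
\item if its own $\ID$ equals the stored root identifier, then $d_i(v)=0$ for all $i$;
\item otherwise, for every $i$, $p_i(v)$ is a neighbor $u$ with $d_i(u)=d_i(v)-1$, and $d_i(v)>0$;
\item the $k$ parents $p_1(v),\dots,p_k(v)$ are pairwise distinct.
\end{enumerate}
The label includes the parent's identifier, so $v$ locates its parent among its neighbors; if vertices cannot see neighbor identifiers, each neighbor can instead store its own identifier in its label, at no asymptotic cost. Completeness is immediate from the construction.

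For soundness, suppose every vertex accepts. Since $G$ is connected, all vertices agree on a single root identifier, which by uniqueness of identifiers belongs to at most one vertex $r$. Fix $i$. Starting from any vertex and following $p_i$ pointers strictly decreases $d_i$, so the walk terminates, and it can only stop at a vertex with no parent, i.e.\ at $r$. Thus the arcs $(p_i(v),v)$ for $v\ne r$ form a spanning out-arborescence $T_i$ rooted at $r$; in particular $r$ exists. Each arc of $\vec{G}$ lies in $T_i$ only as the unique in-arc of its head. Therefore the check that the $p_i(v)$ are distinct at every $v$ is exactly arc-disjointness of $T_1,\dots,T_k$.

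Now take any nonempty $S\subseteq V\setminus\{r\}$. Each $T_i$ must contain an arc from $V\setminus S$ into $S$, namely the in-arc of the topmost vertex of $S$ on a root path. These $k$ arcs are distinct, and distinct arcs directed into $S$ correspond to distinct edges of $\delta_G(S)$, since the antiparallel copy of an arc points out of $S$. So $|\delta_G(S)|\ge k$, and by the symmetry argument above $G$ is $k$-edge-connected. This direction is elementary and does not need Edmonds' theorem.

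The only nontrivial ingredient is completeness, which rests entirely on Edmonds' theorem together with the cut correspondence in $\vec{G}$. The one point needing care in soundness is that disjointness is verifiable locally because every arc is "owned" by its head.
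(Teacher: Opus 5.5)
Your proof is correct. Its core is the same as the paper's: Edmonds' branching theorem on $\vec{G}$, certified by a common root identifier plus parent pointers and depths in each of the $k$ branchings, with pairwise distinct parents at every vertex enforcing arc-disjointness.

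The one real difference is in which branchings are certified. The paper first shows that $G$ is $k$-edge-connected iff $\vec{G}$ is $k$-arc-strongly connected. It then certifies both $k$ arc-disjoint out-branchings and $k$ arc-disjoint in-branchings rooted at $r$, the latter to handle cuts $S$ with $r\in S$. You instead observe directly that the arcs entering $S$ biject with $\delta_G(S)$, and that $\delta_G(S)=\delta_G(V\setminus S)$. Sets containing $r$ are therefore handled by passing to the complement, and only the out-branchings are needed.

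This is sound. In the bidirected graph the in-branchings are just reversals of the out-branchings, so the paper's second family adds nothing. Your version halves the label and skips the detour through $k$-arc-strong connectivity and directed Menger. The paper's arc-strong-connectivity formulation is the one that would carry over to non-symmetric digraphs, but that generality is not needed here.

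Your verifier is also a bit more careful than the paper's in two places the paper leaves implicit:
\begin{itemize}
\item Requiring $d_i(v)>0$ at every vertex whose identifier differs from the stored root identifier rules out spurious depth-$0$ sinks.
\item You say how $v$ locates $p_i(v)$ among its neighbors when neighbor identifiers are not visible: each vertex stores its own identifier in its label and checks it.
\end{itemize}
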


Our third result is an $O(k^3\sqrt{n}\log^{2.5} n)$ upper bound for $k$-vertex-connectivity (\cref{thm: k-vc-cert}). The result is unconditional for $k=4$, and for every $k \ge 5$ it holds assuming the \emph{Itai--Zehavi conjecture}~\cite{ZI89}. This conjecture states that an undirected graph $G$ is $k$-vertex-connected if and only if, for every choice of vertex $r\in V(G)$, there exist $k$ \emph{independent spanning trees} rooted at $r$. Here, independence means that, for every $x\in V(G)\setminus\{r\}$, the $r$-to-$x$ paths in the $k$ trees are internally vertex-disjoint.

The best previous result~\cite{BFP24} gives $O(\log n)$-bit local certification schemes for $2$- and $3$-vertex-connectivity, using open ear decompositions and Mondshein sequences, while the case $k\ge 4$ was left open as \cite[Open Question~58]{BFP24}. We make progress on this open problem by giving the first local certification schemes for $k\ge 4$: unconditionally for $k=4$, and for all $k\ge 5$ under the Itai--Zehavi conjecture.


\begin{restatable}{theorem}{thmKVC}\label{thm: k-vc-cert}
    For every integer $k\ge 4$ for which the Itai--Zehavi conjecture holds, $k$-vertex-connectivity admits a local certification scheme with per-vertex label size $O(k^3\sqrt{n}\log^{2.5} n)$. In particular, since the conjecture is known to hold for $k=4$, $4$-vertex-connectivity admits a local certification scheme with per-vertex label size $O(\sqrt{n}\log^{2.5} n)$.
\end{restatable}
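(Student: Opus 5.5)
We would combine a Menger-type reduction to a bounded number of roots with the Itai--Zehavi conjecture, and then certify the resulting independent spanning trees using a $\sqrt{n}$-level "portal" decomposition of each tree. The reduction goes as follows. Fix $k$ vertices $r_1,\dots,r_k$ with distinct identifiers; the prover marks them in the labels, and their correctness is certified by a standard $O(\log n)$-bit spanning-tree count. We also certify $n\ge k+1$ by the same counting tree. We claim that $G$ is $k$-vertex-connected if and only if, for every $\ell\le k$ and every $x\neq r_\ell$, there are $k$ internally vertex-disjoint $r_\ell$--$x$ paths. Indeed, suppose $S$ is a separator with $|S|<k$. Then some $r_\ell\notin S$, and any $x$ in a component of $G-S$ not containing $r_\ell$ witnesses fewer than $k$ disjoint paths. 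By the Itai--Zehavi conjecture (known for $k=4$), completeness holds if the prover exhibits, for each $\ell$, $k$ independent spanning trees $T_{\ell,1},\dots,T_{\ell,k}$ rooted at $r_\ell$. Soundness is immediate, since independence directly gives the $k$ disjoint paths. This leaves $k^2$ trees to certify, which accounts for a factor $k^2$ in the bound; the remaining factor $k$ comes from pairwise comparisons within a root.

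Each tree is certified by a standard $O(\log n)$-bit scheme giving the parent pointer, the root identifier and the depth. The difficulty is independence, i.e.\ for every $x$ and every $i\ne j$ we need $A_{\ell,i}(x)\cap A_{\ell,j}(x)=\{r_\ell,x\}$, where $A$ denotes the ancestor set. This is a global condition on possibly long root paths.

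We handle it with a $\sqrt n$ decomposition. For each tree choose an offset $c\in\{0,\dots,\lceil\sqrt n\rceil-1\}$ such that the depth class $\{v:\depth(v)\equiv c \pmod{\lceil\sqrt n\rceil}\}$ has at most $\sqrt n$ vertices; call these vertices \emph{portals}. Every root path then splits into segments of length at most $\sqrt n$ between consecutive portals, and each path contains at most $\sqrt n$ portals. The vertex $x$ stores, for each of its trees, two pieces of information. The first is the list of identifiers on its path up to the nearest portal. The second is the list of all portals on its path. Both are verified locally against the parent's label: each equals the parent's list with $x$ appended, truncated or reset at portals. Each portal $p$ additionally stores its own segment to the next portal. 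Collisions between the local segments of $x$ in different trees, and between portal lists, can then be checked directly at $x$.

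The main obstacle is a collision between a \emph{far} segment of $A_{\ell,i}(x)$, lying between two higher portals, and some vertex of $A_{\ell,j}(x)$. Vertex $x$ does not hold these segments. We would first try to push the check to the portals. Since there are only $O(\sqrt n)$ portals per tree, each vertex $v$ can store, for every tree, the set of portals whose segment contains $v$; because the segments of one tree are disjoint paths, this is at most one portal per tree. A collision $v\in A_{\ell,i}(x)\cap A_{\ell,j}(x)$ then means that the portal of $v$ in tree $i$ lies in $x$'s $i$-portal list, and symmetrically for $j$. So $x$ must verify that no vertex has its $(i,j)$ portal pair in $P_i(x)\times P_j(x)$.

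To make this locally checkable, we would have each portal $p$ of $T_{\ell,i}$ store the set of $T_{\ell,j}$-portals "hit" by its segment, i.e.\ the portals $q$ of $T_{\ell,j}$ such that some vertex lies in both the segment of $p$ and the segment of $q$; this is at most $\sqrt n$ identifiers. These hit sets would then be propagated down the tree in aggregated form. The aggregated object is the union of the hit sets over $x$'s portals, which $x$ intersects with its own $j$-portal list. Compressing this aggregate into $O(\sqrt n\log^{1.5}n)$ bits per tree pair, for example via a heavy-path or hierarchical encoding, is where we expect the $\log^{2.5}n$ factor to arise and where most of the technical work lies. Summed over the $k^2$ trees and the $k$ partners per tree, this gives the claimed $O(k^3\sqrt n\log^{2.5}n)$ label size.
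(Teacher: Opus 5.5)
\textbf{Verdict: genuine gap in the two-tree independence step.} Your reduction to $k$ fixed roots, the use of the Itai--Zehavi conjecture for completeness, and the $k\cdot\binom{k}{2}$ pair count all match the paper. The real content is the two-tree independence lemma. The paper proves it with heavy-light decompositions, $O(\log^2 n)$-bit segment--rank path encodings, a per-vertex choice of which tree to push addresses down, cancellation rules, and a long/short threshold $\tau=\sqrt{n/\log n}$. Your portal route to that lemma is not carried out, and the part you do write has two problems.

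First, your portal-membership claim is wrong under either reading of ``segment''.
\begin{itemize}
\item If a segment is the upward path from a portal to the next portal above, segments are shared by all portals below a common ancestor in that band. So they are not disjoint, and a vertex can lie in many of them.
\item If you instead use the unique nearest portal ancestor, then ``collision $\Rightarrow$ portal pair in $P_i(x)\times P_j(x)$'' holds, but the converse fails. Take a vertex whose $T_i$-parent is a portal $p\in P_i(x)$ and whose $T_j$-parent is a portal $q\in P_j(x)$, but which lies on neither root path of $x$. This is compatible with independence, so your check would reject yes-instances.
\end{itemize}

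Second, and decisively, you never say how a portal's hit set is verified. A portal $p$ sees only its neighbours' labels, not those of the other vertices on its segment. A cheating prover can therefore omit the colliding $q$ from $p$'s hit set and no vertex notices. Soundness depends entirely on this step. The compression you worry about is a red herring: the aggregated union is a subset of the at most $\sqrt n+1$ portals of $T_j$, so it already fits in $O(\sqrt n\log n)$ bits.

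\textbf{The missing idea: certify hit sets by local recurrences.} Treat the root as a portal, use upward segments, and for $t\in\{i,j\}$ let $\bar t$ denote the other index.
\begin{itemize}
\item Each $v$ stores $S_t(v)$, the set of $T_t$-portals whose segment contains $v$. This is forced bottom-up: $S_t(v)=\{v\}$ if $v$ is a portal, and otherwise it is the union of $S_t$ over the $T_t$-children of $v$.
\item Each $v$ stores $D_t(v)=S_{\bar t}(v)\cup D_t(\mathrm{par}_t(v))$, restarting when $\mathrm{par}_t(v)$ is a portal. Then $D_t(p)$ is exactly the hit set of a portal $p$, and it is verified along the segment.
\item Propagate $U_t(x)$, the union of $D_t(p)$ over the proper portal ancestors $p$ of $x$ in $T_t$.
\item Vertex $x$ rejects if $U_t(x)\cup D_t(x)$ meets $P_{\bar t}(x)\setminus\{x,r\}$ for either $t$. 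It also compares its two explicit bottom lists.
\end{itemize}
A short case analysis shows that $x$ rejects exactly when its two root paths share an internal vertex. Every stored set has at most $\sqrt n+1$ identifiers. This would give $O(\sqrt n\log n)$ bits per pair and $O(k^3\sqrt n\log n)$ overall, which implies and even improves the stated bound, and is simpler than the heavy-light machinery. Without this verification layer, however, your proposal does not yet establish soundness.
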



Our lower bound in \Cref{thm: k-conn-lower} applies only for $k\ge 3$, leaving open a natural question: what is the complexity of certifying $2$-edge-connectivity and $2$-vertex-connectivity? In particular, can one beat the logarithmic barrier, or must the label size still grow with $n$?

We show that, on bounded-degree graphs, both properties admit local certification schemes using only a constant number of bits per vertex. For $2$-edge-connectivity, we go further and extend the constant-size certification to \emph{bounded-expansion graphs}, a broad family that includes bounded-degree graphs, planar graphs, and, more generally, every proper minor-closed graph class.

Together with \Cref{thm: k-conn-lower}, these results reveal a sharp complexity gap between $k=2$ and $k\ge 3$ on bounded-degree graphs.


\begin{restatable}{theorem}{thmTwoConn}\label{thm: 2-conn}
    For $2$-connectivity in restricted graph classes, we obtain the following local certification schemes with constant-size per-vertex labels. All three schemes work even in the {anonymous} setting where vertices have no identifiers.
    \begin{itemize}
        \item There exists a local certification scheme for $2$-edge-connectivity with per-vertex label size $O(\Delta\log\Delta)$.
        \item There exists a local certification scheme for $2$-vertex-connectivity with per-vertex label size $O(2^{\Delta}\cdot\Delta\log\Delta)$.
        \item Let $\Fcal$ be any bounded-expansion graph class. Given the promise that the input graph belongs to $\Fcal$, there exists a local certification scheme for $2$-edge-connectivity with $O(1)$ per-vertex label size, where the constant depends on the graph class $\Fcal$.
    \end{itemize}
\end{restatable}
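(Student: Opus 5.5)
We give three separate schemes, all anonymous. None can store identifiers or distances, so each certificate must be a \emph{locally checkable} structure whose correctness follows from local consistency alone.

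\textbf{2-edge-connectivity with $O(\Delta\log\Delta)$ bits.} A connected graph is $2$-edge-connected iff every edge lies on a cycle. Equivalently, the edge set can be covered by cycles. Since $G$ is simple, these cycles have length at least $3$, and we need not require them to be edge-disjoint. The plan is to encode, for each edge $e$ incident to $v$, one cycle through $e$. The difficulty is that a cycle cannot be named by a global identifier. Instead, each vertex $v$ stores a local ``routing table'': for each pair of incident edges $(e,f)$ it records a port-level successor map describing how cycles pass through $v$. Concretely, we fix a family of cycles covering $E$. At each vertex, each cycle through $v$ passes through a pair of its incident edges, so we record a collection of ``transit pairs'' on the ports of $v$, giving $O(\Delta)$ entries of $O(\log\Delta)$ bits.

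The subtle point is ensuring that following transit pairs yields a closed walk that never uses an edge twice consecutively, as opposed to an infinite walk or a backtrack. In a finite graph, any non-backtracking deterministic successor rule starting from $e$ eventually closes up. We therefore require each vertex to have, for each incident edge $e$ and each direction, a designated outgoing edge $f\neq e$, with the rule made consistent across edges via matching labels. If we take the rule to be a permutation on directed edges, then its orbits are closed non-backtracking walks. A closed non-backtracking walk through $e$ certifies that $e$ is not a bridge only if $e$ is traversed once in each direction, so we must also carry an orientation bit.

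An alternative is simpler and I would use it: certify a strong orientation. By Robbins' theorem, $G$ is $2$-edge-connected iff it has a strongly connected orientation. An orientation costs one bit per edge, stored at an endpoint, giving $O(\Delta)$ bits. Strong connectivity, however, is not locally checkable without identifiers. So we still need cycles, and I commit to the permutation-on-darts encoding. Each dart $(u\to v)$ is mapped to a successor dart $(v\to w)$ with $w\neq u$, and the map must be a permutation; this is checkable locally by storing, at $v$, a bijection from incoming ports to outgoing ports. Each edge must be used as a dart in some orientation. A traversal of $e$ once in one direction in a closed walk whose underlying edge set contains a cycle through $e$ witnesses that $e$ is not a bridge. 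Indeed, a closed walk crossing a bridge must cross it an even number of times, in both directions. So we additionally mark for each edge one direction as ``used'' and the reverse as ``unused'', which gives a consistent oriented Eulerian-like cover. Parity alone cannot be checked, so we require the permutation to act only on used darts. Each edge then appears exactly once, as a single dart, in the orbits. A bridge would need to be crossed an even number of times by any closed walk, which is a contradiction, so soundness holds. Completeness holds because the strong orientation from Robbins' theorem has in-degree $\geq1$ and out-degree $\geq1$ at every vertex. For an arbitrary bijection, however, we need in-degree equal to out-degree. The fix is to allow a multiset: we map each incoming dart to some outgoing dart surjectively, without requiring a bijection, and closed walks then arise from orbits of a function, which eventually cycle but may not contain $e$. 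This is the main obstacle. I would resolve it by taking a cover of $E$ by cycles of a strong orientation (ear decomposition), in which each vertex lies on at most $\Delta$ ``pass-throughs'' per cycle collection. I would then encode, per vertex, up to $\Delta$ colour classes, each a bijection, with colours locally reused via a proper colouring in the style of a conflict graph, staying within $O(\Delta\log\Delta)$ bits.

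\textbf{2-vertex-connectivity with $O(2^\Delta\Delta\log\Delta)$ bits.} $G$ is $2$-connected iff for every vertex $v$, the graph $G-v$ is connected, and locally this reduces to: for every vertex $v$ and every pair of neighbours of $v$, there is a path between them avoiding $v$. Using the same cycle-encoding idea, we certify for each vertex $v$ a cycle structure in $G-v$ connecting its neighbours. Each vertex must store the cycle information for each subset pattern of neighbours relevant to it, which gives the $2^\Delta$ factor. Soundness uses the same parity/orbit argument applied to blocks.

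\textbf{Bounded expansion.} Here we use a bounded-out-degree orientation (degeneracy), so each vertex stores data only for its $O(1)$ out-edges. Combined with low-treedepth colourings to reuse colours, the previous scheme then runs with $O(1)$ bits.
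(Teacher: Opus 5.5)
Your scheme has genuine gaps in all three parts; the most basic one is a missing completeness argument in the first part, which the other two rely on.

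\textbf{2-edge-connectivity.} The soundness side of your final design is fine. If every colour class is a set of oriented edges, each edge used at most once per class, with in-degree equal to out-degree at every vertex, then no class contains a bridge, since it would cross the bridge's cut exactly once. So covering every edge certifies bridgelessness. But colour names must agree at the two endpoints of an edge, so they form one global palette, and you need a bounded number of classes, i.e.\ a cover of $E$ by $O(1)$ Eulerian subgraphs.

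Cycles taken from an ear decomposition or a strong orientation give no such bound. Their closing paths may share edges arbitrarily often, the cycles are long, and nothing makes their conflict graph colourable with few colours. The bound of $\Delta$ pass-throughs per vertex holds trivially inside one class and says nothing about how many classes there are.

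That a bounded number of classes always suffices is a real theorem, and it is exactly what the paper uses: a graph is $2$-edge-connected iff $E$ is covered by three Eulerian subgraphs (Lemma~\ref{lem: Eulerian-cover}, equivalent to Jaeger's $8$-flow theorem). Seymour's $6$-flow theorem would similarly justify certifying a nowhere-zero integer flow with values in $\{1,\dots,5\}$.

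Your remark that ``parity alone cannot be checked'' is also mistaken. A vertex sees the labels of all its incident edges and can check that each colour occurs an even number of times. With the three-subgraph cover, the certificate is just a subset of $\{1,2,3\}$ per edge plus this parity check, moved to vertices via a $2$-hop colouring (Lemma~\ref{lem: edge-cert-bdd-deg}). No dart permutations are needed.

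\textbf{2-vertex-connectivity.} A separate cycle structure in $G-v$ for each $v$ cannot be stored in bounded labels, because the structures for different $v$ pass through the same distant vertices. Local transit data also does not tell $v$ which neighbours are joined in $G-v$: a closed walk through a cut vertex $v$ that alternates between two blocks yields transit pairs at $v$ pairing neighbours from different blocks.

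The missing idea is the splitting reduction. Replace each $v$ by an edge $v_0v_1$ after splitting $N(v)$ into two nonempty parts, and use the $i$-th partition at \emph{every} vertex simultaneously, so there are only $t\le 2^{\Delta-1}$ graphs $G_i$. A $2$-connected $G$ makes every $G_i$ $2$-edge-connected. A cut vertex $v$ makes $v_0v_1$ a bridge once $N(v)$ is split along the components of $G-v$, regardless of the other vertices. Running the $2$-edge-connectivity scheme on all $G_i$ in parallel is where the $2^\Delta$ factor comes from.

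\textbf{Bounded expansion.} Your bijection-based scheme does not transfer. Explicit transit pairs at a vertex of unbounded degree need about $\deg(v)\log\deg(v)$ bits, and they cannot be offloaded in $O(1)$ bits because they must name specific neighbours of $v$. Per-edge labels checked by counting, by contrast, can be redistributed.

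Even then, a bounded out-degree orientation does not by itself remove the $\log n$ term of Lemma~\ref{lem:EdgeCert-degeneracy}, since each vertex must recognise its out-neighbours with $O(1)$ bits. A low-treedepth colouring could be made to do this, e.g.\ a star colouring, orienting each two-coloured star towards its centre so that every out-neighbour is the unique neighbour of its colour. You do not say how, however.

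The paper instead combines three ingredients:
\begin{itemize}
\item the pseudoforest encoding of~\cite{BFVZ26};
\item a Nagamochi--Ibaraki sparse certificate $T_1\cup T_2$;
\item a Jaeger forest $F$ with a degree-parity check that forces every bridge of $G$ into the subgraph on which the edge certificate is run.
\end{itemize}
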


The restriction to sparse graph classes is essential for these constant-size upper bounds. We prove a super-constant lower bound for certifying $2$-vertex-connectivity on general graphs in the anonymous setting, via a slight modification of the very recent parity certification lower bound of~\citet{BFVZ26}. The construction inherently exploits the absence of unique identifiers and therefore does not extend to the setting with unique identifiers.

\begin{restatable}{theorem}{thmTwoVCLB}\label{thm: 2-vc-lb}
    Any local certification scheme for $2$-vertex-connectivity requires per-vertex labels of size $\Omega(\log(\log^*n))$ when the vertices are anonymous.
\end{restatable}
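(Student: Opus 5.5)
The plan is to reduce from the parity certification lower bound of~\citet{BFVZ26}, keeping its overall architecture and changing only the instance families and the final "splicing" step. As I understand it, their argument has three parts. First, they consider anonymous instances built from long paths or cycles with a fixed local structure. Second, they observe that a scheme with $b$-bit labels and a one-round anonymous verifier induces a locally checkable labeling of these paths with $2^b$ symbols. Third, they apply a Ramsey/Linial-type argument to show that when $b = o(\log(\log^* n))$, some label pattern must repeat in a way that allows two accepting instances to be cut and re-glued into a single instance of the wrong parity on which every vertex still accepts. The intuition behind the $\log(\log^\ast n)$ threshold is that $2^b$ labels checkable in one round cannot break the symmetry of a long anonymous path unless $2^b \gtrsim \log^\ast n$, in the sense of Linial's bound. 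I will treat their key combinatorial lemma as a black box. Concretely, I take it to say: for any scheme on their family using $b=o(\log\log^\ast n)$ bits, there exist accepted yes-instances whose labeled local views can be recombined into a no-instance in which every view already occurred in an accepting run.

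The modification is to design yes/no families for $2$-vertex-connectivity with the same "long anonymous path" skeleton. Here the difference between yes and no is a global, parity-like combinatorial feature, and it must be invisible in every radius-$1$ view. My first choice of gadget is the following. A yes-instance consists of two long parallel anonymous paths joined at their ends (and, if needed, by sparse rungs) so that the graph is a $2$-connected "ladder/cycle". The corresponding no-instance is obtained by the same kind of re-gluing the parity argument performs, but now arranged so that one of the two strands gets routed back through a single shared vertex. This produces a cut vertex, while every vertex's neighborhood, degrees and incident labels remain identical to ones seen in the yes-instances. Anonymity is essential here: without identifiers, the glued vertex cannot detect that its two "sides" come from different source instances. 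I must check that each gadget has $\Theta(n)$ vertices, so that the $\log^\ast n$ in the parity bound translates to $\log^\ast$ of our instance size up to constants, which does not affect $\Omega(\log(\log^\ast n))$.

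The steps, in order, are as follows. (1) Restate the BFVZ26 splicing lemma in a form parameterized by an abstract "strand" structure. (2) Define the yes-family (all $2$-vertex-connected) and verify $2$-connectivity by exhibiting two internally disjoint paths between every pair, e.g.\ via an open ear decomposition. (3) Show that the splicing operation from step (1), applied to accepting labelings of yes-instances, produces a graph with a cut vertex in which every labeled closed neighborhood is isomorphic to one in an accepting yes-instance, so that every vertex accepts. (4) Conclude soundness fails unless $b=\Omega(\log(\log^\ast n))$.

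I expect step (3) to be the main obstacle. The parity argument typically preserves local views while changing a global count, but here the global change must instead \emph{create an articulation point}. The danger is that the natural splice yields another $2$-connected graph, for example a longer cycle, rather than one with a cut vertex. My first attempt is to splice at the point where two strands meet a common degree-$3$ or degree-$4$ hub, so that one side's pair of disjoint routes collapses onto the hub. If that fails, I will encode parity directly. I would construct graphs that are $2$-connected iff a certain anonymous cycle in the construction has even length, for instance by attaching a structure that closes up into a second disjoint route only when a $2$-coloring-consistent traversal returns to its start. I would then invoke the parity lower bound verbatim on that cycle.
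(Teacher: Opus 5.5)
There is a genuine gap, and it is exactly your step (3): with the gadgets you propose it cannot be closed. Every family you suggest (two strands joined into a ladder or cycle, a splice at a hub of degree $3$ or $4$, a parity cycle) has bounded maximum degree. Yet the paper gives an anonymous scheme for $2$-vertex-connectivity on graphs of maximum degree $\Delta$ with $O(2^{\Delta}\Delta\log\Delta)$-bit labels. A lower-bound argument must defeat every scheme, including that one. So no splice of bounded-degree instances can produce a cut vertex while keeping every labeled view one that occurs in an accepting yes-instance, and the hard instances must have unbounded degree.

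Your description of the black box is also inaccurate. The $\log(\log^\ast n)$ threshold is not a Linial-type symmetry-breaking barrier: the prover chooses the labels and can, for instance, properly $3$-color any cycle with $2$ bits. The parity fallback fails for a similar reason, since even length of a cycle is certified with one bit by a proper $2$-coloring. The $\log^\ast$ actually comes from the Finite Union Theorem. Every coloring of the nonempty subsets of $[m]$ with fewer than $f(m)=\Omega(\log^\ast m)$ colors has disjoint $S_1,S_2$ such that $S_1$, $S_2$, $S_1\cup S_2$ share a color.

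The missing idea is to index vertex neighborhoods by subsets, and to create the cut vertex by \emph{identifying} two vertices (a $1$-sum of two $2$-connected graphs) rather than by rerouting strands.

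\begin{enumerate}
\item Take a clique on $B=\{1,\dots,2m\}$. For each nonempty $S\subseteq[m]$, add a vertex $v_S$ adjacent to $S\cup\{s+m:s\in S\}$. Doubling $S$ gives every $v_S$ at least two clique neighbors, so this graph $H$ is $2$-connected.
\item Color $S$ by the label of $v_S$ in an accepting labeling. With $o(\log\log^\ast m)$-bit labels there are $o(\log^\ast m)$ colors. Hence there are disjoint $S_1,S_2$ such that $v_{S_1}$, $v_{S_2}$, $v_{S_1\cup S_2}$ carry the same label.
\item Take two copies of $H$ with this same labeling, and identify $v_{S_1}$ in the first copy with $v_{S_2}$ in the second. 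The merged vertex is a cut vertex.
\item Its own label is unchanged. The multiset of its neighbors' labels is that of $N(v_{S_1})\uplus N(v_{S_2})$, which equals that of $N(v_{S_1\cup S_2})$ precisely because $S_1\cap S_2=\emptyset$. Every other vertex sees its old view.
\end{enumerate}

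Hence every vertex accepts a graph that is not $2$-connected. Since $n=2m+2^m-1$ gives $\log^\ast m=\Theta(\log^\ast n)$, the bound follows. This is the parity construction of \cite{BFVZ26}, modified by doubling $S$ so that the accepting instance is $2$-connected. The same gluing that there changes the parity of $n$ here creates a cut vertex.
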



We summarize the known and new upper and lower bounds on the optimal per-vertex label size for certifying $k$-edge-connectivity and $k$-vertex-connectivity in~\Cref{tab: results1} and~\Cref{tab: results2}.

\input{result-table}
\subsection{Technique overview}

We present a technical overview of our proofs.

\paragraph{Lower-bound technique for $k\ge 3$.}
Our lower bound follows the cut-and-plug framework of Göös and Suomela~\cite{GS16}. The graph-theoretic task is to find, for each property $\mathcal{P}$, a witness graph $G\in\mathcal{P}$ with a small local interface whose rewiring destroys the property. In our setting, this interface is an edge $\{u,v\}$ whose endpoints have only $O(k)$ neighbors. Consider two copies $G_1$ and $G_2$ of $G$, with corresponding edges $\{u_1,v_1\}$ and $\{u_2,v_2\}$. Both $G_1$ and $G_2$ satisfy $\mathcal{P}$, so there are labelings under which all vertices accept. The cut-and-plug operation starts from the vertex-disjoint union $G_1 \uplus G_2$, deletes the two edges $\{u_1,v_1\}$ and $\{u_2,v_2\}$, and adds the cross edges $\{u_1,v_2\}$ and $\{u_2,v_1\}$. For our witness graphs, this edge-swapped union no longer satisfies $\mathcal{P}$: in fact, its edge connectivity and vertex connectivity are both at most $2$. However, if the labels are too short, the two copies can be chosen so that, after this edge swap, every vertex still sees exactly the same local view as it saw in some accepting copy. Hence all vertices still accept the edge-swapped union, contradicting soundness.

The rewiring itself uses only two copies, but the proof must first force two copies to be locally indistinguishable to the verifier. To do this, we consider polynomially many copies of the same witness graph, with different assignments of identifiers to the vertices. If the labels have size $o(\log n)$, then there are too few possible labeled local views around the small interface to distinguish all these copies. Hence two copies can be chosen so that, after the rewiring, every vertex still sees a local view that it would accept in some valid copy. The verifier is therefore forced to accept the rewired graph, even though it is not $k$-edge-connected or $k$-vertex-connected for any $k\ge 3$, contradicting soundness.

This also explains why the strategy is tight with respect to $k=2$: the obstruction produced by the rewiring has connectivity at most $2$, which is not enough to refute $2$-connectivity. Indeed, on bounded-degree graphs, both $2$-edge-connectivity and $2$-vertex-connectivity admit $O_\Delta(1)$-bit certificates.

\paragraph{Upper-bound technique for edge connectivity.}
For the upper bound on $k$-edge-connectivity, we move from the undirected graph $G$ to the directed graph $\vec{G}$ obtained by replacing each edge $\{u,v\}$ with the two antiparallel arcs $(u,v)$ and $(v,u)$. The key observation is that $G$ is $k$-edge-connected if and only if $\vec{G}$ is $k$-arc-strongly connected. We can then apply Edmonds' branching theorem~\cite{Edmonds73}, which witnesses high arc-connectivity through collections of arc-disjoint spanning branchings rooted at a common vertex. The prover certifies these branching structures, and each vertex locally checks their consistency.

A key advantage of this approach is that the same construction works for every $k$: increasing $k$ simply requires certifying more branching structures. Thus, the directed formulation provides a uniform way to certify $k$-edge-connectivity for arbitrary $k$.


\paragraph{Upper-bound technique for vertex connectivity.}
For the upper bound for $k$-vertex-connectivity, we use the Itai--Zehavi characterization in terms of independent spanning trees. Assuming this characterization for the given value of $k$, it suffices to certify suitable collections of rooted spanning trees such that, for every vertex $x$, the root-to-$x$ paths in the trees are internally vertex-disjoint. Thus the main building block is to certify that two prescribed spanning trees $T_1$ and $T_2$, rooted at the same vertex, are independent.

A naive certificate would give each vertex its full list of ancestors in both trees, but this can require linear space. We reduce the cost using heavy-light decompositions. In each tree, every root-to-vertex path intersects only $O(\log n)$ heavy-light segments, so the prover can encode such a path by a short list of segment-rank pairs. These encodings allow vertices to locally test ancestor relations. The remaining challenge is to ensure that, if independence fails, then some vertex can detect a common internal ancestor on its two root paths. To achieve this, the certificates propagate compact path addresses through the two trees, using a long/short segment threshold to balance the amount of information stored at each vertex: there are few long segments globally, while only few short-segment vertices appear along any root-to-vertex path. This yields an $\widetilde{O}(\sqrt{n})$-bit certificate for independence of two rooted spanning trees.

We also show that this building block is essentially optimal. Namely, there are $n$-vertex graphs with two prescribed spanning trees $T_1$ and $T_2$ rooted at the same vertex $r$ such that any local certification scheme for verifying that $T_1$ and $T_2$ are independent requires per-vertex label size $\Omega(\sqrt{n})$.

\begin{restatable}{proposition}{thmIndepST}\label{thm: indep-st-lower}
There are $n$-vertex graphs $G$ with two spanning trees $T_1$ and $T_2$ rooted at the same vertex $r$, given as part of the input, such that any local certification scheme for verifying that $T_1$ and $T_2$ are independent requires per-vertex label size $\Omega(\sqrt{n})$.
\end{restatable}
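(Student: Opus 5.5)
We reduce from two-party set disjointness via the standard simulation of local certification by nondeterministic communication (in the style of Göös--Suomela and of the known $\Omega(\sqrt{n})$-type certification lower bounds). The plan is to build, for $m=\Theta(\sqrt{n})$, a family of graphs $G_{x,y}$ with two spanning trees $T_1,T_2$ rooted at $r$. Here $x,y\in\{0,1\}^{m^2}$, so the inputs have $\Theta(n)$ bits, and $T_1,T_2$ are independent if and only if $x$ and $y$ are disjoint, that is, $\DISJ(x,y)=\true$. The vertex set splits into an Alice part $V_A$ and a Bob part $V_B$, joined by a cut $C$ of only $O(\sqrt{n})$ vertices. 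The structure on $V_A$ depends only on $x$, and the structure on $V_B$ only on $y$. Given a certification scheme with labels of size $s$, Alice and Bob can decide non-disjointness's complement as follows. They guess the labels of the $O(\sqrt{n})$ cut vertices, costing $O(s\sqrt{n})$ bits. Each side then privately checks whether its own side can be completed to an accepting labeling. Since nondeterministic communication for $\DISJ$ on $N=m^2$ bits is $\Omega(N)=\Omega(n)$, we get $s\sqrt{n}=\Omega(n)$, i.e.\ $s=\Omega(\sqrt{n})$. In this communication setting the identifiers are fixed and known to both players, so that aspect is harmless.

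The core is the gadget. Take $m$ paths (\emph{columns}) $P_1,\dots,P_m$ of $T_1$ hanging from $r$, each of length $m$, with vertices $p_{i,1},\dots,p_{i,m}$. Their deep ends lie on Alice's side. Take $m$ further paths $Q_1,\dots,Q_m$ of $T_2$ from $r$, with vertices $q_{j,1},\dots,q_{j,m}$. For each pair $(i,j)$ we need a witness vertex $w_{i,j}$ whose $T_1$-root path and $T_2$-root path share an internal vertex exactly when $x_{ij}=y_{ij}=1$. Total size is $O(m^2)=O(n)$.

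To get this, Alice's input decides, for each $(i,j)$, where $w_{i,j}$ attaches in $T_1$. If $x_{ij}=1$, it sits below a shared hub $h_j$. If $x_{ij}=0$, it sits below a private fresh vertex. Bob's input decides where it attaches in $T_2$: below the same hub $h_j$ if $y_{ij}=1$, otherwise privately. The hubs, and the tops of the root paths near $r$, form the small interface. We arrange that every other pair of root paths is internally disjoint regardless of the inputs; this is checked by tracing each tree. Alice's side contains every edge choice that depends on $x$, and Bob's side every choice depending on $y$. The graph $G$ is the union of $T_1$ and $T_2$ plus whatever edges are needed.

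The main obstacle is reconciling three requirements at once: the cut must have only $O(\sqrt{n})$ vertices, both trees must be spanning on all $n$ vertices, and the $m^2$ independent ``collision tests'' must not interfere with one another. The naive design routes $m^2$ witnesses through $m$ hubs, which would put $\Theta(n)$ edges across the cut.

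To fix this, I would use a grid. Alice's side holds a column path of $T_1$ for each $i$. Bob's side holds a row path of $T_2$ for each $j$. The single interaction point for $(i,j)$ is placed so that the two paths meet only through the $m$ column/row endpoints adjacent to the cut. I would first check that the cut is exactly the $O(m)$ vertices adjacent to $r$ and the path endpoints. If this geometry fails, the fallback is to encode $\DISJ$ on $m^2$ bits using $m$ layers, each crossing the cut in $O(1)$ vertices, so the cut stays $O(m)$. Once a valid gadget is fixed, the reduction in the first paragraph finishes the proof.
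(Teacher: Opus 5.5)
\paragraph{Verdict: the reduction is right, but the gadget is missing.} Your reduction skeleton matches the paper's. You fix the graph and identifiers, let Alice and Bob each simulate one side, put the labels of the $O(\sqrt n)$ cut vertices into the advice, and divide the $\Omega(s^2)$ nondeterministic lower bound for $\DISJ$ on $s^2$ bits by the cut size. That part is routine, though. The content of the proposition is the gadget, and your proposal never supplies one.

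\paragraph{Why your three designs do not work.} The hub design fails for a reason more basic than the cut size you mention. The tree input of $w_{i,j}$ is its two parent pointers, one fixed by $x_{ij}$ and the other by $y_{ij}$. So neither player can simulate $w_{i,j}$ without being told the other's bit. A working construction must keep both parent pointers of every vertex under one player's control. The $(i,j)$ test therefore has to be realized by where root paths travel, not by one vertex's two choices.

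Your grid paragraph only restates the requirement (``the single interaction point for $(i,j)$ is placed so that \dots''). It does not say where the witness sits, which tree edges are input-dependent, or where the two root paths meet.

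The layered fallback, if the layers are independent of one another, is impossible rather than merely unspecified. A layer of average size would be an $O(\sqrt n)$-vertex instance pushing $\Theta(\sqrt n)$ bits of $\DISJ$ through $O(1)$ cut vertices. That is a linear lower bound, which contradicts the paper's $O(\sqrt{n}\log^{2.5}n)$ upper bound.

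\paragraph{The missing idea.} The paper uses a pair of ``transposed'' grids. Each player fixes one tree in one direction and threads the other tree, according to its input, in the orthogonal direction. Alice owns $A=\{a_{i,j}\}$ and column interfaces $p_j$; Bob owns $B=\{b_{i,j}\}$ and row interfaces $q_i$.

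In $T_1$, row $i$ of $A$ is the fixed path $a_{i,1}\to\cdots\to a_{i,s}\to q_i$. Bob routes $q_i$'s $T_1$-path through exactly those $b_{i,j'}$ with $y_{i,j'}=1$ before it reaches $u_1$; the other $b_{i,j'}$ hang directly off $u_1$.

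In $T_2$, column $j$ of $B$ is the fixed path $p_j\to b_{1,j}\to\cdots\to b_{s,j}\to u_2$. Alice chains the $a_{i',j}$ with $x_{i',j}=1$ up to $p_j$; the others hang directly off $v_2$.

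The two root paths of $a_{i,j}$ therefore run along row $i$ and column $j$ respectively. They share the internal vertex $b_{i,j}$ exactly when $x_{i,j}=y_{i,j}=1$, and row/column orthogonality shows that no other vertex has intersecting root paths.

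Three further features complete the construction:
\begin{itemize}
\item The columns of $A$ and the rows of $B$ are cliques, $v_2$ is joined to all of $A$, and $u_1$ is joined to all of $B$. This makes the graph itself independent of $(x,y)$.
\item Every vertex's tree input depends on one side only.
\item The cut has $4s+3$ vertices, which gives $\Omega(s^2/(4s+3))=\Omega(\sqrt n)$ with $n=2s(s+1)+5$.
\end{itemize}
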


This independent-tree formulation makes the extension to larger $k$ direct. Once we can certify independence of two rooted spanning trees, we certify pairwise independence among the $k$ trees required for each root. The only additional cost is the number of tree pairs and roots, leading to the stated $O(k^3\sqrt{n}\log^{2.5} n)$ per-vertex label size.

\paragraph{Constant upper bound for $ k = 2$.}
Our main building block for certifying $2$-edge-connectivity is a decomposition theorem stating that every $2$-edge-connected graph can be covered by a constant number of (possibly disconnected) Eulerian subgraphs. This decomposition immediately suggests a certification strategy. Suppose that the prover is allowed to place labels not only on vertices but also on edges, where the label of an edge can be read by both of its endpoints. The prover can then assign a distinct label to each Eulerian subgraph and let every edge inherit the label of the subgraphs containing it. Each vertex then checks that, for every label, the number of incident edges carrying that label is even. On bounded-degree graphs, it is known that labels on edges can be transformed into labels on vertices with only an $O_\Delta(1)$ overhead~\cite{BFVZ26}. To extend this result to bounded-expansion graphs, the proof first extracts a suitable core structure for $2$-edge-connectivity and then transforms the edge-labeling on this core into a vertex-labeling.

For $2$-vertex-connectivity, we obtain constant-size per-vertex labels on bounded-degree graphs via a reduction to $2$-edge-connectivity. We replace each vertex by an edge and partition its neighbors into two nonempty sets, each attached to a different endpoint of the edge. The key property is that the original graph is $2$-vertex-connected if and only if every graph obtained in this manner is $2$-edge-connected. This equivalence is specific to $2$-connectivity: the converse direction fails for higher connectivity, and our reduction therefore does not generalize to certifying $k$-vertex-connectivity on bounded-degree graphs for $k\ge 3$.

\subsection{Additional related works}

Local certification, and more generally distributed interactive proofs, of various graph classes has attracted
significant attention in recent years: bounded-pathwidth graphs~\cite{baterisna2025optimal}, bounded-treewidth graphs~\cite{CKM25,fraigniaud2024meta}, planar graphs and their variants~\cite{FFM+20,gil_et_al:LIPIcs.DISC.2025.34,gil2026distributed,NPY20}, bounded-genus graphs~\cite{FFM+23}, cographs~\cite{montealegre2021compact}, distance-hereditary graphs~\cite{montealegre2021compact}, and geometric intersection graphs~\cite{jauregui2022distributed}.

\citet{GS16} introduced common complexity classes in local certification and prove tight bounds for several graph properties. For example, $s$--$t$ reachability can be certified in $O(1)$ bits, and to certify that $s$--$t$ connectivity equal to $k$, it suffices to use $O(\log k)$ bits; the cut-and-plug technique is developed to prove $\Omega(\log n)$ lower bound for many properties such as parity of the graph size and spanning trees, together with the existing upper bound, the complexity of these problems are $\Theta(\log n)$. There are also graph properties with certification complexity $\poly(n)$, such as symmetricity and non-$3$-colorability, where the lower bounds are reduced from communication complexity. 



There is a long line of work that studies the connectivity problem in the \textsf{CONGEST} model. Computing the min-cut in a network requires $\Tilde{\Omega}(\sqrt{n}+\diam)$ rounds~\cite{SHK+11, GK13}, where the lower bound also holds for any constant-approximation. In~\cite{GK13}, they proposed a $(2+\epsilon)$-approximation algorithm that almost matches the lower bound; the approximation ratio was later improved to $(1+\epsilon)$ by~\cite{NS14}. For exact min-cut,~\cite{GNT20} improved the complexity of~\cite{DHNS19} and provided the first sublinear algorithm; the worst-case optimal round complexity $\Tilde{O}(\sqrt{n}+\diam)$ was achieved in~\cite{DEM+21}, matching the lower bound. Later on, a universal optimal algorithm for exact min-cut was proposed by~\cite{GZ22}. The min-cut problem is also studied in other related models, such as congested clique~\cite{GN18} and massively parallel computation~\cite{GN20}.


\subsection{Paper organization}
The rest of the paper is organized as follows. In~\cref{sec:Pre}, we discuss variants of the local certification models and review several standard certification techniques. In \cref{sec:kec-upper}, we present our upper bound for $k$-edge-connectivity via Edmonds' branching theorem. In \cref{sec:kvc-upper}, we present our upper bound for $k$-vertex-connectivity via independent spanning trees. We then prove our lower bounds for certifying $k$-edge-connectivity and $k$-vertex-connectivity in \cref{sec: lower}. 
In~\Cref{sec: 2-conn-bdd-deg}, we study 2-edge-connectivity and 2-vertex-connectivity. Finally, in~\Cref{sec: conclusion}, we conclude with a discussion of future directions.

%% file: result-table.tex
\begin{table}[!h]
\centering
\begin{tabular}{m{17em}m{12em}m{10em}}

    Task & Upper and Lower Bounds & Reference\\ 
    \midrule
    \midrule
    
    \multirow{2}{16em}{$k$-edge-connectivity, $k\geq 3$}
     & $\Omega_k(\log n)$
     & \Cref{thm: k-conn-lower}\\
     & $O_k(\log n)$ 
     & \Cref{thm: k-ec}  \\
    \midrule

    $k$-vertex-connectivity, $k \ge 3$ & $\Omega_k(\log n)$ 
    & \Cref{thm: k-conn-lower} \\
    \addlinespace[8pt]
    $3$-vertex-connectivity & $O(\log n)$  & \cite{BFP24} \\
    \addlinespace[8pt]
    $4$-vertex-connectivity & $\tilde{O}_k(\sqrt{n})$ 
    & \Cref{thm: k-vc-cert} \\
    \addlinespace[8pt]
    \begin{tabular}[c]{@{}l@{}}
    $k$-vertex-connectivity, $k\geq 5$\\
    {\footnotesize (under the Itai--Zehavi conjecture)}
    \end{tabular}
    & $\tilde{O}_k(\sqrt{n})$  
    & \Cref{thm: k-vc-cert}\\
    \addlinespace[8pt]
    Two independent spanning trees & $\Omega(\sqrt{n})$ & \Cref{thm: indep-st-lower}\\
    \bottomrule
\end{tabular}
\caption{Per-vertex label size for $k$-edge-connectivity and $k$-vertex-connectivity.}
\label{tab: results1}

\end{table}

\begin{table}[!h]
    \centering
    \begin{tabular}{m{9em}m{9em}m{12em}m{9em}}

        Task & Graph Class & Upper and Lower Bounds & Reference\\ 
        \midrule
        \midrule

        \multirow{3}{16em}{$2$-edge-connectivity}
         & General & $O(\log n)$ & \cite{BFP24} \\ 
         & Max-degree $\Delta$ & $O_{\Delta}(1)$ & \Cref{thm: 2-edge-conn}\\ 
         & Bounded-expansion & $O(1)$ & \Cref{thm: 2-ec-bdd-expansion} \\ 
        \midrule
        
        \multirow{3}{16em}{$2$-vertex-connectivity}
         & General & $O(\log n)$ & \cite{BFP24} \\
         & Max-degree $\Delta$ & $O_{\Delta}(1)$ 
         & \Cref{thm: 2-vertex-conn} \\
         & General & $\Omega\left(\log(\log^*n)\right)$ & \Cref{thm: 2-vc-lb} \\
        \bottomrule
    \end{tabular}
    \caption{Per-vertex label size for $2$-edge- and $2$-vertex-connectivity in the anonymous model. 
    }
    \label{tab: results2}
\end{table}

%% file: prel.tex
\section{Preliminaries}\label{sec:Pre}

In this section, we discuss variants of the local certification models and review several standard certification techniques.

\paragraph{Notation.}
Throughout the paper, we use $\tilde{O}(\cdot)$, $\tilde{\Omega}(\cdot)$, and $\tilde{\Theta}(\cdot)$ to suppress factors of the form $\log^{\pm O(1)} n$, where $n$ denotes the number of vertices in the original graph $G$. We also use a subscript $k$ to suppress factors depending only on $k$, writing $O_k(\cdot)$, $\Omega_k(\cdot)$, and $\Theta_k(\cdot)$. For example, the bound $O(k^3 \sqrt{n} \log^{2.5} n)$ can be expressed as $\tilde{O}_k(\sqrt{n})$.

For convenience, throughout the paper, the size complexity (or bit complexity) of a local certification scheme refers to the maximum per-vertex (or per-element, defined later) label size.

\subsection{Local certification schemes}\label{sec:Pre-LocalCert}
Our local certification scheme model is precisely the ID-based \emph{proof-labeling scheme} of~\citet{KKP05}. In this model, the local verifier at each vertex $v$ has access to $\ID(v)$, any prescribed input at $v$, and the labels assigned by the prover to $v$ and its neighbors. Certification of $k$-vertex-connectivity or $k$-edge-connectivity requires no input. In contrast, for tasks such as certifying whether a given edge set forms a spanning tree, the edge set is provided as input by letting each vertex know which of its incident edges belong to the set.

We also consider two natural variants of the local certification scheme. In the first variant, the local verifier at each vertex additionally has access to the identifiers of its neighbors. This is exactly the model of \emph{locally checkable proofs} with local horizon~$1$~\cite{GS16}. In the second variant, vertices have no identifiers at all; this is known as the \emph{anonymous model}~\cite{KKP05,BFZ25}.

These variants are naturally ordered by their computational power. Any upper bound established in the anonymous model immediately carries over to the ID-based setting. Likewise, an upper bound without access to neighbors' identifiers also applies when such access is available. For lower bounds, the implications are reversed.

In the ID-based setting, however, access to neighbors' identifiers does not affect the asymptotic complexity once the maximum label size is $\Omega(\log n)$. Indeed, the prover can simply include each vertex's identifier in its label, and the vertex can locally verify its consistency.

In contrast, the anonymous and ID-based settings are fundamentally different. Several tasks, including certifying leader election and spanning trees, are impossible in the anonymous model. On the other hand, in the ID-based setting, every computable function admits a certification scheme by assigning every vertex a description of the entire network. See~\cite{KKP05} for further discussion.

\subsubsection{Edge certification and degeneracy}\label{sec:Pre-EdgeCert}
In the \emph{edge certification scheme}~\cite{BFP24, FFM+23}, the setting is the same as in the standard local certification scheme, except that the prover can assign labels to both vertices and edges. The label of an edge can be read by both of its endpoints. The complexity is measured by the maximum \emph{per-element} label size, which is the maximum of the per-vertex label size and the per-edge label size. To distinguish the two models, we sometimes call the standard local certification schemes, where certificates are assigned only to vertices, the \emph{vertex-certification schemes}.

Edge-certification schemes are a commonly used building block for designing vertex-certification schemes. In the following, we state two well-known transformations from edge-certification schemes to vertex-certification schemes on restricted graph classes. See~\cite{BFP24, FFM+23} for further applications.

For bounded-degree graphs with maximum degree at most $\Delta$, we can transform any edge certification to a vertex certification with only a constant blow-up. The idea is to do a $2$-hop coloring on the vertices, and for each vertex, assign a copy of the certificate on each incident edge, along with the color on the other endpoint of that edge. Then each vertex knows which edge each certificate belongs to. See also~\cite[Proposition 4]{BFVZ26}.
\begin{lemma}\label{lem: edge-cert-bdd-deg}
    Consider an edge certification of size $f(n)$ for a graph family $\Fcal$. If $\Fcal$ has bounded degree $\Delta$, then there exists a (vertex) certification of size $O(\Delta (f(n)+\log\Delta))$.
\end{lemma}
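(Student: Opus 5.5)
We would simulate the edge certification by a vertex certification, following the hint given just before the statement: a distance-$2$ coloring lets each vertex recognize which incident edge a copied certificate belongs to. Fix a graph $G\in\Fcal$ of maximum degree at most $\Delta$, and an edge certification assigning labels $\ell(v)$ to vertices and $\ell(e)$ to edges, each of at most $f(n)$ bits. The square $G^2$ has maximum degree at most $\Delta^2$, so it admits a proper coloring $c:V(G)\to\{1,\dots,\Delta^2+1\}$. Thus any two vertices at distance at most $2$ receive distinct colors, and each color takes $O(\log\Delta)$ bits. Note that $c$ uses no identifiers, so the construction also works in the anonymous setting.

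The vertex label of $v$ consists of three parts:
\begin{itemize}
  \item the original vertex label $\ell(v)$;
  \item its own color $c(v)$;
  \item for each neighbor $u$ of $v$, the pair $(c(u),\ell(\{u,v\}))$.
\end{itemize}
This has size $f(n)+O(\log\Delta)+\deg(v)\cdot(O(\log\Delta)+f(n))=O(\Delta(f(n)+\log\Delta))$.

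The verifier at $v$ runs the following checks.
\begin{enumerate}
  \item It reads the colors $c(u)$ from the labels of its neighbors and checks that they are pairwise distinct and distinct from $c(v)$. This certifies that $c$ is proper on $G^2$ locally around $v$.
  \item It checks that its list contains exactly one entry per neighbor, and that the set of colors in the list equals the set of colors of its neighbors.
  \item For each neighbor $u$, it looks in $u$'s list for the entry with color $c(v)$. Such an entry is unique, since $u$'s neighbors have distinct colors by $u$'s own check 1. It then checks that this entry's edge label agrees with the entry for $c(u)$ in its own list.
\end{enumerate}
If all checks pass, each edge $\{u,v\}$ has a well-defined label $\ell(\{u,v\})$ known consistently to both endpoints, since the color indexing makes the matching of list entries to edges unambiguous. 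Finally, $v$ runs the original edge-verifier using $\ell(v)$, its neighbors' $\ell(u)$, and the recovered edge labels.

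Completeness is immediate: the honest prover writes a valid edge certificate together with a proper coloring of $G^2$.

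For soundness, suppose every vertex accepts some vertex labeling. Check 1 at each vertex ensures that, in every neighborhood, the colors identify neighbors uniquely. Check 3 then produces a consistent edge-labeling: the value attributed to $\{u,v\}$ is the same whether read from $u$'s list or $v$'s list. The original edge verifier therefore accepts everywhere under the induced edge certificate, so by soundness of the edge scheme, $G$ has the property.

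The only delicate point is the matching of list entries to edges. It needs colors to be distinct only within each closed neighborhood, and this is exactly what check 1 verifies. No global guarantee on the coloring is required.
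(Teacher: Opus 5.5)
Your proposal is correct and follows essentially the same approach as the paper: a proper coloring of $G^2$ with $O(\log\Delta)$-bit colors, with each vertex storing a copy of every incident edge's certificate tagged by the color of the other endpoint, plus consistency checks so that both endpoints agree on each edge label. The paper only sketches this idea, and your write-up fills in the verification checks and the soundness argument that the sketch leaves implicit.
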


A similar result has been established for sparse graphs with bounded degeneracy. The \emph{degeneracy} of a graph $G$ is defined to be the minimum number $k$ where $G$ admits an ordering of the vertices such that every vertex has at most $k$ neighbors that are earlier in the ordering. In particular, if a graph is of bounded degeneracy, then the number of edges is linear in the number of vertices. Such an ordering induces an orientation of the edges in which every vertex has bounded out-degree. After certifying this orientation, each vertex is responsible for carrying the certificates of its outgoing edges. In this way, the edge labels can be redistributed uniformly among the vertices. The following lemma formalizes this transformation from edge-certification schemes to vertex-certification schemes on bounded-degeneracy graphs.

\begin{lemma}[{\cite{FFM+23}}]\label{lem:EdgeCert-degeneracy}
    Consider an edge certification scheme of size $f(n)$ for a graph family $\Fcal$. If every graph in $\Fcal$ has degeneracy at most $d$, then there is a (vertex) certification of size $O(d ( f(n)+\log n))$.
\end{lemma}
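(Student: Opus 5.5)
The plan is to follow the outline given just before the statement. First certify a degeneracy ordering via an orientation of bounded out-degree. Then let each vertex carry the edge certificates of its outgoing edges, tagged so that the endpoint of each edge can tell which certificate belongs to it.

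\textbf{Step 1: the orientation.} Since $G\in\Fcal$ has degeneracy at most $d$, fix an ordering in which every vertex has at most $d$ earlier neighbors. Orient each edge from the later endpoint to the earlier one, so every vertex has out-degree at most $d$. The prover writes into the label of $v$ its identifier $\ID(v)$, using $O(\log n)$ bits, together with the list of identifiers of its at most $d$ out-neighbors, using $O(d\log n)$ bits. For each out-edge $\{v,u\}$, the label of $v$ also contains the edge certificate $c(\{v,u\})$, tagged with $\ID(u)$; this costs $O(d\,f(n))$ bits. The total is $O(d(f(n)+\log n))$, as claimed.

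\textbf{Step 2: local consistency checks.} Vertex $v$ first checks that the identifier written in its own label equals its true $\ID(v)$; it can see this because the verifier has access to $\ID(v)$. Since every vertex performs this check, any neighbor $u$ can be trusted to have its real identifier written in its label. For each incident edge $\{v,u\}$, vertex $v$ then reads $u$'s label. It checks that exactly one of the following holds: $\ID(u)$ is in $v$'s out-list, or $\ID(v)$ is in $u$'s out-list. This pins down a unique owner for every edge. In addition, every identifier in $v$'s out-list must be the identifier of some actual neighbor of $v$; otherwise the prover could list phantom edges. Since identifiers are distinct, $v$ can match each listed identifier to a specific neighbor's label. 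Once these checks pass, $v$ recovers for each incident edge $e$ a single value $c(e)$: it is stored either in $v$'s own label or in the owner neighbor's label under the tag $\ID(v)$. Both endpoints of $e$ read the same copy, because it sits in the owner's label, which both of them can see. Finally, $v$ runs the original edge verifier on its own vertex label, which is also carried along, and on these recovered edge labels.

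\textbf{Step 3: completeness and soundness.} Completeness is immediate: the honest prover uses a genuine orientation and a valid edge certification. For soundness, suppose all vertices accept. Then the recovered values define a single, well-defined edge labeling of $G$, because each edge has exactly one owner and both endpoints read the owner's copy. Together with the carried vertex labels, every vertex accepts this edge labeling in the edge scheme. By soundness of the edge scheme, $G$ has the property. Notice that acyclicity of the orientation is never needed for soundness. The verifier only uses the fact that each edge is assigned to exactly one endpoint, and bounded out-degree is merely the reason honest labels are small.

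The step I expect to be the main obstacle is making the assignment of edges to owners unambiguous without a bounded-degree $2$-hop coloring, since in-degrees may be unbounded. Using identifiers as tags resolves this, and it is also the source of the additive $\log n$ term in the bound.
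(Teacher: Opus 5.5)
Your proposal is correct and follows the outline the paper gives just before the statement: certify a bounded-out-degree orientation, and let each vertex carry the edge certificates of its out-edges, tagged with the neighbors' self-reported and locally checked identifiers (this tagging is where the additive $\log n$ comes from, as the paper also remarks). One small detail to add is that $v$ should reject duplicate identifiers in its own out-list, or both endpoints should use a fixed canonical entry, so that the recovered value $c(e)$ is unambiguous.
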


\subsection{Certifying BFS spanning tree}
Spanning tree certification is a widely used building block in local certification schemes. In such a scheme, the labels encode the parent--child relationships of a spanning tree and guarantee that either these labels describe a valid spanning tree or at least one vertex outputs \reject.
\begin{lemma}[{\cite{KKP05}}]\label{lem: BFS-cert}
    Let $G$ be a connected graph. Certifying a BFS spanning tree can be done in $O(\log n)$ bits.
\end{lemma}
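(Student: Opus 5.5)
This is the standard spanning-tree certificate of \cite{KKP05}, specialized to BFS trees. Fix a root $r$ (for instance the vertex of minimum identifier) and a BFS tree $T$ of $G$ rooted at $r$. The prover gives each vertex $v$ the label $(\ID(r), \ID(p(v)), \dist(r,v))$, where $p(v)$ is the parent of $v$ in $T$ and $p(r)=\bot$. Every field uses $O(\log n)$ bits. Since the verifier may not see its neighbors' identifiers, I would also add $\ID(v)$ to $v$'s label; each vertex checks that this entry equals its true identifier, as in the preliminaries. A vertex can then tell which neighbor's label claims to be its parent.

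The local checks at $v$ are as follows.
\begin{enumerate}[label=(\roman*)]
\item All neighbors carry the same root identifier as $v$.
\item If $\dist(r,v)$ is $0$ in $v$'s label, then $v$'s identifier equals the root field and $p(v)=\bot$.
\item Otherwise, some neighbor $u$ has $\ID(u)=\ID(p(v))$ and distance field one less than $v$'s.
\item Every neighbor's distance field differs from $v$'s by at most one.
\end{enumerate}
Completeness is immediate: in a true BFS tree, distances to the root differ by at most one across any edge.

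For soundness, suppose every vertex accepts. Since $G$ is connected, check (i) makes the root field the same identifier $\rho$ at every vertex. Following parent pointers strictly decreases the integer distance field, so every such walk ends at a vertex with distance field $0$. By check (ii), that vertex has identifier $\rho$. Identifiers are unique, so there is a single root, and the parent pointers form a spanning tree $T$ rooted at it. In this tree the distance field $d(v)$ equals the depth of $v$.

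It remains to show that $d(v)=\dist(r,v)$ for every $v$. First, $d(v)\ge \dist(r,v)$, since the tree path has length $d(v)$. Second, check (iv) says $d$ changes by at most one along any edge, and $d(r)=0$. Following a shortest path from $r$ to $v$ therefore gives $d(v)\le \dist(r,v)$. So $T$ is a BFS tree.

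The only delicate point is excluding cycles of parent pointers and multiple roots. The strictly decreasing integer distance field handles cycles, and the uniqueness of the root identifier handles multiple roots.
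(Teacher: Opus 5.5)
Your proposal is correct and uses essentially the same scheme as the paper: the label $(\ID(r),\ID(p(v)),\dist(r,v))$ together with the distance-consistency, parent-is-closer, and unique-root checks. Your additions are storing $\ID(v)$ so a vertex can recognize its parent without seeing neighbor identifiers, and the explicit argument that the certified depths equal true BFS distances; both make rigorous what the paper's sketch leaves implicit.
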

\begin{proof}
For completeness, we briefly describe the certification scheme and its verification procedure.
    \begin{itemize}
        \item Pick any vertex $r\in V$ to be the root. For each vertex $v\in V$, the certificate $c(v)$ contains three fields $(r, p(v), \dist(v,r))$, where $r$ is the root's $\ID$, $p(v)$ is the parent's $\ID$, and $\dist(v,r)$ is the distance from $v$ to root $r$.
        \item The verification process contains the following parts:
        \begin{itemize}
            \item (All distances are consistent.) Each vertex $v$ checks that the distance from each of its neighbors to the root differs from itself by at most $1$. That is, for each $u\in N(v)$, $\dist(u,r)\in\{\dist(v,r)-1,\dist(v,r),\dist(v,r)+1\}$. 
            \item (There exists a root.) Each vertex $v$ with $\dist(v,r)>0$ checks that its parent is closer to the root. This can be done using $p(v)$ in the certificate of $v$, and $\dist(\cdot,v)$ in the certificates of its neighbors. 
            \item (The root is unique.) Each vertex checks that all its neighbors receive the same root $\ID$. For vertex $v$ such that $\dist(v,r)=0$, $v$ itself is the root. Therefore, we have $v$ check that $r=\ID(v)$.\qedhere
        \end{itemize}
    \end{itemize}
\end{proof}
Using this scheme, one can certify a spanning forest on a (not necessarily connected) graph $G$ by certifying a spanning tree on each connected component of $G$.

%% file: k-ec-upper.tex
\section{Certifying $k$-edge-connectivity in $O_k(\log n)$ bits}\label{sec:kec-upper}


In this section, we prove the following theorem. 

\thmkEC*

For every integer $k \ge 1$, we use the equivalence stated in \cref{lem: ec-directed} to certify that a simple undirected graph $G$ is $k$-edge-connected. \cref{lem: ec-directed} may be considered as a variant of the strong orientation theorem~\cite{Robbins39,NashWilliams60}.

\begin{lemma}\label{lem: ec-directed}
Let $k \ge 1$ be an integer. A simple undirected graph $G=(V,E)$ is $k$-edge-connected if and only if its bidirected version $\vec{G}=(V,A)$ is $k$-arc-strongly connected, where $\vec{G}$ is obtained from $G$ by replacing each undirected edge $\{u,v\}\in E$ with the two antiparallel arcs $(u,v)$ and $(v,u)$.
\end{lemma}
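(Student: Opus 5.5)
The plan is to compare cuts on both sides. Edge connectivity of $G$ is governed by the sizes of edge cuts $\delta_G(S)=\{\{u,v\}\in E : u\in S,\ v\notin S\}$, and arc-strong connectivity of $\vec{G}$ by the sizes of out-cuts $\delta^+_{\vec G}(S)=\{(u,v)\in A : u\in S,\ v\notin S\}$, in both cases over nonempty proper subsets $\emptyset\neq S\subsetneq V$. By Menger's theorem, undirected edge version, $G$ is $k$-edge-connected if and only if $|\delta_G(S)|\ge k$ for every such $S$. By Menger's theorem, directed arc version, $\vec G$ is $k$-arc-strongly connected, meaning $k$ arc-disjoint directed paths exist from every $u$ to every $v$, if and only if $|\delta^+_{\vec G}(S)|\ge k$ for every such $S$. (Equivalently, one can take the latter as the definition of $k$-arc-strong connectivity, via removal of fewer than $k$ arcs.)

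The key step is then the identity $|\delta^+_{\vec G}(S)|=|\delta_G(S)|$. Each undirected edge $\{u,v\}$ with $u\in S$, $v\notin S$ contributes exactly one arc, namely $(u,v)$, leaving $S$. Its reverse arc $(v,u)$ enters $S$ and is not counted. Edges with both endpoints on the same side contribute no leaving arc. This gives a bijection between $\delta_G(S)$ and $\delta^+_{\vec G}(S)$, so both cut conditions coincide and the equivalence follows for every $k\ge1$.

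The main care point is to state the definitions so that Menger's theorem applies verbatim. Here $\vec G$ may have antiparallel arcs but no parallel arcs, and the directed Menger theorem for arc-disjoint paths holds for arbitrary digraphs, so this causes no issue. One could also argue the direction ``$\Rightarrow$'' directly, by orienting each of the $k$ edge-disjoint $u$--$v$ paths in $G$, but the cut argument handles both directions uniformly. I do not expect a real obstacle.
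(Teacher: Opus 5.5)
Your proof is correct and follows the paper's argument: the paper's ($\Leftarrow$) direction is your cut-counting step, combining directed and undirected Menger with the observation that the reverse of each crossing arc points back into the source side. The one difference is that the paper proves ($\Rightarrow$) by orienting the $k$ edge-disjoint undirected paths, the alternative you mention, whereas your identity $|\delta^+_{\vec G}(S)|=|\delta_G(S)|$ covers both directions at once.
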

\begin{proof}
($\Rightarrow$) Let $x,y\in V$ be distinct vertices. Since $G$ is $k$-edge-connected, there are at least $k$ edge-disjoint undirected paths in $G$ from $x$ to $y$. These paths witness the existence of at least $k$ arc-disjoint directed paths in $\vec{G}$ from $x$ to $y$ because each undirected edge in $G$ corresponds to two antiparallel arcs in $\vec{G}$.

($\Leftarrow$)
Let $x,y\in V$ be distinct vertices. Consider any cut $(X,Y)$ separating $x$
from $y$, with $x\in X$ and $y\in Y$. Since $\vec{G}$ is $k$-arc-strongly
connected, the directed version of Menger's theorem~\cite{Menger27} implies that there are at
least $k$ arcs from $X$ to $Y$ in $\vec{G}$. These arcs correspond to distinct
edges of $G$ crossing the cut $(X,Y)$, because the antiparallel arc of an arc
from $X$ to $Y$ is directed from $Y$ to $X$. Hence every cut $(X, Y)$ in $G$ has
size at least $k$. By the undirected version of Menger's theorem~\cite{Menger27}, there are at
least $k$ edge-disjoint paths in $G$ from $x$ to $y$.
\end{proof}

Another key certificate used by our algorithm comes from Edmonds' branching theorem
(\cref{lem:branching}). We next explain how to combine \cref{lem: ec-directed}
with \cref{lem:branching} to obtain a local certification scheme for
$k$-edge-connectivity for every integer $k \ge 2$.

Suppose that $G$ is $k$-edge-connected. Then, by \cref{lem: ec-directed},
its bidirected version $\vec{G}$ is $k$-arc-strongly connected. Fix a vertex
$r \in V(\vec{G})$. By Edmonds' branching theorem, $\vec{G}$ contains a
collection of $k$ pairwise arc-disjoint \emph{out-branchings} rooted at $r$, as well
as a collection of $k$ pairwise arc-disjoint \emph{in-branchings} rooted at $r$, where an out-branching (resp., an in-branching) rooted at $r$ is a directed 
spanning tree whose arcs are all oriented away from $r$ (resp.,
toward $r$).
Thus, the prover can construct these two collections of branchings in $\vec{G}$
and certify their existence to the verifiers.

It remains to see why these branchings certify $k$-edge-connectivity of $G$.
Assume that the verifiers are convinced that $\vec{G}$ contains the two
collections of branchings described above. We claim that every nontrivial
cut\footnote{A cut $(S,V(G)\setminus S)$ is nontrivial if both $S$ and
$V(G)\setminus S$ are nonempty.} $(S,V(G)\setminus S)$ has at least $k$ arcs
entering $S$ in $\vec{G}$. If $r \notin S$, then each out-branching contains at
least one arc entering $S$ from $V(G)\setminus S$. Since the $k$ out-branchings
are pairwise arc-disjoint, this gives at least $k$ arcs entering $S$. If
$r \in S$, the same argument applied to the $k$ in-branchings gives at least
$k$ arcs entering $S$. Hence every nontrivial cut has at least $k$ incoming
arcs.

Therefore, after deleting any set of at most $k-1$ arcs from $\vec{G}$, every
nontrivial cut still has at least one incoming arc. This implies that the
remaining digraph is strongly connected, and hence $\vec{G}$ is
$k$-arc-strongly connected. By \cref{lem: ec-directed}, $G$ is
$k$-edge-connected.

Conversely, if $G$ is not $k$-edge-connected, then \cref{lem: ec-directed}
implies that $\vec{G}$ is not $k$-arc-strongly connected. In this case, such
two collections of branchings cannot exist; otherwise, the argument above would
imply that $\vec{G}$ is $k$-arc-strongly connected. Hence no prover can certify
the existence of these two collections to the verifiers.

\begin{lemma}[Edmonds’ Branching Theorem~\cite{Edmonds73}]\label{lem:branching}
Let $k \ge 1$ be an integer.
Every digraph $\vec{G}=(V,A)$ contains $k$ arc-disjoint out-branchings rooted at a vertex $r$ if and only if, for every nonempty subset $S \subseteq V \setminus \{r\}$, at least $k$ arcs enter $S$ from $V \setminus S$.
\end{lemma}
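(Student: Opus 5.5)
The statement just given is Edmonds' Branching Theorem, cited from \cite{Edmonds73}; the paper uses it as a black box. Still, here is how I would prove it. The forward direction is immediate. Fix a nonempty $S\subseteq V\setminus\{r\}$ and any out-branching $B$ rooted at $r$. Since $B$ reaches every vertex of $S$ from $r\notin S$, the directed $r$-to-$s$ path in $B$ for some $s\in S$ must contain an arc entering $S$. Arc-disjointness of the $k$ branchings then gives $k$ distinct arcs entering $S$.

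For the converse I would use Lovász's inductive proof. Induct on $k$, with $k=0$ trivial. The stronger claim to carry is this: if $\varrho(S)\ge k$ for every nonempty $S\subseteq V\setminus\{r\}$, where $\varrho$ denotes in-degree, then $\vec G$ contains an out-branching $B$ rooted at $r$ such that $\vec G - A(B)$ still satisfies $\varrho(S)\ge k-1$ for all such $S$. Induction on $k$ then finishes. To build $B$, grow an arborescence $T$ rooted at $r$ one arc at a time, maintaining the invariant that $\varrho_{\vec G - A(T)}(S)\ge k-1$ for every nonempty $S\subseteq V\setminus\{r\}$. Initially $T=\{r\}$ with no arcs. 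While $V(T)\ne V$, call a set $S\subseteq V\setminus\{r\}$ \emph{critical} if $\varrho_{\vec G-A(T)}(S)=k-1$ and $S\not\subseteq V(T)$. If no critical set exists, add any arc leaving $V(T)$; one exists because $\varrho(V\setminus V(T))\ge k\ge1$ and arcs of $T$ stay inside $V(T)$. Otherwise, pick a critical set $S$ that is inclusion-minimal and add an arc $(u,v)$ with $u\in V(T)\cap S$ and $v\in S\setminus V(T)$.

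The main obstacle is showing that such an arc exists, and that adding it keeps the invariant. I would handle both with submodularity of the in-degree function,
\[
\varrho(X)+\varrho(Y)\ge \varrho(X\cap Y)+\varrho(X\cup Y).
\]
\textbf{Existence.} Suppose no arc of $\vec G - A(T)$ goes from $V(T)\cap S$ into $S\setminus V(T)$. Then every arc entering $S\setminus V(T)$ either enters $S$ from outside, or comes from $V(T)\cap S$ and hence lies in $A(T)$. Arcs of $T$ stay inside $V(T)$, so the second kind cannot occur. This gives $\varrho_{\vec G}(S\setminus V(T))\le \varrho_{\vec G-A(T)}(S)=k-1$, contradicting the hypothesis. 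Note that $S\setminus V(T)$ is nonempty and avoids $r$, so the hypothesis applies.

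\textbf{Invariant.} Removing $(u,v)$ lowers $\varrho(X)$ only for sets $X$ with $v\in X$ and $u\notin X$. Suppose some such $X$ has $\varrho_{\vec G-A(T)}(X)=k-1$, i.e., $X$ is tight. Then $X$ is also critical, since $v\notin V(T)$. Apply submodularity to $X$ and $S$. Both $X\cap S$ and $X\cup S$ are nonempty, since $v\in X\cap S$, and both avoid $r$. Each has in-degree at least $k-1$, while the left-hand side equals $2(k-1)$, so both are tight. But $X\cap S$ contains $v\notin V(T)$, so it is critical, and it is a proper subset of $S$ because $u\in S\setminus X$. This contradicts the minimality of $S$.

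When $V(T)=V$, the tree $T$ is the desired out-branching $B$. Applying the stronger claim $k$ times yields $k$ arc-disjoint out-branchings rooted at $r$. The in-branching version used above follows by reversing all arcs.
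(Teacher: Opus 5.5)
Your proof is correct, and it goes beyond what the paper does. The paper gives no proof of this lemma and cites Edmonds as a black box. It re-derives only the easy direction inline in \cref{sec:kec-upper}: arc-disjoint branchings force every nonempty $S\subseteq V\setminus\{r\}$ to have $k$ entering arcs. That is exactly your first paragraph, and the paper uses it for soundness. The hard direction is what the paper actually needs for completeness, since the honest prover must produce $k$ arc-disjoint out- and in-branchings once $\vec G$ is $k$-arc-strongly connected. You prove that direction with Lov\'asz's submodularity argument rather than Edmonds' original one.

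Your two key steps check out.
\begin{itemize}
\item Existence of the arc: if no arc of $\vec G - A(T)$ went from $V(T)\cap S$ to $S\setminus V(T)$, then every arc entering $S\setminus V(T)$ would be an arc of $\vec G - A(T)$ entering $S$. This gives $\varrho_{\vec G}(S\setminus V(T))\le k-1$, a contradiction.
\item Preservation of the invariant: uncrossing a tight $X$ (with $v\in X$, $u\notin X$) against the minimal critical $S$ yields the strictly smaller critical set $X\cap S$, contradicting minimality.
\end{itemize}
The only thing left implicit is the branch in which no critical set exists. There, preservation of the invariant follows from the same first observation: any tight $X\ni v$ would be critical. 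So nothing is missing.

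The trade-off between citing and proving is brevity versus self-containment. Your version is also constructive and runs in polynomial time, though that does not matter here because the prover in a certification scheme is computationally unbounded.
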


Consequently, it remains for the prover to certify that, for some vertex $r \in V(G)$, there exist $k$ pairwise arc-disjoint out-branchings rooted at $r$. The prover must also provide an analogous certificate for $k$ pairwise arc-disjoint in-branchings rooted at $r$. In what follows, we focus on the out-branchings case.

\begin{itemize}
    \item Prover: Fix a vertex $r \in V(G)$ and compute $k$ arc-disjoint out-branchings of $\vec{G}$ rooted at $r$, denoted $T_i$ for $i \in [k]$. For each vertex $v \in V(G)$ and each $i \in [k]$, give the verifier for $v$ a certificate
    \[
        c_i(v) \coloneqq \left(\ID(r),\, p_i(v),\, \depth_i(v)\right),
    \]
    where $p_i(v)$ is the $\ID$ of $v$'s parent in $T_i$ (with $p_i(r)$ set to a null pointer) and $\depth_i(v)$ is the depth of $v$ in $T_i$. 
    \item The verification process contains the following parts:
        \begin{itemize}
            \item (The root $r$ is unique.) Each vertex checks that all of its neighbors report the same root identifier $\ID(r)$. Since $G$ is connected, if different roots were assigned, then some edge would connect two vertices with different root identifiers, and both of them would reject. The root $r$ also checks that $\depth_i(r)=0$ for all $i \in [k]$.
            
            \item ($T_i$ consists of $n-1$ edges.) Each vertex $v$ with $\depth_{i}(v)>0$ checks that its parent has a smaller depth. This can be done using $p_{i}(v)$ in the certificate of $v$, and $\depth_i(p_i(v))$ in the certificate of its parent $p_i(v)$. 

            \item ($T_i$ for $i \in [k]$ are pairwise arc-disjoint.) Each vertex $v \ne r$ checks whether the parent identifiers $p_i(v)$ over all $i \in [k]$ are pairwise distinct. If some arc $(x,y)$ were used in more than one branching, then $y$ would have the same parent $x$ in at least two of the $T_i$, and this violation would be detected by vertex $y$.
        \end{itemize}
    
    Given the certificates $c_i(v)$ for all $v \in V(G)$ and $i \in [k]$, the verifiers can locally check that, for each $i \in [k]$, all vertices agree on the same root and that every vertex other than the root has exactly one parent. Together with the depth consistency checks, it follows that the underlying undirected graph of $T_i$ is connected and has exactly $n-1$ edges, and hence is a tree. Since every arc of $T_i$ is oriented from parent to child, so $T_i$ is an out-branching rooted at $r$. Finally, the verifiers can also check that the $T_i$ are pairwise arc-disjoint. This yields an $O(k \log n)$-bit local certification scheme for $k$-edge-connectivity for every integer $k \ge 2$.
    
\end{itemize}

%% file: k-vc-upper.tex
\section{Certifying $k$-vertex-connectivity in $\Tilde{O}_k(\sqrt{n})$ bits}\label{sec:kvc-upper}
To explain our upper bound for $k$-vertex-connectivity, we first recall the following conjecture of Itai and Zehavi~\cite{ZI89}: \emph{a graph $G$ is $k$-vertex-connected if and only if for every vertex $r\in V(G)$, there exist $k$ independent spanning trees rooted at $r$.}

We also recall that independence means that for every vertex $x\in G\setminus r$, the paths from $r$ to $x$ in the $k$ trees are internally vertex-disjoint. See~\Cref{fig: IST-illustration} for an illustration when $k=2$.

The Itai--Zehavi conjecture is known to hold for every $k \in \{1, 2, 3, 4\}$~\cite{ItaiR88,CheriyanM88,ZI89, CLY06}. Using this characterization, we obtain a local certification scheme for $k$-vertex-connectivity with per-vertex label size $\tilde{O}_k(\sqrt{n})$ for $k\in\{2,3,4\}$. For $k\ge 5$, we obtain the same upper bound assuming the Itai--Zehavi conjecture.

\thmKVC*

We note that, by a simple observation, the Itai--Zehavi conjecture implies that a graph $G$ is $k$-vertex-connected if and only if there exist $k$ distinct vertices $r_1,\ldots,r_k \in V(G)$ (rather than all vertices) such that, for each $i \in [k]$, $G$ contains $k$ independent spanning trees rooted at $r_i$. Suppose that $G$ contains such roots but is not $k$-vertex-connected, then there exists a vertex cut $C$ with $|C|<k$ such that $G-C$ is disconnected. Since $C$ has size less than $k$, it cannot contain all of $r_1,\ldots,r_k$, so some root $r^*$ remains in $G-C$. Since $G-C$ is disconnected, there exists a vertex $x$ that belongs to a component of $G-C$ different from the one containing $r^*$. This contradicts the existence of $k$ independent spanning trees in $G$ rooted at $r^*$, which implies that there are $k$ internally vertex-disjoint paths between $r^*$ and $x$, and hence no vertex cut of size less than $k$ can separate them.

Given the above equivalence, certifying $4$-vertex-connectivity reduces to certifying that $\binom{4}{2}$ pairs of spanning trees rooted at a common vertex are independent. More generally, assuming the Itai--Zehavi conjecture, the same reduction applies to certifying $k$-vertex-connectivity for every $k \ge 5$. The task of certifying $k$-vertex connectivity is therefore reduced to the following building block: fix a root $r^*$ and certify that every pair among the $k$ spanning trees rooted at $r^*$ is independent. In the following lemma, we prove that $\Tilde{O}_k(\sqrt{n})$ bits suffice to certify two independent spanning trees rooted at the same vertex.


\begin{lemma}\label{lem: 2-indep-tree-certification}
    Given a graph $G$, a vertex $r\in G$, and two spanning trees $T_1$ and $T_2$ on $G$ rooted at $r$ (the spanning tree inputs are of the form of the certification scheme described in \Cref{lem: BFS-cert}), there exists an $O(\sqrt{n}\log^{2.5}n)$-bit local certification scheme for certifying that $T_1$ and $T_2$ are independent.
\end{lemma}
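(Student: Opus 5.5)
The plan is to reduce independence to a local check. $T_1$ and $T_2$ fail to be independent iff there are vertices $x$ and $w\notin\{r,x\}$ such that $w$ is a proper ancestor of $x$ in both trees. So it suffices that every vertex $x$ can verify that its set $A_1(x)$ of $T_1$-ancestors and its set $A_2(x)$ of $T_2$-ancestors share no vertex other than $r$. The intersection will be split according to a length threshold $L$ on the segments of a heavy-light decomposition of $T_1$. A $T_1$-ancestor lying on a \emph{long} segment (length $>L$) will be caught by a per-long-segment summary of $A_2(x)$ propagated down $T_2$. A $T_1$-ancestor lying on a \emph{short} segment will be listed explicitly, with enough information to test whether it lies on the $T_2$-path of $x$. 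The threshold $L$ is then chosen to balance the two costs.

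\textbf{Path encodings.} For each tree $T_j$ the prover fixes a heavy-light decomposition, so every root path meets $O(\log n)$ segments. A segment is named by the $\ID$ of its top vertex. Each vertex $v$ stores its own pair (segment, rank) in $T_j$ and the encoding $\mathrm{enc}_j(v)$, which is the list of (segment, rank) pairs at which its root path leaves each segment. This costs $O(\log^2 n)$ bits.

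The local checks are as follows. A vertex of rank $0$ checks that its segment name equals its own $\ID$. Every vertex checks that $\mathrm{enc}_j(v)$ is obtained from its parent's encoding, read off the parent's label, either by incrementing the last rank (same segment) or by appending (own segment, $0$). Every vertex also rejects if its encoding has more than $c\log n$ pairs. Soundness does not need a genuine heavy-light decomposition, only this consistency. Consistency implies, by induction down the tree, that $w$ is a $T_j$-ancestor of $x$ iff $\mathrm{enc}_j(w)$ minus its last pair $(s,\rho)$ is a prefix of $\mathrm{enc}_j(x)$ and $x$'s pair on $s$ has rank $\ge\rho$. This uses that segment names are unique $\ID$s.

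\textbf{Long segments.} Each vertex on a $T_1$-segment also carries a flag saying whether the segment is long, consistent along the segment. Vertices on short segments check rank $<L$. The long segments receive indices in $[n/L]$, and each vertex stores its index together with the top $\ID$. Each $x$ stores an array $M(x)$ indexed by $[n/L]$, with entries (top $\ID$, min-rank). The intended meaning is the minimum $T_1$-rank of a vertex of $A_2(x)\setminus\{r\}$ on that long segment, or $\infty$ if there is none.

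Vertex $x$ checks $M(x)=M(p_2(x))$ updated by $p_2(x)$'s own pair if $p_2(x)\neq r$ lies on a long segment. Here $p_2(x)$ denotes the $T_2$-parent of $x$. Any mismatch of top $\ID$ for an index causes rejection, which prevents two long segments from sharing an index undetected. Then $x$ checks, for each long segment $s$ occurring in $\mathrm{enc}_1(x)$ at exit rank $\rho$ (counting its own pair), that $M(x)[s]>\rho$, with the root's pair excluded. This array costs $O((n/L)\log n)$ bits.

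\textbf{Short segments.} Each $x$ stores the list $S(x)$ of its proper $T_1$-ancestors (excluding $r$) lying on short segments. For each such ancestor it stores the $\ID$ and $\mathrm{enc}_2$. The list has length at most $L\cdot O(\log n)$ and costs $O(L\log^3 n)$ bits. Vertex $x$ verifies $S(x)=S(p_1(x))$ plus $p_1(x)$ itself if it is short, reading $p_1(x)$'s true $\mathrm{enc}_2$ from its label. This propagates correctness of every stored encoding. Then $x$ tests, via the prefix criterion, that no listed $w$ is a $T_2$-ancestor of $x$.

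\textbf{Balancing and the main obstacle.} Balancing $(n/L)\log n = L\log^3 n$ gives $L=\sqrt n/\log n$ and $O(\sqrt n\log^2 n)\subseteq O(\sqrt n\log^{2.5}n)$ bits. Completeness is immediate.

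The step I expect to be delicate is soundness of the long/short bookkeeping. A cheating prover might misclassify segments, reuse indices, or exploit that the root and degenerate cases interact with the min-rank comparison. I would first nail down that every ancestor relation used is forced by the locally checked encodings. Then I would show that any common internal ancestor $w$ of some $x$ is caught: by $M(x)$ if $w$'s $T_1$-segment is flagged long, and by $S(x)$ if it is flagged short.
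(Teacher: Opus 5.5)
Your scheme is correct. Its skeleton matches the paper's: heavy--light path encodings, a length threshold on the segments of $T_1$, and the same division of labour as the paper's rule $\pi(v)=2$ iff $v\in\Long(T_1)$. Vertices on long $T_1$-segments push $T_1$-information down $T_2$, and vertices on short $T_1$-segments push their $T_2$-encoding down $T_1$.

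The genuine difference is how you keep the long side small. The paper stores full addresses $\alpha_1(v)$ and prunes them with cancellation rules. It then needs \Cref{lem: seg-uniq} (at most one surviving address per segment) for the size bound, a local check that every deletion from a list is justified, and an extremal choice of $w$ in the soundness proof to show that deletions cannot hide a violation. Your array $M(x)$ stores only (top $\ID$, minimum $T_1$-rank) per long segment and is propagated down $T_2$ by a plain equality check. This removes the need for any cancellations:
\begin{itemize}
\item Soundness becomes a direct case split on whether the offending common ancestor $w$ is flagged long (caught by $M(x)$) or short (caught by $S(x)$).
\item Each long segment costs $O(\log n)$ bits instead of $O(\log^2 n)$, giving $O(\sqrt n\log^2 n)$, a $\sqrt{\log n}$ improvement over the stated bound.
\item Your encodings are used only to \emph{trigger} rejections. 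So the one-sided implication ``true ancestor $\Rightarrow$ prefix criterion'', which follows from parent--child consistency alone, is all soundness needs. In the paper, encodings also license deletions, so they must be exact even against a cheating prover.
\end{itemize}

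Three small points to fix in the write-up.

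First, your ancestor criterion is not an ``iff'' under your checks. Nothing prevents a vertex from having two children that both continue its segment, and such siblings get identical encodings, so the criterion would wrongly declare one an ancestor of the other. Either add the paper's check that exactly one child continues each segment, or simply state only the direction you actually use.

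Second, ``with the root's pair excluded'' should mean only that $r$'s own rank-$0$ entry never enters $M$, which your update rule already ensures. The vertex $x$ must still test its exit pair on the root's $T_1$-segment, which is typically long.

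Third, $x$ needs a way to match entries of $M(x)$ to the segments in $\mathrm{enc}_1(x)$. Scanning $M(x)$ for top $\ID$s that occur in $\mathrm{enc}_1(x)$ works, since any false matches only cause extra rejections.
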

Given this lemma, \Cref{thm: k-vc-cert} follows directly.
\begin{proof}[Proof of \Cref{thm: k-vc-cert}]
    Pick arbitrary $k$ vertices $r_1,\ldots,r_k$ in $G$. For each $i\in[k]$, we certify $k$ independent spanning trees rooted at $r_i$ by applying \Cref{lem: 2-indep-tree-certification} for $\binom{k}{2}$ pairs of spanning trees rooted at $r_i$. Therefore, the complexity increases by a factor of $O(k^3)$.
\end{proof}
\Cref{sec: 2-indep-spt-cert} is devoted to proving \Cref{lem: 2-indep-tree-certification}. In~\Cref{sec:lower-2indst}, we show that this approach is essentially tight in the sense that given $2$ prescribed spanning trees, certifying that they are independent requires per-vertex label size $\Omega(\sqrt{n})$.

\begin{figure}[ht]
    \centering
    \begin{subfigure}{0.4\textwidth}
        \centering
        \includegraphics[width=\textwidth]{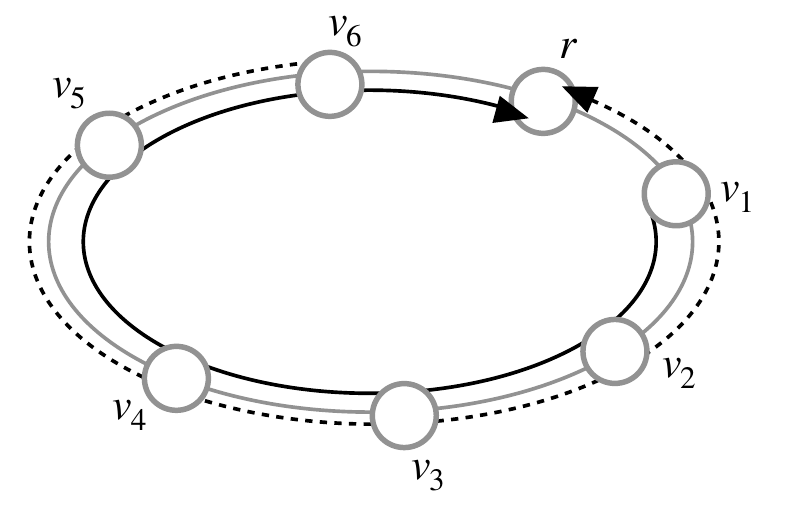}
        \caption{Two spanning trees that are vertex-independent.}
        \label{fig: IST}
    \end{subfigure}
    \hspace{1.5cm}
    \begin{subfigure}{0.4\textwidth}
        \centering
        \includegraphics[width=\textwidth]{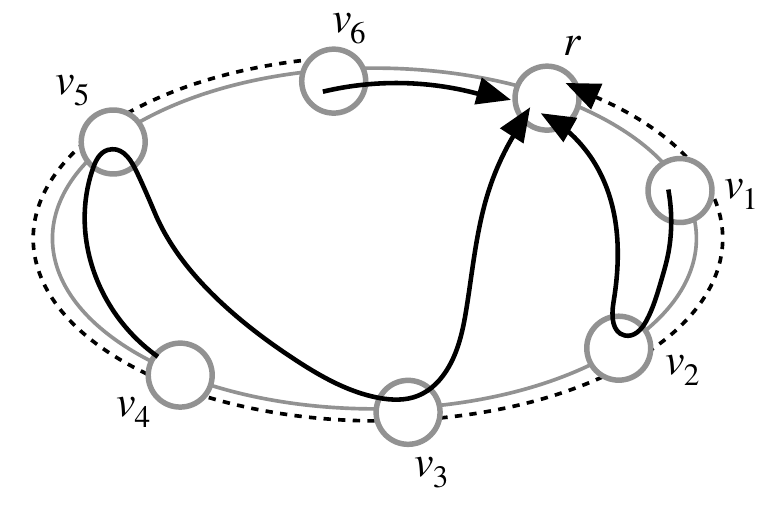}
        \caption{Two spanning trees that are not vertex-independent}
        \label{fig: non-IST}
    \end{subfigure}
    \caption{Illustration of independent spanning trees. Both figures demonstrate two spanning trees in a ring rooted at the same vertex $r$, one in black and the other in dotted lines, oriented from the leaves to the root. In~\Cref{fig: IST}, both spanning trees are paths; the dotted one has $v_6$ as the leaf, and the other has $v_1$ as the leaf. The two trees are vertex-independent. In~\Cref{fig: non-IST}, the two trees are not vertex-independent (but still edge-independent) since the paths from $v_4$ (and also $v_5$) pass through $v_3$ in both trees.}
    \label{fig: IST-illustration}
\end{figure}

\subsection{Proof of \Cref{lem: 2-indep-tree-certification}}\label{sec: 2-indep-spt-cert}
We first consider the following naive approach: Given two spanning trees $T_1,T_2$ rooted at $r$, for a vertex $v$ to verify that the root-to-$v$ path in $T_1$ and the root-to-$v$ path in $T_2$ are internally vertex-disjoint, we provide $v$ with the lists of its ancestors in two trees. Then, if $T_1$ and $T_2$ are not internally vertex-disjoint, then there exists a non-root vertex $v$, such that the paths from the root to $v$ of both trees go through another non-root vertex $u$. This means that $u$ is an ancestor of $v$ in both trees, and $v$ could detect $u$ in the lists of its ancestors in both trees. However, this approach would require $n\log n$ bits, which is too expensive. We show that the complexity can be reduced to $\Tilde{O}(\sqrt{n})$ via heavy-light decomposition.

In \Cref{sec: 2-indep-spt-cert-comp}, we will describe the certificate assignment and its complexity; in \Cref{sec: 2-indep-spt-vrfy-pf}, we will describe the verification algorithm and the correctness of the entire certification scheme. Then \Cref{lem: 2-indep-tree-certification} follows. For the rest of the proof, when $T$ is a rooted tree, we denote by $T(v)$ the subtree of $T$ rooted at $v$.

\subsubsection{Certification assignment and the complexity}\label{sec: 2-indep-spt-cert-comp}
\paragraph{Heavy-light decomposition and path-encoding}
\begin{definition}[{Heavy-light decomposition~\cite{SLEATOR1983362_heavy-light}}]
    The heavy-light decomposition of a rooted tree $T$ is a partition of its edges into heavy and light edges, such that each non-leaf vertex has one heavy edge to one of its children, which is the child with the largest number of vertices in its subtree.
\end{definition}
The heavy edges in $T$ induce a set of paths, called \emph{segments}. Each leaf in $T$ that is not adjacent to any heavy edge is treated as a segment of length $0$.
This way, each vertex belongs to exactly one segment.

\begin{figure}[ht]
    \centering
    \input{hldecomp}
    \caption{An illustration of the heavy-light decomposition. The segments are marked in black. For example, vertex $v$ has two children $v_1$ and $v_2$, and $v_1$ has more vertices in its subtree, and therefore belongs to the same segment as $v$. If a leaf does not belong to the same segment as its parent, it is itself considered a segment, see $v_3$.}
    \label{fig:heavy-light}
\end{figure}

\begin{fact}\label{fact: heavy-light_fact_0}
    Consider any vertex in the tree $T$, the root-to-$v$ path can overlap with at most $O(\log n)$ segments.
\end{fact}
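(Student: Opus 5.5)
The plan is the standard light-edge counting argument. Fix a vertex $v$ and consider the root-to-$v$ path $r=u_0,u_1,\ldots,u_\ell=v$ in $T$. The segments met by this path change exactly when the path traverses a light edge $(u_{j-1},u_j)$. Along a heavy edge the path stays inside the segment of $u_{j-1}$. Every vertex lies in exactly one segment, and a heavy edge joins a vertex to a child in its own segment. Hence the number of segments overlapped is at most one plus the number of light edges on the path, and it suffices to bound the latter by $O(\log n)$.

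The key step is to show that a light edge at least halves the subtree size. Let $(u,w)$ be a light edge with $w$ a child of $u$, and let $w'$ be the heavy child of $u$. By the definition of the heavy-light decomposition, $|T(w')|\ge |T(w)|$. Since $T(w)$ and $T(w')$ are disjoint subtrees of $T(u)\setminus\{u\}$, we get
\[
2|T(w)| \le |T(w)|+|T(w')| \le |T(u)|-1 < |T(u)|,
\]
so $|T(w)|<|T(u)|/2$.

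Next I would combine this along the path. Subtree sizes are non-increasing along $r=u_0,\ldots,u_\ell$, with $|T(u_0)|=n$ and $|T(u_\ell)|\ge 1$. Each light edge on the path cuts the size by a factor greater than $2$. If there are $m$ light edges, then $1\le |T(v)|<n/2^m$, which gives $m<\log_2 n$. Therefore the path overlaps at most $1+\log_2 n=O(\log n)$ segments.

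The only delicate point is the bookkeeping for length-$0$ segments, that is, leaves not incident to any heavy edge. Such a leaf is entered through a light edge, so it is already charged by the count above and no separate case is needed. Ties in choosing the heavy child are harmless, because the inequality $|T(w')|\ge|T(w)|$ holds for whichever child is chosen as heavy.
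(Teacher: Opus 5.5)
Your proof is correct. The paper gives no proof of this Fact and simply treats it as the standard property of the Sleator--Tarjan heavy-light decomposition; your light-edge halving argument is the usual justification behind it (one trivial nit: the strict bound $|T(v)|<n/2^m$ needs $m\ge 1$, but the case $m=0$ is immediate).
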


Let $T$ be a tree rooted at $r$. Denote by $\Psi(T)=\{S_1,\ldots,S_t\}$ the set of segments of the heavy-light decomposition on $T$. 

\begin{itemize}
    \item For each vertex $v\in T$, define $\seg_T(v)=i$ if $v\in S_i$. Recall that each vertex in $T$ belongs to exactly one segment. Moreover, we define the \emph{rank} of $v$ to be $j$ if $v$ is the $j$-th closest vertex to the root in $S_{\seg_T(v)}$, denoted as $\rank_T(v)=j$. We encode the position of $v$ in $T$ by $\eta_T(v)=(\seg_T(v),\rank_T(v))$.
    \item We encode the root-to-$v$ path, denoted by $p_T(v)$, as follows: Suppose that the root-to $v$ path in $T$ overlaps with $l$ segments $S'_1,\ldots,S'_l$. For each $j\in[l]$, let $u_j$ be the vertex in $S'_j\cap p_T(v)$ with the largest rank. We define the path-encoding of $v$, denoted by $\alpha_T(v)$, to be a set of segment-rank pairs, which consists of the vertex-encoding of $u_1,\ldots,u_l$. More formally, $\alpha_T(v)=\{(\seg_T(u_j), \rank_T(u_j)):j\in[l]\}$. Observe that the vertex-encoding of $v$ should be the last segment-rank pair contained in $\alpha_T(v)$, that is, $\eta_T(v)=(\seg_T(u_l), \rank_T(u_l))$. According to \Cref{fact: heavy-light_fact_0}, $l\in O(\log n)$, and therefore the path-encoding requires $O(\log^2n)$ bits.
\end{itemize}

For each vertex $v$, let $A_T(v)$ be the set of the addresses of all ancestors of $v$ in $T$. $A_T(v)=\{\eta_T(w):w\in p_T(v)\}$. Note that according to the definition of heavy-light decomposition, all segments must consist of consecutive vertices in an ancestor-descendant relation. Therefore, it holds for every segment that intersects with $p_T(v)$, the smallest-rank vertex in the intersection always has rank $1$. It follows that $A_T(v)$ can be recovered directly from the path-encoding $\alpha_T(v)$.

\paragraph{Certificate assignment}
Observe that to certify the independence of $T_1$ and $T_2$, it suffices for each non-root vertex $u$ to detect whether there exists another non-root vertex $v$ such that $u\in T_1(v)\cap T_2(v)$. For $i \in \{1,2\}$, let $-i$ be the complement of $i$ in the set $\{1,2\}$;  we say that $v$ is a \emph{$T_i$-ancestor} of $u$ if and only if $u\in T_i(v)$. For simplicity, let $\eta_i(v)=\eta_{T_i}(v)$ and $\alpha_i(v)=\alpha_{T_i}(v)$. We say that $(i,\alpha_i(v))$ is the \emph{$T_i$-address} of $v$.

\begin{itemize}
    \item For each vertex $v$, for all $i \in \{1,2\}$, include the $T_i$-address $(i,\alpha_i(v))$ to the certificate of $v$. 
    \item For each vertex $v$, we choose one of the following certificate assignments. And we will define how to choose between the two options later.
    \begin{itemize}
        \item (Option $1$.) Include $(2,\alpha_2(v))$ in the certificates of all the vertices in $T_1(v)$.
        \item (Option $2$.) Include $(1,\alpha_1(v))$ in the certificates of all the vertices in  $T_2(v)$.
    \end{itemize}
    In other words, for each $i \in \{1,2\}$, each vertex $u$ receives a list of $T_{-i}$-addresses of a subset of its $T_i$-ancestors (including itself). Denote this list by $\Lambda_i(u)$. Let the above choice of $v$ be $\pi(v)$. $\pi(v)=i$ means that we choose Option $i$ for $v$. We include this choice in the certificate of $v$ as well.
    \item (Cancellation rules.) For two distinct vertices $u,v$ such that $v\in T_1(u)$ and $u\in T_2(v)$, we remove some certificates based on the following rules:
    \begin{itemize}
        \item If $\pi(u)=\pi(v)=1$, then for every vertex $x\in T_1(v)$, we remove $(2,\alpha_2(u))$ from $\Lambda_1(x)$.
        \item If $\pi(u)=\pi(v)=2$, then for every vertex $x\in T_2(u)$, we remove $(1,\alpha_1(v))$ from $\Lambda_2(x)$.
    \end{itemize}
\end{itemize}

For each $i \in \{1,2\}$ and each segment $S\in\Psi(T_i)$, let $V(S)$ be the vertex set of this segment. Then, according to the cancellation rules, we have the following lemma:
\begin{lemma}\label{lem: seg-uniq}
    Given that $T_1$ and $T_2$ are independent. For each $i \in \{1,2\}$ and each segment $S\in\Psi(T_i)$, for any vertex $x\in G$, there exists at most one non-root vertex $v\in V(S)$ such that $(i,\alpha_i(v))\in\Lambda_{-i}(x)$.
\end{lemma}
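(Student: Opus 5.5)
The plan is to unfold what membership in $\Lambda_{-i}(x)$ means, show that any two candidate vertices in the same segment are ancestor-related in opposite directions in the two trees, and then check that the cancellation rule removes one of them. I treat $i=1$; the case $i=2$ is symmetric.

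\textbf{Unfolding membership.} By the certificate assignment, $(1,\alpha_1(v))$ is placed into lists only under Option~$2$ for $v$, and then only into $\Lambda_2(x)$ for $x\in T_2(v)$. Hence $(1,\alpha_1(v))\in\Lambda_2(x)$ forces three things:
\begin{itemize}
\item $\pi(v)=2$;
\item $x\in T_2(v)$;
\item the entry was not removed by a cancellation rule.
\end{itemize}
Suppose, for contradiction, that two distinct non-root vertices $v,v'\in V(S)$ both satisfy $(1,\alpha_1(v))\in\Lambda_2(x)$ and $(1,\alpha_1(v'))\in\Lambda_2(x)$. Then $\pi(v)=\pi(v')=2$, and $x\in T_2(v)\cap T_2(v')$.

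\textbf{Comparability in both trees.} Since $x$ lies in both $T_2(v)$ and $T_2(v')$, the vertices $v$ and $v'$ both lie on the root-to-$x$ path in $T_2$. So they are comparable in $T_2$. They also lie in the same heavy-light segment $S$ of $T_1$, and a segment is a vertical path, so they are comparable in $T_1$.

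\textbf{Opposite orderings via independence.} If one of them, say $v$, were an ancestor of the other in both trees, then $v$ would be a non-root internal vertex of both root-to-$v'$ paths. This contradicts the independence of $T_1$ and $T_2$. Hence the two orderings are opposite. Relabel the pair as $u,w$ with $w\in T_1(u)$ and $u\in T_2(w)$; note that $u$ is the $T_2$-deeper vertex and $w$ is the $T_1$-deeper one.

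\textbf{Applying the cancellation rule.} The pair $(u,w)$ matches the hypothesis of the cancellation rules, and $\pi(u)=\pi(w)=2$. The second rule therefore removes $(1,\alpha_1(w))$ from $\Lambda_2(y)$ for every $y\in T_2(u)$. Now $x$ is a $T_2$-descendant of both $u$ and $w$, and $u$ is the deeper of the two in $T_2$, so $x\in T_2(u)$. Thus $(1,\alpha_1(w))\notin\Lambda_2(x)$, a contradiction. Consequently at most one $v\in V(S)$ survives.

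\textbf{The case $i=2$.} The same argument works with the first rule. Suppose the candidates $u,w$ have $\pi=1$, $w\in T_1(u)$ and $u\in T_2(w)$, and $x$ lies in $T_1$ of both. Here $w$ is $T_1$-deeper, so $x\in T_1(w)$. The first rule then removes $(2,\alpha_2(u))$ from $\Lambda_1(x)$, which again leaves at most one survivor.

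The only delicate point is the bookkeeping: identifying which member of the pair the rule cancels, and checking that $x$ lies in the subtree where that cancellation applies. This holds in both cases because $x$ is a common descendant and therefore lies below the deeper vertex.
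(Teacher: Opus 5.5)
Your proof is correct and follows the same route as the paper. You unfold list membership into the $\pi$-values and subtree containment, and get comparability in both trees from the common descendant $x$ and the vertical segment. You then use independence to force opposite orderings and apply the matching cancellation rule, with $x$ lying below the deeper vertex; the only difference is that the paper writes out the case $i=2$ (the first rule) while you write out $i=1$ (the second rule) and sketch the other.
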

\begin{proof}
    We assume that $i=2$ without loss of generality. Suppose that for some segment $S\in\Psi(T_2)$, there exist two non-root vertices $u,v\in V(S)$ and some vertex $x\in G$ such that $(2,\alpha_2(u)),(2,\alpha_2(v))\in\Lambda_1(x)$. According to the two options of sending information to children, we know that both $u$ and $v$ are ancestors of $x$ in $T_1$ (that is, $x\in T_1(u)\cap T_1(v)$) and that $\pi(u)=\pi(v)=1$. Assume that $v\in T_1(u)$ without loss of generality. Since $v$ and $u$ belong to the same segment in $T_2$, they must be in the ancestor-descendant relationship. Since $T_1$ and $T_2$ are independent and $u$ is a non-root ancestor of $v$ in $T_1$, it must be the case that $u\in T_2(v)$. According to the cancellation rules, we must remove $(2,\alpha_2(u))$ from $\Lambda_1(x)$. A contradiction.
\end{proof}

Given this lemma, we now describe the rules for deciding the choice $\pi(v)$. 
\begin{itemize}
    \item Let $\tau$ be a threshold that will be defined later. We say that a segment is \emph{long} if its length exceeds $\tau$; otherwise, it is \emph{short}. For any rooted tree $T$, let $\Long(T)$ be the set of vertices that belong to some long segments in $T$, and let $\Short(T)$ be the set of vertices that belong to some short segments in $T$.
    \item If $v\in\Long(T_1)$, then $\pi(v)=2$. Otherwise, $\pi(v)=1$.
\end{itemize}

\paragraph{Complexity}
Intuitively, when a vertex receives the addresses from a set of vertices, we can bound the number of vertices that come from long segments, since there are not many long segments globally. We can bound the number of vertices that come from short segments, since the total number of vertices in each short segment along the path from any vertex to the root is small. More precisely, the following holds:
\begin{itemize}
    \item If $(1,\alpha_1(v))$ occurs in the certificates of $u$, then $v\in\Long(T_1)$. According to \Cref{lem: seg-uniq}, the number of such $v$ is at most the number of long segments in $T_1$ plus one, which is $O(\frac{n}{\tau})$.
    \item If $(2,\alpha_2(v))$ occurs in the certificates of $v$, then $\pi(v)=1$ according to our rule, and one of the following holds:
    \begin{itemize}
        \item $v\in\Long(T_2)$. In this case, according to \Cref{lem: seg-uniq}, the number of such $v$ is at most the number of long segments in $T_2$ plus one, which is $O(\frac{n}{\tau})$.
        \item Observe that if $v\in\Short(T_2)$ and $\pi(v)=1$, then $v\in\Short(T_1)$ according to our rules. Therefore, $v\in\Short(T_1)\cap P_1(r,u)$. Recall that the number of short segments overlapping with the root-to-$u$ path in $T_1$ is at most $O(\log n)$, and the number of such $v$ is at most $\tau\cdot O(\log n)$. 
    \end{itemize}
\end{itemize}
For each vertex $v$ and each $i \in \{1,2\}$, the set $\Lambda_i(v)$ consists of $O(\frac{n}{\tau}+\tau\cdot O(\log n))$ addresses, each of which need $O(\log^2 n)$ bits to store. Set $\tau=\sqrt{\frac{n}{\log n}}$, we get an $O(\sqrt{n}\log^{2.5}n)$ size bound for our local certification scheme.

    

\subsubsection{Verification algorithm and correctness}\label{sec: 2-indep-spt-vrfy-pf}
\paragraph{Verification algorithm}
Recall that in our certificate assignment, for each $i \in \{1,2\}$, each vertex $v$ receives its own address $(i,\alpha_i(v))$, the choice $\pi(v)$ of which subtree to send its information to, and a list $\Lambda_i(v)$ of the $T_{-i}$-addresses of a set of its $T_i$-ancestors. We define $|\alpha_i(v)|$ to be the number of segment-rank pairs included in $\alpha_i(v)$, and we set $(\seg_i(v),\rank_i(v))$ to be the last segment-rank pair in $\alpha_i(v)$

The verification algorithm is as follows:
\begin{itemize}
    \item The vertices check the consistency between their path-encoding. More precisely, for each $i\in\{1,2\}$, each non-leaf vertex $v$ checks the following:
    \begin{itemize}
        \item (Exactly one child that belongs to the same segment.) There exists exactly one child $u$ satisfying $|\alpha_i(u)|=|\alpha_i(v)|$. Moreover, $\alpha_i(u)$ and $\alpha_i(v)$ differ only in the last segment-rank pair, which satisfies $\seg_i(u)=\seg_i(v)$ and $\rank_i(u)=\rank_i(v)+1$.
        \item (Other children initiate new segments.) For every child $w$ other than $u$, $|\alpha_i(w)|=|\alpha_i(v)|+1$. Moreover, $\alpha_i(w)=\alpha_i(v)\cup\{(\seg_i(w),\rank_i(w))\}$, and $\rank_i(w)=1$.
        

    \end{itemize}
    \item Each vertex checks that the certificate assignments respect our rules. More specifically, for each $i \in \{1,2\}$, each non-leaf vertex $v$ checks the following:
    \begin{itemize}
        \item (The two options.) Each vertex checks that either $\pi(v)=2$ and $\alpha_1(v)\in\Lambda_2(v)$, or $\pi(v)=1$ and $\alpha_2(v)\in\Lambda_1(v)$.
        \item (The cancellation rules.) For each child $x\in T_i(v)$, the lists of the addresses $\Lambda_i(v)$ and $\Lambda_i(x)$ satisfy the following: ($1.$) $x$'s $T_{-i}$-address does not appear in $\Lambda_i(v)$, that is, $\alpha_{-i}(x)\not\in\Lambda_i(v)$; and ($2.$) if there exists $u$ such that $\alpha_{-i}(u)\in\Lambda_i(v)\setminus\Lambda_i(x)$, then $x$ must be a $T_{-i}$-ancestor of $u$ and $\pi(x)=i$. This can be checked by the path encoding in $\alpha_{-i}(u)$ and the $T_{-i}$-address of $x$.
    \end{itemize}
    If any check fails, vertex $v$ outputs \reject.
    \item For each vertex $v$ and each $i \in \{1,2\}$, if there exists a label $(i,\alpha_i(u))$ in the certificate for some $u\neq v$ such that $\eta_i(u)\in A_i(v)$, then output \reject.
    \item The vertex outputs \accept\ if all the above verification passes.
\end{itemize}

\paragraph{Correctness}
Suppose that $T_1$ and $T_2$ are not independent, then there exists vertices $u,v$ such that $v\in T_1(u)\cap T_2(u)$. We assume $\pi(u)=1$ without loss of generality. Let $w$ be the vertex with the largest rank in $T_1$ that satisfies $\pi(w)=1$, $u\in T_2(w)$ and $v\in T_1(w)$. Such a vertex $w$ exists since we can take $w=u$. $w\neq v$ since otherwise $v\in T_1(u)\cap T_2(u)$ and $u\in T_2(w)$ will give us a contradiction to the fact that $T_{-i}$ is a tree. 

We claim that $(2,\alpha_2(w))$ must be added to the set $\Lambda_1(v)$. Suppose not, since $\pi(w)=1$, $(2,\alpha_2(w))\in\Lambda_1(w)$, we can find two vertices $y,z$ along the $w$--$v$ path in $T_1$ such that $y$ is the parent of $z$ and $(2,\alpha_2(w))\in\Lambda_1(y)\setminus\Lambda_1(z)$. According to our verification algorithm, this means that $z$ must be an $T_2$-ancestor of $w$ and $\pi(z)=i$, otherwise $y$ will output \reject. However, $z$ has a larger rank than $w$ in $T_i$, contradicting the way we chose $w$.

Since $v\in T_1(u)\cap T_1(w)$, $v\in T_2(u)$ and $u\in T_2(w)$, we have $v\in T_1(w)\cap T_2(w)$. According to our path encoding, $w$ belongs to $A_2(v)$, and $v$ will detect that $\eta_2(w)\in A_2(v)$ with the certificates $\alpha_2(w)$ and $\alpha_2(v)$, which make $v$ output \reject\ according to our verification algorithm.





\subsection{Lower bound for certifying two independent spanning trees}\label{sec:lower-2indst}
Since the upper bound for $k$-vertex-connectivity is $\Tilde{O}_k(\sqrt{n})$, while the lower bound we get in~\Cref{sec: lower} is $\Omega_k(\log n)$, a natural question to ask is whether we can do better using the independent tree approach. In this section, we show that certifying the independence of two prescribed spanning trees rooted at the same vertex requires $\Omega(\sqrt{n})$ bits. More formally, each vertex in the network receives its parents' IDs in both spanning trees as input, and the task is to verify that the two spanning trees are independent. We construct the lower bound via a standard reduction from the two-party communication complexity of the Set-Disjointness problem. 

\thmIndepST*

In~\Cref{sec: 2partyComm}, we will discuss the background about two-party communication complexity, and in~\Cref{sec: IST-lb-construction}, we will give the proof of the lower bound.

\subsubsection{Two-party communication complexity}\label{sec: 2partyComm}
\paragraph{The Set-Disjointness problem}
Let $x,y\in\{0,1\}^s$ be two $s$-bit strings, each of which can be viewed as an indicator vector of a set. We say that $x$ and $y$ are not disjoint if and only if there exists an index $i\in[s]$ such that $x_i=y_i=1$. The truth value of whether $x$ and $y$ are disjoint is denoted as $\DISJ(x,y)$. The two-party communication problem of Set-Disjointness ($\DISJ$) is defined as follows. Two players called Alice and Bob, are given inputs $x,y\in\{0,1\}^s$ respectively, and they communicate via a deterministic protocol to decide whether $\DISJ(x,y)=\true$ or $\DISJ(x,y)=\false$. The cost of a given protocol is the worst-case number of bits that the players need to communicate when using this protocol. The communication complexity of $\DISJ$ is the minimum cost among all the protocols.

\paragraph{Nondeterministic communication complexity}
In a nondeterministic protocol, each player receives, in addition to its input, an advising string provided by a third party that sees both the inputs of Alice and Bob. We say that the nondeterministic protocol solves some Boolean function $f$ if the following holds:
\begin{itemize}
    \item For each $\true$-instance, there exists an advising string such that the players output \accept.
    \item For each $\false$-instance, for any advising string, some player outputs \reject.
\end{itemize}
The cost of a nondeterministic protocol is the worst-case length of the advising string. The nondeterministic communication complexity of $f$ is the minimum cost among all the nondeterministic protocols that solve $f$.

For example, the nondeterministic communication complexity of $\overline{\DISJ}$ is $O(\log s)$ as we can pick the advising string to be the index $i$ where $x[i]=y[i]=1$. Each player can check that the $i$-th bit of their input is indeed $1$.

In contrast, it is known that the nondeterministic complexity of $\DISJ$ is $\Omega(s)$.
\begin{fact}[{\cite{Kushilevitz_Nisan_1996}}]\label{fact: DISJ-com}
    The nondeterministic communication complexity of $\DISJ(x,y)$ is $\Omega(s)$.
\end{fact}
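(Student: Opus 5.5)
The plan is to prove the bound with the standard fooling-set argument, phrased in terms of combinatorial rectangles. The key structural observation concerns the nondeterministic model defined above. Fix any advising string $a \in \{0,1\}^c$. Alice's decision depends only on $(x,a)$ and Bob's only on $(y,a)$. If the protocol also allows communication, we fold the transcript into the advice, increasing its length by at most the communication cost. Hence the set of input pairs on which both players accept under $a$ has the form $R_a = X_a \times Y_a$, where $X_a$ is the set of $x$ that Alice accepts with advice $a$ and $Y_a$ is the set of $y$ that Bob accepts with advice $a$.

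By soundness, every pair in $R_a$ is a $\true$-instance, so each $R_a$ is a $1$-monochromatic rectangle for $\DISJ$. By completeness, every $\true$-instance lies in some $R_a$. Thus a cost-$c$ protocol yields a cover of $\DISJ^{-1}(\true)$ by at most $2^c$ monochromatic rectangles.

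Next, I will exhibit a large fooling set. Take $F=\{(x,\bar{x}) : x\in\{0,1\}^s\}$, where $\bar{x}$ is the bitwise complement of $x$. Each pair in $F$ is disjoint, so $\DISJ(x,\bar x)=\true$. Now take two distinct pairs $(x,\bar x),(x',\bar{x}')\in F$. Choose an index $i$ with $x_i\neq x'_i$; say $x_i=1$ and $x'_i=0$. Then $\bar{x}'_i=1$, so $x$ and $\bar{x}'$ intersect, i.e., $\DISJ(x,\bar{x}')=\false$. Symmetrically, if $x_i=0$ and $x'_i=1$, then $\DISJ(x',\bar x)=\false$.

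Consequently, no $1$-monochromatic rectangle contains two elements of $F$. If it did, then because it is a product set it would also contain both crossed pairs $(x,\bar x')$ and $(x',\bar x)$, and one of these is a $\false$-instance.

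Combining the two steps, the at most $2^c$ rectangles $R_a$ must cover all $2^s$ elements of $F$, and each rectangle covers at most one of them. Hence $2^c\ge 2^s$, i.e., $c\ge s=\Omega(s)$.

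The only delicate point is the first step: justifying that the accepted set for a fixed advice string is really a rectangle. This is immediate for the model defined here, where each player checks locally. If one allows interaction, it requires the standard fact that the set of inputs consistent with a fixed transcript is a rectangle, and that the transcript can be absorbed into the advice. I expect this reduction to be the main thing to state carefully. The fooling-set computation itself is routine.
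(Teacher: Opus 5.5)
Your argument is correct. The paper gives no proof of this fact and only cites Kushilevitz--Nisan, where the bound is proved by exactly your route: a cost-$c$ nondeterministic protocol yields a cover of $\DISJ^{-1}(\true)$ by $1$-monochromatic rectangles, combined with the fooling set $\{(x,\bar{x})\}$. One small nit: if advising strings may have any length up to $c$, there are up to $2^{c+1}-1$ of them, which gives $c\ge s-1$ rather than $c\ge s$; this does not affect the $\Omega(s)$ bound.
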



\subsubsection{Lower-bound construction}\label{sec: IST-lb-construction}

Let $s$ be an integer. For every instance $(x,y)$ of a Set-Disjointness problem over $s^2$ elements, we define an instance $G_{x,y}$ on $n=(2(s+1)s+5)$ vertices. See \Cref{fig: 2-indep-st-lower} for an illustration.

\begin{itemize}
    \item The underlying graph is constructed as follows:
    \begin{itemize}
        \item The vertex set $V(G)=A\cup B\cup P\cup Q\cup\{v_1,v_2,u_1,u_2,r\}$, where $A=\{a_{i,j} : i,j\in[s]\}$, $B=\{b_{i,j} : i,j\in[s]\}$, $P=\{p_j:j\in[s]\}$, and $Q=\{q_i:i\in[s]\}$.
        \item $A$ forms a grid-like structure where each column is a clique of size $s$: For each $j\in[s]$, let $A_j=\{a_{i,j}:i\in[s]\}$, then $A_j$ induces a $K_s$. For each $i\in[s]$, $\{a_{i,j},a_{i,j+1}\}\in E(G)$ for all $j\in[s-1]$.
        \item $B$ forms a grid-like structure where each row is a clique of size $s$: For each $i\in[s]$, let $B_i=\{b_{i,j}:j\in[s]\}$, then $B_i$ induces a $K_s$. For each $j\in[s]$, $\{b_{i,j},b_{i+1,j}\}\in E(G)$ for all $i\in[s-1]$.
        \item For each $j\in[s]$, the vertex $p_j$ is connected to each vertex in $A_j\cup\{b_{1,j}\}$ with an edge. For each $i\in[s]$, the vertex $q_i$ is connected to each vertex in $B_i\cup\{a_{i,s}\}$ with an edge.
        \item The vertex $v_1$ is connected to each vertex in $P\cup\{v_2,r\}$ with an edge, and $v_2$ is connected to each vertex in $A\cup\{v_1,r\}$ with an edge. The vertex $u_1$ is connected to each vertex in $B\cup Q\cup\{u_2,r\}$, and $u_2$ is connected to each vertex in $Q\cup B_s\cup\{u_1,r\}$.
    \end{itemize}
    \item The two spanning trees on the graph $G_{x,y}$ are constructed as follows:
    \begin{itemize}
        \item For each $i\in[s]$, $(a_{i,j},a_{i,j+1})\in T_1$ for all $i\in[s-1]$ and $(a_{i,s},q_i)\in T_1$. For each $j\in[s]$, $(p_j,v_1)\in T_1$. Also, $(v_1,r)\in T_1$ and $(u_1,r)\in T_1$.
        \item For each $j\in[s]$, $(b_{i,j},b_{i+1,j})\in T_2$ for all $i\in [s-1]$ and $(p_j,b_{1,j}),(b_{s,j},u_2)\in T_2$. For each $i\in[s]$, $(q_i,u_2)\in T_2$. Also, $(v_2,r)\in T_2$ and $(u_2,r)\in T_2$.
        \item For each $i,j\in[s]$, if $x_{i,j}=0$, then $(a_{i,j},v_2)\in T_2$. For each $j\in[s]$, let $A'_j=A_j\cap\{a_{i,j}:x_{i,j}=1\}$. Let $|A'_j|=\nu_j$, for each $t\in[\nu_j]$, denote by $A'[t]$ the $t$-th element in $A'_j$ sorted by the $i$-index. Then $(A'_j[t],A'_j[t-1])\in T_2$ for all $t\in\{2,\ldots,\nu_j\}$ and $(A'_j[1],p_j)\in T_2$.
        \item For each $i,j\in[s]$, if $y_{i,j}=0$, then $(b_{i,j},u_1)\in T_1$. For each $i\in[s]$, let $B'_i=B_i\cap\{b_{i,j}:y_{i,j}=1\}$. Let $|B'_i|=\mu_i$, for each $t\in[\mu_i]$, denote by $B'_i[t]$ the $t$-th element in $B'_i$ sorted by the $j$-index. Then $(q_i,B'_i[1])\in T_1$, $(B'_i[t-1],B'_i[t])\in T_1$ for all $t\in\{2,\ldots,\mu_i\}$, and $(B'_i[\mu_i],u_1)\in T_1$.
    \end{itemize}
\end{itemize}

\begin{figure}
    \centering
    \includegraphics[width=\textwidth]{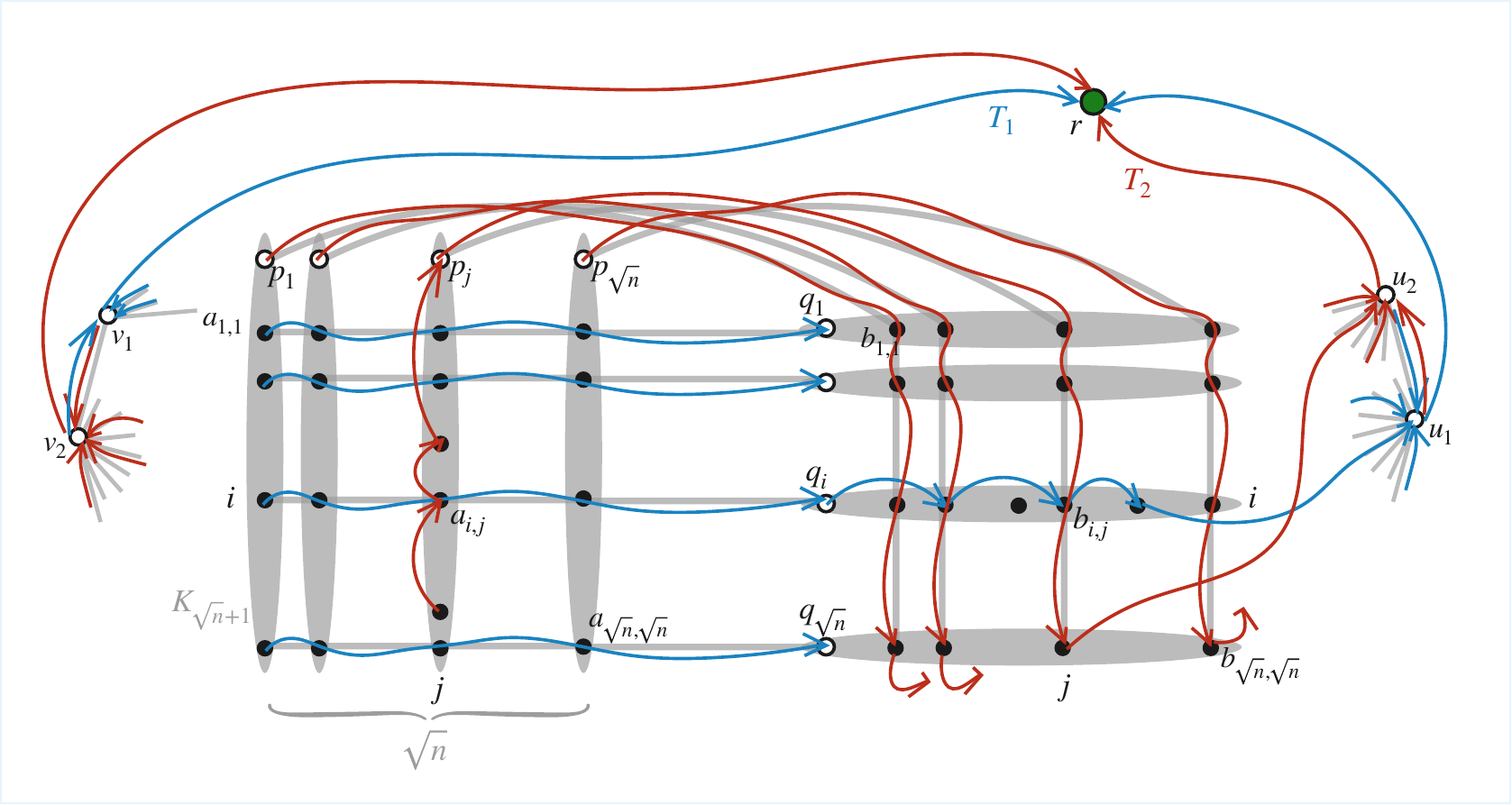}
    \caption{Lower bound construction for certifying two independent spanning trees.}
    \label{fig: 2-indep-st-lower}
\end{figure}

\begin{lemma}\label{lem: DISJ-to-indep-st}
    $T_1$ and $T_2$ are independent if and only if $\DISJ(x,y)=\true$.
\end{lemma}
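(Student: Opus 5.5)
The plan is to read each tree as a parent-pointer structure oriented toward $r$: an arc $(a,b)\in T_\ell$ means $b$ is the parent of $a$. For every non-root vertex $z$ I will write down the two root-to-$z$ paths $P_1(z)$ in $T_1$ and $P_2(z)$ in $T_2$ explicitly, then compare their internal vertices. I will first check that both arc sets really are spanning trees rooted at $r$; this is routine, since every non-root vertex gets exactly one outgoing arc and following parent pointers strictly progresses toward $r$.

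The tree paths for the fixed scaffolding vertices are as follows.
\begin{itemize}
\item For $v_1,v_2,u_1,u_2$, one of the two paths is the single arc to $r$, so independence is trivial.
\item $P_1(p_j)=p_j v_1 r$ and $P_2(p_j)=p_j b_{1,j}\cdots b_{s,j} u_2 r$.
\item $P_1(q_i)=q_i B'_i[1]\cdots B'_i[\mu_i] u_1 r$ and $P_2(q_i)=q_i u_2 r$.
\item If $y_{i,j}=0$, then $P_1(b_{i,j})=b_{i,j}u_1r$. If $y_{i,j}=1$, then $P_1(b_{i,j})$ is the part of the $q_i$-path ending at $b_{i,j}$ reversed, namely it passes through $u_1$, the later elements of $B'_i$ and $b_{i,j}$. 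In both cases $P_1(b_{i,j})$ uses only vertices of row $B_i$ and $u_1$.
\item $P_2(b_{i,j})=b_{i,j}b_{i+1,j}\cdots b_{s,j}u_2r$, which uses only column $j$ of $B$ and $u_2$.
\end{itemize}
For $p_j$, $q_i$ and $b_{i,j}$ the internal vertex sets are therefore disjoint. For $b_{i,j}$, the row $B_i$ and column $j$ meet only in $b_{i,j}$ itself, which is an endpoint, not an internal vertex.

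The decisive case is $z=a_{i,j}$. In $T_1$ we have $P_1(a_{i,j})=a_{i,j}a_{i,j+1}\cdots a_{i,s}\,q_i\,B'_i[1]\cdots B'_i[\mu_i]\,u_1\,r$. In $T_2$ there are two cases.
\begin{itemize}
\item If $x_{i,j}=0$, then $P_2=a_{i,j}v_2r$, which is clearly disjoint from $P_1$ internally.
\item If $x_{i,j}=1$, then $P_2$ runs through the elements of $A'_j$ with smaller row index, then $p_j$, then $b_{1,j},\dots,b_{s,j}$, then $u_2$ and $r$.
\end{itemize}
In the second case, the $A$-parts of the two paths are row $i$ from column $j$ onward and column $j$ above row $i$; these meet only in $a_{i,j}$. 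The $B$-part of $P_1$ is $B'_i$, that is, the entries $b_{i,j'}$ with $y_{i,j'}=1$, while the $B$-part of $P_2$ is all of column $j$. Their intersection is $\{b_{i,j}\}$ if $y_{i,j}=1$ and empty otherwise. The remaining internal vertices, namely $q_i,u_1$ on one side and $p_j,u_2$ on the other, are distinct.

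It follows that $P_1(a_{i,j})$ and $P_2(a_{i,j})$ share an internal vertex exactly when $x_{i,j}=y_{i,j}=1$, in which case they share $b_{i,j}$. Combined with the other cases, $T_1$ and $T_2$ are independent if and only if $\DISJ(x,y)=\true$.

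I expect the main obstacle to be bookkeeping rather than any single idea: confirming that the chain orientations along $A'_j$ and $B'_i$ produce exactly the claimed paths, and not overlooking an intersection such as $P_1(b_{i,j})$ with $P_2(b_{i,j})$ when $y_{i,j}=1$. I would organize the proof around the observation that every $T_1$-path beyond its first vertex stays inside one row of $A$, one row of $B$ and $\{q_i,v_1,u_1\}$. Likewise, every $T_2$-path stays inside one column of $A$, one column of $B$ and $\{p_j,v_2,u_2\}$. A row and a column meet in a single grid cell, and that single-cell overlap handles all cases uniformly.
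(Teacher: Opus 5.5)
Your proposal is correct and follows essentially the paper's argument. The direction in which $x_{i,j}=y_{i,j}=1$ forces both root paths to $a_{i,j}$ through $b_{i,j}$ is identical to the paper's. For the converse, the paper localizes the shared vertex, first to $A\cup B$ and then to $\beta\in B$ with $\alpha\in A$; you instead list $P_1(z)$ and $P_2(z)$ explicitly for every vertex type. Both rest on the same row-versus-column observation, and your version is, if anything, more thorough.

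One point needs to be made explicit in your ``routine'' spanning-tree check. The construction as written gives $q_i$ no $T_1$-parent when $B'_i=\emptyset$. (It also leaves the $T_1$-parents of $v_2,u_2$ and the $T_2$-parents of $v_1,u_1$ unspecified, but that is harmless.) In that case your formula $P_1(q_i)=q_iu_1r$ silently uses the completion $(q_i,u_1)\in T_1$. This is the completion the lemma needs: attaching $q_i$ to $u_2$ or to some $b_{i,j}$ could create spurious common vertices on the root paths to $a_{i,j}$.
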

\begin{proof}
    ($\Rightarrow$) If $\DISJ(x,y)=\false$, then there exists an index $(i,j)$ such that $x_{i,j}=y_{i,j}$. According to our construction of $G_{x,y}$, there exists a path from $a_{i,j}$ to $p_{j}$ and then to $b_{i,j}$ in $T_2$; on the other hand, there is also a path from $a_{i,j}$ to $q_i$ to $b_{i,j}$ in $T_1$. Therefore, the root-to-$a_{i,j}$ path in $T_1$ and $T_2$ intersect at $b_{i,j}$, which implies that $T_1$ and $T_2$ are not independent.

    ($\Leftarrow$) We prove that if $T_1$ and $T_2$ are not independent, then $\DISJ(x,y)=\false$.
    If there exists two distinct non-root vertices $\alpha$ and $\beta$ such that the root-to-$\alpha$ paths in $T_1$ and $T_2$ intersect at $\beta$, then $\beta$ must have one outgoing edge in $T_1$, one outgoing edge in $T_2$, and at least one incoming edge in both $T_1$ and $T_2$ according to our construction of $G_{x,y}$, which implies that $\beta\in A\cup B$. Observe that $\alpha$ should also belong to $A\cup B$. (If $\alpha\in P\cup Q\cup\{v_1,v_2,u_1,u_2,r\}$, then the only possible choice of $\beta$ is $r$, which is not allowed.) It is not possible that both $\alpha,\beta$ belong to $A$: Suppose that $\alpha=a_{i,j}$, then from the $T_1$ perspective, $\beta$ must have a same $i$-index as $\alpha$; from the $T_2$ perspective, $\beta$ must have a same $j$-index as $\alpha$, this would imply $\alpha=\beta$, a contradiction. Similarly, it is impossible that both $\alpha,\beta\in B$. Therefore, we know that $\alpha\in A$ and $\beta\in B$. Suppose that $\alpha=a_{i,j}$ and $\beta=b_{i',j'}$, then $\beta$ is in the root-to-$\alpha$ path in $T_1$ if and only if $i'=i$ and there exists a path from $q_i$ to $\beta$ in $T_1$, which implies that $y_{i',j'}=1$. On the other hand, $\beta$ is in the root-to-$\alpha$ path in $T_2$ if and only if $j'=j$, and there exists a path from $a_{i,j}$ to $p_j$, which implies that $x_{i,j}=1$. In conclusion, we must have $x_{i,j}=1=y_{i',j'}=y_{i,j}$ since $i'=i$ and $j'=j$. 
\end{proof}

\thmIndepST*

\begin{proof}
    Consider any local certification scheme for certifying two independent spanning trees. We construct a nondeterministic protocol for $\DISJ(x,y)$ as follows:
    Alice and Bob are given the input $x$ and $y$, respectively. They then construct $G_{x,y}$ based on their inputs. Also, they are both given an advising string that consists of all the certificates along the cut $(A\cup P\cup\{v_1,v_2\},B\cup Q\cup\{u_1,u_2,r\})$ in $G_{x,y}$. Each player then enumerates all the possible labelings of the vertices that it simulates. If there exists a labeling such that all vertices in its part output \accept, then it outputs \accept.

    According to \Cref{lem: DISJ-to-indep-st}, the protocol computes the output $\true$ of $\DISJ$ correctly -- if $\DISJ(x,y)=\true$, then there exists an advising string along the cut that both Alice and Bob can fill in their part and \accept; on the other hand, if $\DISJ(x,y)=\false$, then there is no such advising string. By \Cref{fact: DISJ-com}, we know that the advising string must have length $\Omega(s^2)$. However, the number of vertices along the cut  $(A\cup P\cup\{v_1,v_2\},B\cup Q\cup\{u_1,u_2,r\})$ is $(4s+3)$, therefore, the complexity of any local certification scheme for certifying two independent spanning trees is $\Omega(\frac{s^2}{4s+3})=\Omega(\sqrt{n})$.
\end{proof}    
    

%% file: hldecomp.tex
\begin{tikzpicture}[
    vertex/.style={
        circle,
        draw=black,
        fill=white,
        line width=1.1pt,
        minimum size=6.4mm,
        inner sep=0pt,
        outer sep=0pt
    },
    vlabel/.style={
        font=\scriptsize\itshape,
        inner sep=0pt,
        outer sep=0pt
    },
    edge/.style={
        draw=black,
        line width=4pt,
        line cap=round
    },
    guide/.style={
        draw=gray!70,
        line width=2pt,
        line cap=round
    }
]

\newcommand{\Vertex}[2]{%
    \node[vertex] at (#1) {};
    \node[vlabel] at (#1) {#2};
}

\coordinate (r)      at (3.30,5.80);
\coordinate (a)      at (3.18,4.75);
\coordinate (v)      at (3.18,3.85);

\coordinate (l1)     at (2.15,4.45);
\coordinate (l0)     at (1.35,3.55);

\coordinate (u1)     at (4.55,4.95);
\coordinate (u2)     at (5.15,4.30);
\coordinate (u3)     at (5.95,4.10);

\coordinate (v1)     at (2.90,2.75);
\coordinate (b)      at (2.75,1.85);
\coordinate (c)      at (1.75,1.45);
\coordinate (d)      at (0.85,1.10);

\coordinate (v3)     at (2.95,0.75);

\coordinate (v2)     at (4.10,3.25);
\coordinate (w)      at (4.45,2.25);

\draw[guide] (l1) .. controls (2.55,4.70) and (2.95,5.15) .. (r);
\draw[guide] (r)  .. controls (3.95,5.70) and (4.25,5.35) .. (u1);
\draw[guide] (v)  .. controls (3.55,3.70) and (3.85,3.40) .. (v2);
\draw[guide] (b)  .. controls (2.80,1.35) and (2.82,1.00) .. (v3);

\draw[edge] (l0) .. controls (1.55,3.80) and (1.85,4.20) .. (l1);

\draw[edge] (r) -- (a);
\draw[edge] (a) -- (v);

\draw[edge] (u1) .. controls (4.78,4.65) and (4.95,4.45) .. (u2);
\draw[edge] (u2) .. controls (5.35,4.05) and (5.62,4.00) .. (u3);

\draw[edge] (v)  .. controls (3.02,3.38) and (2.96,3.05) .. (v1);
\draw[edge] (v1) .. controls (2.84,2.35) and (2.80,2.08) .. (b);
\draw[edge] (b)  .. controls (2.42,1.55) and (2.18,1.45) .. (c);
\draw[edge] (c)  .. controls (1.42,1.46) and (1.08,1.25) .. (d);

\draw[edge] (v2) .. controls (4.35,2.95) and (4.42,2.62) .. (w);

\Vertex{r}{$r$}
\Vertex{a}{}
\Vertex{v}{$v$}

\Vertex{l1}{}
\Vertex{l0}{}

\Vertex{u1}{}
\Vertex{u2}{}
\Vertex{u3}{}

\Vertex{v1}{$v_1$}
\Vertex{b}{}
\Vertex{c}{}
\Vertex{d}{}
\Vertex{v3}{$v_3$}

\Vertex{v2}{$v_2$}
\Vertex{w}{}

\end{tikzpicture}

%% file: lower_v2.tex
\section{Lower bounds for $k$-connectivity for any $k\ge3$}\label{sec: lower}
In this section, we prove that for any $k\ge 3$, certifying $k$-vertex-connectivity and $k$-edge-connectivity both require $\Omega((\log n)/k)$ bits. 

Intuitively, we can find a $k$-(edge or vertex)-connected graph $H$ where there exists a special edge $\{u,v\}$ whose two endpoints are of bounded degree. We select $\poly(n)$ copies of the graph on distinct identifier sets. Suppose that there exists an $o(\log n)$ local certification, then there exist two instances $H_1$ and $H_2$ where the local views of the special edges $\{u_1,v_1\}$ and $\{u_2,v_2\}$ are the same. Then one can apply the cut-and-plug technique to create a new connected graph with doubled size from $H_1$ and $H_2$, where edges $\{u_1,v_1\}$ and $\{u_2,v_2\}$ are ``kicked out'' and $u_1$ is reconnected to $v_2$ while $u_2$ is reconnected to $v_1$. The graph constructed this way always has both vertex and edge connectivity $2$. Moreover, the local view of each vertex remains unchanged if one does not consider the IDs of its neighbors; we handle the neighbor IDs as follows. In~\cite{GS16}, they use a slightly more complicated cut-and-plug technique and prove several $\Omega(\log n)$ lower bounds even when the neighbors' IDs are taken into account. In particular, they show that one can ``glue'' two cycles where all vertices output \accept to get one long cycle that does not satisfy the property where all vertices still output \accept. We adapt their technique to prove the lower bound for certifying connectivity where the local view of each vertex contains the IDs of its neighbors. 

The above idea applies to a wider class of properties, which we formalized in the following lemma. See \Cref{fig: crossing-tech} for a visualization. To describe and prove this lemma, we need some more notations.
A graph property $\Pcal$ is a set of graphs (recall that the property we consider is closed under isomorphism). Let $G=(V,E)$ be a graph, for each vertex $v\in G$, denote by $N(v)$ the neighbors of $v$ in $G$ (not including $v$). We use $\ID(v)$ to represent the identifier of $v$.


\begin{figure}
    \centering
    \begin{subfigure}{0.3\textwidth}
        \centering
        \includegraphics[width=\textwidth]{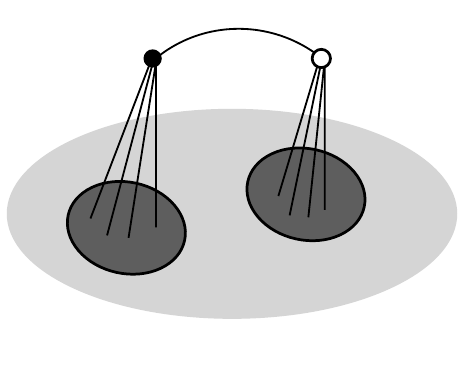}
        \caption{A graph with two special vertices (marked black and white) that are connected with an edge. Both special vertices have a constant number of neighbors.}
        \label{fig: crossing-1}
    \end{subfigure}
    \hfill
    \begin{subfigure}{0.6\textwidth}
        \centering
        \includegraphics[width=\textwidth]{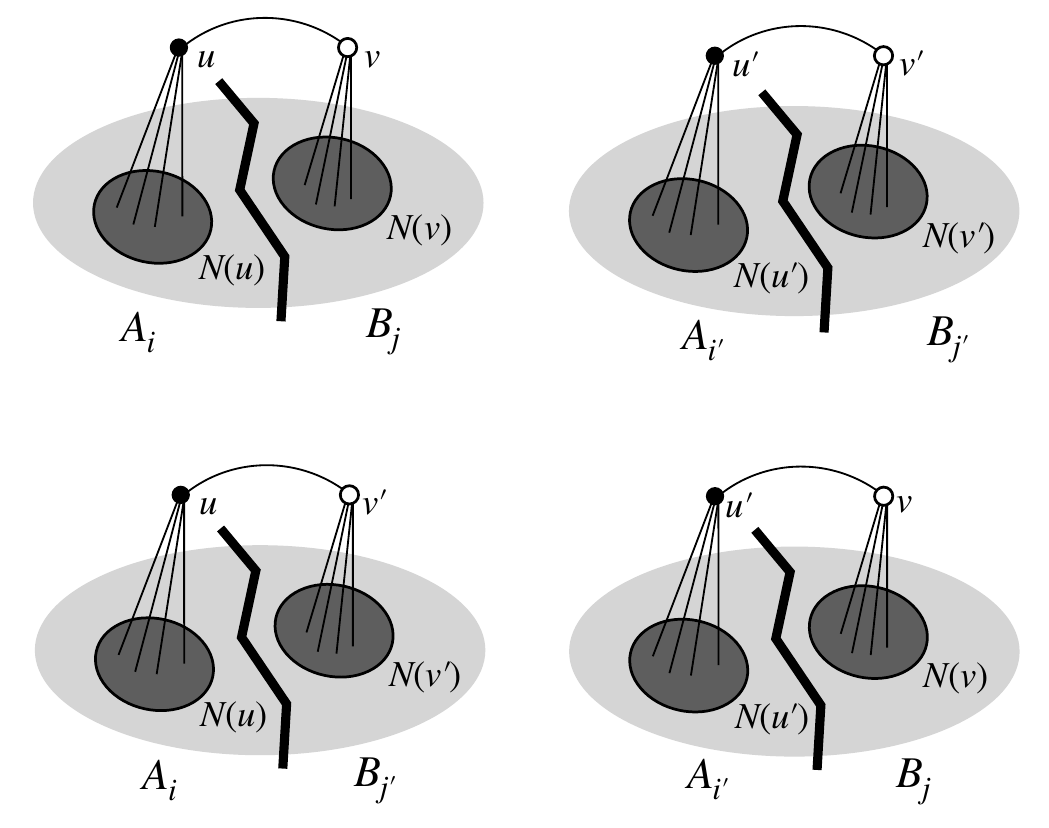}
        \caption{Four copies of the graph in~\Cref{fig: crossing-1} in different sets of identifiers $A_i\cup B_j$, $A_{i'}\cup B_{j'}$, $A_i\cup B_{j'}$, and $A_{i'}\cup B_j$. The copies of the black vertex and its neighbors are always on the $A$'s side, and the copies of the white vertex and its neighbors are always on the $B$'s side.}
        \label{fig: crossing-2}
    \end{subfigure}
    \begin{subfigure}{\textwidth}
        \centering
        \includegraphics[width=0.6\textwidth]{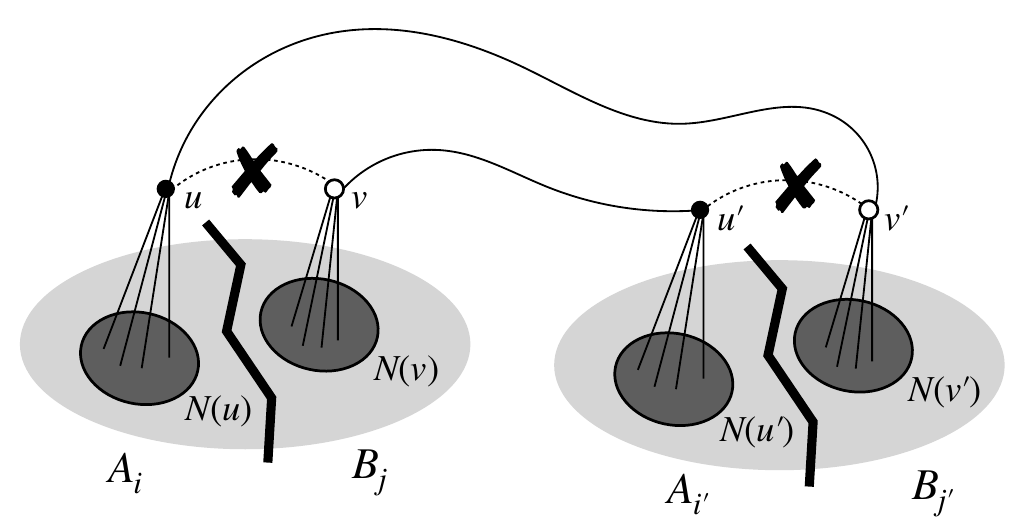}
        \caption{Take the copy with identifier set $A_i\cup B_j$ and the copy with identifier set $A_{i'}\cup B_{j'}$. Do a ``swap'' on the two edges connecting the special vertices in each graph. The local view of every vertex in the resulting graph should be the same as one of the copies in~\Cref{fig: crossing-2}. Therefore, the vertices still output accept, despite the fact that the graph is now $2$-connected.}
        \label{fig: crossing-3}
    \end{subfigure}
    \caption{An illustration of the general $\Omega(\log n)$ lower bound construction.}
    \label{fig: crossing-tech}
\end{figure}

\begin{lemma}\label{lem: general-lower-bound}
    Let $\Gcal\subseteq\Pcal$ be a class of graphs, let $\beta$ be an integer, and let $n_0$ be a constant.
    If for each $n\ge n_0$, there exists a graph $G\in\Gcal$ such that $G$ is of size at least $n$ and satisfies the following properties, then it requires $\Omega((\log n)/\beta)$ bits to certify $\Pcal$.
    \begin{itemize}
        \item [($i.$)] There exists an edge $\{u,v\}\in E(G)$ such that $|N(u)|,|N(v)|\leq\beta$, and $N(u)\cap N(v)=\emptyset$.
        \item [$(ii.$)] For the graph $G'$ built as follows, $G'\notin\Pcal$: Let $G_1$ and $G_2$ be two disjoint copies of $G$, and let $\{u_1,v_1\}\in E(G_1)$ and $\{u_2,v_2\}\in E(G_2)$ be the edges corresponding to $\{u,v\}$ in $G$. Define $G'$ by starting from the vertex-disjoint union $G_1 \uplus G_2$ and performing an edge swap: delete the edges $\{u_1,v_1\}$ and $\{u_2,v_2\}$, and add the edges $\{u_1,v_2\}$ and $\{v_1,u_2\}$. 
    \end{itemize}
\end{lemma}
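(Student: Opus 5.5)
The plan is to run the cut-and-plug argument of \citet{GS16} with a grid of identifier assignments, so that the edge swap of property (ii) is invisible even to a verifier that sees its neighbors' identifiers. Fix $n\ge n_0$ and let $G\in\Gcal$ be the graph given by the hypothesis, with $N\ge n$ vertices and special edge $\{u,v\}$. Set $U=\{u\}\cup N(u)$ and $W=\{v\}\cup N(v)$. These sets are disjoint: $v\notin N(v)$, $u\notin N(u)$, and $N(u)\cap N(v)=\emptyset$ by (i). Also $|U|,|W|\le\beta+1$. Suppose a scheme with labels of $s$ bits exists, and let $m$ be a parameter fixed later. Assign identifiers as follows.
\begin{itemize}
\item Choose pairwise disjoint blocks $A_1,\dots,A_m$, each of size $|U|$, and $B_1,\dots,B_m$, each of size $|W|$.
\item Choose pairwise disjoint blocks $C_{i,j}$ for $i,j\in[m]$, each of size $N-|U|-|W|$.
\item Let $G_{i,j}$ be the copy of $G$ whose identifiers on $U$ are $A_i$ (via a fixed bijection, so $u$ always gets the same element of $A_i$), whose identifiers on $W$ are $B_j$ (again via a fixed bijection), and whose identifiers on the remaining vertices are $C_{i,j}$.
\end{itemize}
All blocks together use $O(m^2N)$ identifiers, which stays polynomial in the size $2N$ of the glued graph as long as $m\le\poly(N)$. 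Since $G_{i,j}\in\Pcal$, completeness gives an accepting labeling $L_{i,j}$ of $G_{i,j}$.

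Next, color each grid cell $(i,j)$ by the restriction of $L_{i,j}$ to $U\cup W$. The number of colors is $K\le 2^{2(\beta+1)s}$. I aim to find a monochromatic rectangle: indices $i\ne i'$ and $j\ne j'$ such that $L_{i,j},L_{i,j'},L_{i',j},L_{i',j'}$ all agree on $U\cup W$. This is a standard counting argument.
\begin{itemize}
\item In each column, some color occurs at least $m/K$ times.
\item Assign to each column the pair of rows carrying two equal entries of that popular color, labeled by the color. There are at most $K\binom{m}{2}$ (row pair, color) combinations.
\item By pigeonhole, once $m>K\binom{m}{2}\big/\binom{m/K}{2}$, two columns receive the same labeled pair. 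This holds for $m=\Theta(K^2)$ up to constants; this is the counting step I would double-check.
\end{itemize}

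Now build $G'$ from $G_{i,j}$ and $G_{i',j'}$ by the swap of property (ii), keeping the labels of $L_{i,j}$ on the first copy and of $L_{i',j'}$ on the second. Identifiers remain distinct because $i\ne i'$ and $j\ne j'$. I then check every vertex's view: its identifier, its label, and its neighbors' labels and identifiers.
\begin{itemize}
\item Vertices outside $\{u_1,v_1,u_2,v_2\}$ have exactly the same neighborhood, identifiers and labels as in their own copy, so they accept.
\item The vertex $u_1$ sees $N(u)$ with identifiers from $A_i$ and labels from $L_{i,j}$. It also sees $v_2$, whose identifier is the fixed element of $B_{j'}$ and whose label comes from $L_{i',j'}$.
\item This is exactly the view of $u$ in $G_{i,j'}$, since the rectangle is monochromatic: $L_{i,j}$ and $L_{i,j'}$ agree on $U$, and $L_{i',j'}$ and $L_{i,j'}$ agree on $v$. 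Hence $u_1$ accepts.
\item Symmetrically, $v_1$ sees the view of $v$ in $G_{i',j}$, and so do $u_2$ (as $u$ in $G_{i',j}$) and $v_2$ (as $v$ in $G_{i,j'}$). All of them accept.
\end{itemize}
Verification is deterministic and depends only on this view, so every vertex of $G'$ accepts. But $G'\notin\Pcal$ by (ii), contradicting soundness.

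The contradiction forces the rectangle not to exist, i.e.\ $m=O(K^2)$ must exceed the polynomial identifier budget. Hence $2^{4(\beta+1)s}\ge N^{\Omega(1)}$, giving $s=\Omega((\log N)/\beta)$, and since the glued graph has $2N$ vertices this is $\Omega((\log n)/\beta)$ for infinitely many sizes. The main obstacle is bookkeeping. First, making the identifier of $v$ depend only on $j$ and that of $u$ only on $i$, together with the disjointness $U\cap W=\emptyset$, is exactly what lets the neighbor identifiers seen across the swapped edges match a legitimate copy. Second, the rectangle-finding step must be quantified carefully so that $m$ stays polynomial.
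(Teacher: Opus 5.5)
Your proposal is correct and follows essentially the paper's argument: identifiers around $u$ indexed by $i$ and around $v$ by $j$, color cell $(i,j)$ by the labels near the special edge, extract a monochromatic rectangle, and swap edges between $G_{i,j}$ and $G_{i',j'}$ so each of the four endpoints sees its view from $G_{i,j'}$ or $G_{i',j}$. The differences are only in bookkeeping: the paper splits all identifiers into halves $A_i\cup B_j$ (rather than using fresh blocks $C_{i,j}$), and it finds the rectangle by passing to a color class of size $N^{5/3}$ in $K_{N,N}$ (its $N$ is $n^c$) and citing the $C_4$ extremal bound instead of double counting. One harmless slip: counting pairs only within each column's popular color class, your condition $m>K\binom{m}{2}/\binom{m/K}{2}$ needs $m=\Theta(K^3)$ rather than $\Theta(K^2)$ (and you must use all such pairs, not one per column), but any $m=K^{O(1)}$ still gives $s=\Omega((\log n)/\beta)$.
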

\begin{proof}
    Given graph $G$ and edge $\{u,v\}\in E(G)$ 
    that satisfy condition (i.) and (ii.).
    Since the property $\Pcal$ is closed under isomorphism, we assume that the $\ID(u)=1,\ID(v)=n$, the IDs of $N(u)$ belong to $\{2,\ldots,|N(u)|+1\}$, and the IDs of $N(v)$ belong to $\{n-|N(v)|,\ldots,n-1\}$.

    Suppose the space of identifiers is $I=\{1, \ldots, n^{1+c}\}$ for some constant $c$. Let $N=n^c$. Consider any partition of $I$ into size-$(n/2)$ subsets $A_1, \ldots, A_N, B_1, \ldots, B_N$. For each identifier set $S$ defined above, we use $S[i]$ to denote the $i$-th identifier in $S$ in the natural order.

    For each pair of $(i,j)\in[N]^2$, we define a graph $C(i,j)$ that is isomorphic to $G$ (and therefore satisfies property $\Pcal$) on the identifier set $A_i\cup B_j$, such that for each $t=1,\ldots,\frac{n}{2}$, $A_i[t]$ is mapped to vertex $t$ in $G$ and $B_j[t]$ is mapped to vertex $\frac{n}{2}+t$ in $G$ under the isomorphism. For convenience, we denote $u_i=A_i[1]$ and $v_j=B_j[n]$.

    Let $c(i,j)$ be the ordered label sequence in the view of $u$ and $v$.
    \[c(i,j)=(A_i[|N(u)|+1],\ldots,A_i[1]=u_i,v_j=B_j[n],\ldots,B_j[n-|N(v)|]).\]
    Suppose that there exists four distinct indices $i,j,i',j'$ such that $c(i,j)=c(i',j')=c(i,j')=c(i',j)$, then we construct a graph $G'$ of size $n$ by gluing the graphs $C(i,j)$ and $C(i',j')$ together: We first remove the edge $(u_i,v_j)$ in $C(i,j)$ and the edge $(u_{i'},v_{j'})$ in $C(i',j')$, and then add two edges $(u_i,v_{j'})$ and $(u_{i'},v_j)$. We assign the certificates to vertices in $G'$ as in $C(i,j)$ and $C(i',j')$. One can make the following observations of $G'$:
    \begin{itemize}
        \item The local view of $u_i$ in $G'$ is the same as its local view in $C(i,j')$ since $c(i,j)=c(i,j')$. 
        \item The local view of $v_j$ in $G'$ is the same as its local view in $C(i',j)$ since $c(i,j)=c(i',j)$. 
        \item The local view of $u_{i'}$ in $G'$ is the same as its local view in $C(i',j)$ since $c(i',j')=c(i',j)$. 
        \item The local view of $v_{j'}$ in $G'$ is the same as its local view in $C(i,j')$ since $c(i',j')=c(i,j')$. 
        \item For all other vertices in $G'$, their local views remain unchanged.
    \end{itemize}
    Therefore, if there exists a certification scheme where all vertices in $C(i,j)$, $C(i,j')$, $C(i',j)$, and $C(i',j')$ output \accept, all the vertices in $G'$ should output \accept under our certification assignment, a contradiction. 

    It remains to show that such indices $(i,j,i',j')$ exist.

    Consider a complete bipartite graph $H=K_{N,N}$ on vertices $A \cup B$. Color the edge $(A[i],B[j])$ with $c(i,j)$. Suppose that we have a $o(\frac{\log n}{\beta})$ certification scheme, the information contains in $c(i,j)$ is less than $\frac{1}{\lambda}\cdot(|N(u)|+|N(v)|+2)\cdot\frac{\log n}{\beta}\leq \frac{4\log n}{\lambda}$ for arbitrary constant $\lambda$ when $n$ sufficiently large. Pick $\lambda=\frac{12}{c}$, the number of different colors is less than $2^{\frac{c\log n}{3}}=N^{\frac{1}{3}}$. Therefore, by the pigeonhole theorem, we have a subset of edges $E'\subseteq E(H)$ of size at least $\frac{N^2}{N^{\frac{1}{3}}}=N^\frac{5}{3}$ with the same color. According to \cite{BS74}, a length-4 cycle can be found in $E'$, which is a monochromatic cycle $A[i], B[j], A[i'], B[j']$ in $H$ for some $i,j,i',j'$, then $c(i,j)=c(i',j')=c(i,j')=c(i',j)$.
\end{proof}


Now we apply \Cref{lem: general-lower-bound} to prove lower bounds for certifying $k$-connectivity.

\thmKConnLower*
\begin{proof}
    Observe that the graph $G'$ constructed in \cref{lem: general-lower-bound} has both vertex-connectivity and edge-connectivity at most $2$. Therefore, property \textit{(ii)} holds, since we assume $k \ge 3$. Therefore, our goal is to find the $k$-edge-connected (or $k$-vertex-connected) graph class $\Gcal\subseteq\Pcal$ that satisfies property \textit{($i.$)} in \Cref{lem: general-lower-bound}. Let $n_0$ be a constant greater than $2k$, for any $n\ge n_0$, we construct a graph $G(n)$ of size at least $n$ which is $k$-edge-connected (or $k$-vertex-connected) as follows: 
    \begin{itemize}
        \item Pick any $k$-edge-connected ($k$-vertex-connected, resp.) graph $H(n)$ of size at least $(n-2)$.
        \item Add two vertices $u,v\not\in H(n)$. Pick two disjoint sets of $(k-1)$ vertices $S_u$ and $S_v$ in $H$. Connect each vertex in $S_u$ to $u$ with an edge, connect each vertex in $S_v$ to $v$ with an edge, and connect $u,v$ with an edge.
    \end{itemize}
    Note that for $k\ge 3$, we can choose $H(n)$ to be a complete graph of size $(n-2)$.
    According to our construction of $G(n)$, $|N(u)|=|N(v)|=k$. Therefore, we get the lower bound by applying \Cref{lem: general-lower-bound} to the graph family $\Gcal=\{G(n):n\ge 2k\}$.
\end{proof}
\begin{remark}
    There exist some choices of $H(n)$ such that $H(n)$ is of degree $O(k)$. Therefore, our lower bound works for bounded-degree graphs when $k$ is a constant. For example, one can consider a graph where we partition $n$ vertices into $\frac{n}{k}$ parts $V_1,\ldots,V_{\frac{n}{k}}$, each consisting of $k$ vertices. For each $i\in[\frac{n}{k}]$, $V_i$ and $V_{i+1}$ induce a complete bipartite graph where all edges connect a vertex in $V_i$ and a vertex in $V_{i+1}$.
\end{remark}


%% file: 2-conn_v2.tex
\section{Certifying $2$-connectivity in the sublogarithmic regime}\label{sec: 2-conn-bdd-deg}
In this section, we study the complexity of certifying $2$-connectivity in the sublogarithmic region.

\subsection{Constant-bit certification schemes for $2$-connectivity in restricted graph classes.}
In this section, we prove three constant-bit local certification schemes for $2$-edge-connectivity or $2$-vertex-connectivity stated in~\Cref{thm: 2-conn}.
\thmTwoConn*

We split~\Cref{thm: 2-conn} into three propositions and prove them separately in the following subsections. In~\Cref{subsec: 2-ec-bdd} (\Cref{thm: 2-edge-conn}), we give a constant-bit local certification scheme for $2$-edge-connectivity in bounded degree graphs. In~\Cref{subsec: 2-vc-bdd} (\Cref{thm: 2-vertex-conn}), we obtain a constant-bit local certification scheme for $2$-vertex-connectivity in bounded degree graphs via a reduction to $2$-edge-connectivity. Finally, in~\Cref{subsec: 2-ec-expansion} (\Cref{thm: 2-ec-bdd-expansion}), we extend the constant-bit certification scheme for $2$-edge-connectivity to bounded-expansion graph classes

\subsubsection{$2$-edge-connectivity in bounded-degree graphs}\label{subsec: 2-ec-bdd}
In this section, we show that $2$-edge-connectivity admits a constant-size local certification scheme on bounded degree graphs.
\begin{restatable}{proposition}{thmTwoEC}\label{thm: 2-edge-conn}
    There exists a local certification scheme for $2$-edge-connectivity on bounded-degree graphs with per-vertex label size $O(\Delta\log\Delta)$, even when the vertices are anonymous.
\end{restatable}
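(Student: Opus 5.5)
We follow the outline from the technique overview: first give an edge-certification scheme with constant-size labels, then convert it to a vertex-certification scheme with \Cref{lem: edge-cert-bdd-deg}. The structural input is a covering statement. Every $2$-edge-connected graph $G$ has a bounded number of (possibly disconnected) Eulerian subgraphs $E_1,\ldots,E_m$, each with all degrees even, such that every edge lies in at least one $E_\ell$. Conversely, if the edge set of a connected graph is covered by even subgraphs, then the graph is bridgeless. The reason is that a bridge $e$ crosses a cut consisting of $e$ alone, while every even subgraph meets every cut in an even number of edges, so no even subgraph can contain $e$. Since $G$ is assumed connected, bridgeless is the same as $2$-edge-connected, and this gives soundness.

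For completeness we need the cover. A bridgeless graph has a nowhere-zero $\mathbb{Z}_2^3$-flow by Jaeger's $8$-flow theorem, and equivalently its edges can be covered by three even subgraphs. The three coordinates of the flow give three even subgraphs $E_1,E_2,E_3$: the $\ell$-th one is the set of edges whose flow value has $\ell$-th coordinate equal to $1$. Nowhere-zero means every edge lies in at least one of them. If we want to avoid citing Jaeger, a weaker constant suffices. Every edge $e=\{a,b\}$ of a bridgeless graph lies on a cycle, and with bounded degree we can group these cycles into $O(\Delta)$-ish classes via a greedy argument. The flow theorem is much cleaner, though, so I would use it. It also gives $m=3$, independent of $\Delta$.

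The edge-certification scheme is as follows. Each edge receives a $3$-bit label recording which of $E_1,E_2,E_3$ contain it. Each vertex checks two things: the label of each incident edge is nonzero, and for each $\ell\in\{1,2,3\}$ the number of incident edges whose $\ell$-th bit is set is even. No identifiers are used, so the scheme works in the anonymous model. Completeness follows from the flow cover. Soundness follows from the cut-parity argument above: if $e$ is a bridge with sides $X,Y$, then some $E_\ell$ contains $e$, and summing degrees in $E_\ell$ over $X$ shows that $|E_\ell\cap\delta(X)|$ is even. But $\delta(X)=\{e\}$, so $|E_\ell\cap\delta(X)|=1$, a contradiction.

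Finally, we apply \Cref{lem: edge-cert-bdd-deg} with $f=O(1)$. This gives vertex labels of size $O(\Delta(1+\log\Delta))=O(\Delta\log\Delta)$. We need to check that the transformation there is anonymous. It uses a $2$-hop coloring with $O(\Delta^2)$ colors, i.e.\ $O(\log\Delta)$ bits, and this coloring is verifiable locally without identifiers: each vertex checks that its neighbors have pairwise distinct colors that also differ from its own. Each vertex stores, for each incident edge, the neighbor's color together with the edge label. Each endpoint then checks that its copy of the edge label agrees with the copy stored by the other endpoint, which it can locate through its own color. This consistency check is the one step I expect to need care. The main conceptual obstacle is the cover, and it is resolved by the $8$-flow theorem.
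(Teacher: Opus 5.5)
Your proof is correct and follows the paper's route: cover the edges by three even subgraphs (the paper cites this as \Cref{lem: Eulerian-cover}; your derivation from Jaeger's nowhere-zero $\mathbb{Z}_2^3$-flow theorem is the same fact), certify with $3$-bit edge labels checked by nonzero-label and per-vertex degree-parity tests, and convert to vertex labels of size $O(\Delta\log\Delta)$ via \Cref{lem: edge-cert-bdd-deg}; your cut-parity argument spells out the soundness direction that the paper folds into the ``if and only if'' of the covering lemma, and your check that the $2$-hop-coloring conversion uses no identifiers justifies the anonymous claim. The aside about greedily grouping cycles into $O(\Delta)$ classes is not substantiated, since chosen cycles can be long and overlap heavily, so no conflict bound is evident; you do not rely on it.
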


Our upper bound exploits the following lemma on Eulerian subgraph covering. 
A graph is Eulerian if and only if every vertex has even degree. Given a graph $G$, we say that an edge $e$ is a \emph{bridge} of $G$ if removing $e$ increases the number of connected components in $G$. A graph is \emph{bridgeless} if it does not contain a bridge. It is known that the edges of a bridgeless graph can be covered by three (possibly disconnected) Eulerian subgraphs.
\begin{lemma}[\cite{AT85}]\label{lem: Eulerian-cover}
    An undirected graph $G$ is $2$-edge-connected if and only if its edge set $E(G)$ can be covered by three (possibly disconnected) Eulerian subgraphs $(V(G),C_i)$ for $i \in [3]$; that is, $C_1 \cup C_2 \cup C_3 = E(G)$.
\end{lemma}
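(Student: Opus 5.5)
The plan is to prove the two directions separately. The ``if'' direction is a parity argument. The ``only if'' direction goes by induction on the number of edges: we contract edges that lie in $2$-edge cuts until the graph is $3$-edge-connected, and then invoke the Nash-Williams--Tutte tree-packing theorem. As everywhere in the paper, $G$ is assumed connected, and we allow multigraphs (loops, parallel edges) during the induction.

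($\Leftarrow$) Suppose $C_1\cup C_2\cup C_3=E(G)$ with every $(V,C_i)$ even. Take any nontrivial cut $(S,V\setminus S)$ and any edge $e$ crossing it; $G$ is connected, so such an edge exists. Pick $i$ with $e\in C_i$. Since every vertex has even $C_i$-degree, summing degrees over $S$ shows $|C_i\cap\delta(S)|$ is even. Hence $C_i$ contains a second edge crossing the cut. So every cut has size at least $2$, and by Menger's theorem $G$ is $2$-edge-connected.

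($\Rightarrow$) \emph{Cotree step.} Two standard facts will be used. First, for any spanning tree $T$ and any edge set $F$, there is an even subgraph $C$ with $C\setminus T=F\setminus T$: add each non-tree edge's fundamental cycle mod $2$, or equivalently fix parities by a $T$-join inside $T$. Second, every edge of $E\setminus T$ then lies in $C$. Applying this with $F=E$, if $T_1,T_2,T_3$ are spanning trees with $T_1\cap T_2\cap T_3=\emptyset$, we get even $C_i\supseteq E\setminus T_i$. Every edge misses some $T_i$, so it lies in some $C_i$, and $C_1\cup C_2\cup C_3=E$.

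\emph{Existence of the trees for $3$-edge-connected $G$.} Trees with $T_1\cap T_2\cap T_3=\emptyset$ correspond exactly to three edge-disjoint spanning trees in $2G$, the graph in which every edge is doubled. By Nash-Williams--Tutte, it suffices that every partition of $V$ into $p$ parts is crossed by at least $3(p-1)$ edges of $2G$. In a $3$-edge-connected $G$, each part has at least $3$ crossing edges. So $G$ has at least $3p/2$ crossing edges, and $2G$ has at least $3p\ge 3(p-1)$.

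\emph{Reduction from bridgeless to $3$-edge-connected.} Induct on $|E|$. If $G$ is bridgeless but has a $2$-edge cut $\{e,f\}$ where $e$ is not a loop, contract $e$. The graph $G/e$ is still bridgeless, since contraction never decreases cut sizes, so by induction it has a cover $C'_1,C'_2,C'_3$. Lift it by setting $C_i=C'_i\cup\{e\}$ if $f\in C'_i$ and $C_i=C'_i$ otherwise. The key check is that $C_i$ is even in $G$. Parities at vertices other than the endpoints of $e$ are unchanged. At the endpoints, the cut $\{e,f\}$ and the cut-parity argument above force $e$ and $f$ to be in $C_i$ together or not at all. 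Hence the side of the cut containing one endpoint of $e$ has even total degree, and this fixes the parity at both endpoints. Edge $e$ is covered because $f$ is covered. Loops can be put into any single $C_i$, and the base cases (one vertex, or $3$-edge-connected) are handled by the cotree step.

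I expect the main obstacle to be the parity bookkeeping in this contraction step. In particular, I must check that evenness at each endpoint of $e$ really follows from the $2$-edge-cut condition in the multigraph setting, and that the induction measure still decreases when parallel edges and loops appear.
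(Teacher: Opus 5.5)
Your proof is correct, and both concerns you raise at the end are resolved. The paper gives no proof of this lemma. It simply cites \cite{AT85}, so there is no in-paper argument to match. What you have written is the classical argument for Jaeger's $8$-flow theorem:
\begin{enumerate}
\item reduce to the $3$-edge-connected case by contracting an edge of a $2$-edge cut;
\item obtain three spanning trees with empty common intersection by applying Nash-Williams--Tutte to the $6$-edge-connected multigraph $2G$;
\item correct parities inside each tree.
\end{enumerate}
Compared with the bare citation, your route makes the section self-contained apart from Nash-Williams--Tutte. It also shows that your cotree step is exactly the fact the paper later imports as \cref{lem: odd-matching} from \cite{JAEGER1979205}. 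With $F=E$, your even subgraph is $E\setminus F'$ for some forest $F'\subseteq T$.

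There are two points of presentation.

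\textbf{Lifting step.} The edges $e$ and $f$ lie in $C_i$ together or not at all \emph{by your definition} of $C_i$, not because a parity argument forces it. The evenness at an endpoint $u$ of $e$ then follows directly. Let $S$ be the side of the cut $\delta(S)=\{e,f\}$ that contains $u$.
\begin{itemize}
\item Every vertex of $S\setminus\{u\}$ has the same $C_i$-degree as in $G/e$, so its degree is even.
\item The sum $\sum_{x\in S}\deg_{C_i}(x)=2|C_i\cap E(S)|+|C_i\cap\{e,f\}|$ is even, because $|C_i\cap\{e,f\}|\in\{0,2\}$.
\end{itemize}
Hence $\deg_{C_i}(u)$ is even, and the same argument on the other side handles the other endpoint. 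This works even when $f$ is parallel to $e$ or shares an endpoint with it. The induction measure also strictly decreases, because contracting $e$ deletes exactly one edge and keeps any new loops or parallel edges.

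\textbf{Connectivity.} The connectivity assumption you state is genuinely needed for the ($\Leftarrow$) direction. Two disjoint triangles are covered by a single even subgraph but are not $2$-edge-connected.
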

Our first observation is that $2$-edge-connectivity admits a constant-bit edge-certification scheme via a simple application of~\Cref{lem: Eulerian-cover}.

\begin{restatable}{lemma}{lemTwoECedge}\label{lem: 2-ec-EdgeCert}
    There exists an $O(1)$-bit edge-certification scheme for $2$-edge connectivity.
\end{restatable}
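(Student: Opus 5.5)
The plan is to apply \Cref{lem: Eulerian-cover} directly. In the completeness direction, suppose $G$ is $2$-edge-connected. The lemma gives three Eulerian subgraphs $C_1,C_2,C_3$ with $C_1\cup C_2\cup C_3=E(G)$. The prover labels each edge $e$ with the $3$-bit vector $\chi(e)\in\{0,1\}^3$, where $\chi(e)_i=1$ if and only if $e\in C_i$. Vertex labels are left empty.

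The verifier at each vertex $v$ performs two local checks. First, every incident edge $e$ has $\chi(e)\neq 000$, which says the three classes cover every edge. Second, for each $i\in[3]$, the number of incident edges $e$ with $\chi(e)_i=1$ is even. Both checks read only the labels of incident edges, so the scheme is anonymous and uses $3$ bits per element. Completeness is immediate: every vertex has even degree in each $C_i$ because $C_i$ is Eulerian, and the three classes cover $E(G)$.

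For soundness, fix any edge labeling that all vertices accept, and set $C_i=\{e:\chi(e)_i=1\}$. The parity checks at every vertex say each $(V(G),C_i)$ has all degrees even, so each is Eulerian. The nonzero checks say $C_1\cup C_2\cup C_3=E(G)$. The converse direction of \Cref{lem: Eulerian-cover} then gives that $G$ is $2$-edge-connected.

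If one does not want to rely on that converse, it can be argued directly. Suppose an edge $e=\{x,y\}$ is a bridge, and let $C_i$ be a class containing $e$. The bridge splits $G$ into two sides, and the side $X$ containing $x$ is a vertex set whose boundary in $G$ is exactly $e$. In $C_i$, the degree sum over $X$ equals twice the number of $C_i$-edges inside $X$ plus $1$ for $e$, which is odd. This contradicts the fact that every vertex has even degree in $C_i$.

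Since $G$ is assumed connected throughout the paper, having no bridges is exactly $2$-edge-connectivity. I expect no real obstacle here; the only care needed is to state that the parity check is made per coordinate and that the ``covering'' condition is enforced locally by the nonzero check at either endpoint of each edge.
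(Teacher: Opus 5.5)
Your proposal is correct and matches the paper's proof: you label each edge by its membership in the three Eulerian subgraphs from \Cref{lem: Eulerian-cover}, and each vertex checks, per color, that the number of incident edges carrying that color is even and that every incident edge carries at least one color. Your direct parity argument, that a bridge cannot lie in a subgraph where every degree is even, is a self-contained substitute for citing the converse direction of the cover lemma; it is the same observation the paper later uses in \Cref{claim: bridge-preserving-subgraph}.
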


\begin{proof}
    For each $i\in[3]$, color all the edges in $C_i$ with color $i$. Each vertex then checks that ($1.$) for each color $i\in[3]$, the number of incident edges that are colored $i$ is even, and ($2.$) every edge is colored by at least a color in $[3]$.
\end{proof}

\Cref{thm: 2-edge-conn} then follows immediately by a transformation from edge-certification to vertex-certification.
\begin{proof}[Proof of~\Cref{thm: 2-edge-conn}]
    According to \Cref{lem: edge-cert-bdd-deg}, we can transform this $O(1)$-bit edge-certification scheme to an $O(\Delta\log\Delta)$-bit vertex certification scheme for graphs with max-degree $\Delta$.
\end{proof}


\subsubsection{$2$-vertex-connectivity in bounded degree graphs}\label{subsec: 2-vc-bdd}
In this section, we show that $2$-vertex-connectivity can be certified with $O(1)$ bits on bounded degree graphs via a reduction to $2$-edge-connectivity.
\begin{restatable}{proposition}{thmTwoVC}\label{thm: 2-vertex-conn}
    There exists a local certification scheme for $2$-vertex-connectivity on bounded-degree graphs with per-vertex label size $O(2^{\Delta}\cdot\Delta\log\Delta)$, even when the vertices are anonymous.
\end{restatable}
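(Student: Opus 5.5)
The plan is to use a separate Eulerian-cover certificate for each of the $2^{\Delta}$ ways a vertex can split its neighbourhood, rather than certifying $2$-edge-connectivity of a single graph. As in \Cref{lem: edge-cert-bdd-deg}, I would build an edge-certification scheme with edge labels of size $O(2^{\Delta})$ and then convert it to vertex labels.

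\textbf{Split graphs.} The prover first gives a $2$-hop colouring with $O(\Delta^2)$ colours. From it, every vertex $v$ gets a canonical local port numbering of its incident edges, and hence a canonical list $\sigma_1(v),\dots,\sigma_{m}(v)$, with $m\le 2^{\Delta}$, of all bipartitions $N(v)=X\uplus Y$ into two nonempty parts. Indices beyond the number of bipartitions $v$ actually has are declared void at $v$.

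For each index $s\in[2^{\Delta}]$, let $G^{(s)}$ be obtained from $G$ by splitting every vertex $v$ simultaneously according to $\sigma_s(v)$. That is, $v$ becomes two vertices $v_X$ and $v_Y$ joined by a new \emph{split edge}, and $v_X$ (resp.\ $v_Y$) receives the original edges to $X$ (resp.\ $Y$). Vertices with void index $s$ are left unsplit.

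\textbf{The certificate.} For every $s$, the prover gives three even subgraphs $C^{(s)}_1,C^{(s)}_2,C^{(s)}_3$ of $G^{(s)}$ that cover $E(G^{(s)})$. Each original edge carries its $3\cdot 2^{\Delta}$ membership bits. Each vertex $v$ carries, for each $s$, the three bits saying whether its own split edge is in $C^{(s)}_i$.

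Vertex $v$ then checks two things for every $s$ and $i$, using its port numbering to know which incident edges lie in $X$. First, both $v_X$ and $v_Y$ have even degree in $C^{(s)}_i$, counting the split edge. Second, every incident edge and the split edge are covered by at least one of the $C^{(s)}_i$. This is an $O(2^{\Delta})$-bit edge certification together with an $O(2^\Delta+\log\Delta)$-bit vertex part, so by \Cref{lem: edge-cert-bdd-deg} the total is $O(2^{\Delta}\Delta\log\Delta)$ bits. All of this is anonymous.

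\textbf{Soundness.} Suppose $v$ is a cut vertex. Let $K$ be one component of $G-v$ and $X=N(v)\cap K$, so that $(X,N(v)\setminus X)$ is some $\sigma_s(v)$. In $G^{(s)}$, consider the vertex set $K^\ast=\{v_X\}\cup\bigcup_{w\in K}\{w\text{ and its split halves}\}$. Every split edge of a vertex $w\in K$ is internal to $K^\ast$. Hence the only edge leaving $K^\ast$ is the split edge $v_Xv_Y$.

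For an even subgraph $C^{(s)}_i$, the number of its edges leaving $K^\ast$ is even, because the sum of degrees over $K^\ast$ is even. So $v_Xv_Y\notin C^{(s)}_i$ for every $i$, and the covering check fails at $v$. Connectivity of $G$ is assumed, so this rules out every non-$2$-vertex-connected instance (for $n\ge 3$).

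\textbf{Completeness, the main step.} I must show that $G^{(s)}$ is bridgeless whenever $G$ is $2$-vertex-connected. Then \Cref{lem: Eulerian-cover}, applied to each component of $G^{(s)}$ (being bridgeless is the real hypothesis there), provides the three covering Eulerian subgraphs.

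\begin{itemize}
\item \emph{Original edges.} An original edge $ab$ lies on a cycle $Z$ of $G$. Lift $Z$ to $G^{(s)}$: at each vertex $w$ of $Z$, if $Z$ enters and leaves through different halves, insert the split edge $w_Xw_Y$. The result is a cycle through $ab$.
\item \emph{Split edges.} For the split edge of $v$ with $\sigma_s(v)=(X,Y)$, pick $x\in X$ and $y\in Y$. Since $G-v$ is connected, there is an $x$--$y$ path in $G-v$, and together with $v$ it forms a cycle through $vx$ and $vy$. Its lift must use $v_Xv_Y$.
\end{itemize}

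The delicate point is that the lifted walks really are cycles. Each vertex is visited once by $Z$ and its halves are distinct vertices, so they are. This bridgelessness argument is where I expect the main care to be needed.
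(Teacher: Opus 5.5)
Your proposal is correct and follows essentially the paper's route. Like the paper, you split every vertex into an edge along each of its $O(2^{\Delta})$ neighbourhood bipartitions and observe that all split graphs are bridgeless iff $G$ is $2$-vertex-connected; the key point is that a cut vertex's split edge is a bridge for the right bipartition, however the other vertices split. You then certify every split graph in parallel with the three-Eulerian-subgraph cover. The differences are presentational:
\begin{itemize}
\item You prove bridgelessness by lifting cycles, where the paper uses a bridge-cut contradiction.
\item You combine the copies at the edge-label level, so no vertex needs its neighbours' partitions, and your bound even improves to $O(2^{\Delta}\Delta)$. The paper instead simulates the vertex certificates of $v_0,v_1$ in each $G_i$.
\end{itemize}
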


The intuition is to transform the graph $G$ into a set of graphs by making the vertices of $G$ into edges and reconnecting with its neighborhood in a way such that $G$ is $2$-vertex-connected if and only if all the transformed graphs are $2$-edge-connected.

\paragraph{Transforming a $2$-vertex-connected graph into a $2$-edge-connected graph}
Given a $2$-vertex-connected graph $G=(V,E)$, we construct a graph $G'=(V',E')$ by the following operation:
\begin{itemize}
    \item For each vertex $v\in G$, make it into a super vertex $\psi(v)=\{v_0,v_1\}$. Add the vertices $v_0,v_1$ to $V'$, and add an edge $\{v_0,v_1\}$ to $E'$.
    \item For each vertex $v\in G$, partition $N(v)$, the neighborhood of $v$, into two non-empty subsets $N_0(v),N_1(v)$.
    \item For each pair of vertices $\{u,v\}\in E$ and $i,j\in\{0,1\}$, add an edge $\{u_i,v_j\}$ to $E'$ if and only if $v\in N_i(u)$ and $u\in N_j(v)$.
\end{itemize}

According to the rules of how we construct $G'$, we have the following observation:
\begin{observation}\label{obs: 2-ve-to-2-ec}
    For each pair of vertices $u,v\in G$, $\{u,v\}\in E$ if and only if there exists some edge $\{u_i,v_j\}\in E'$ between $\psi(u)$ and $\psi(v)$.
\end{observation}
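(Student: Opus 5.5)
The statement is Observation~\ref{obs: 2-ve-to-2-ec}, and the plan is to read it off directly from the third rule of the construction of $G'$. The edge sets between super vertices are created only by that rule, so both directions reduce to checking when some pair $(i,j)\in\{0,1\}^2$ satisfies $v\in N_i(u)$ and $u\in N_j(v)$.

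\emph{Forward direction.} Suppose $\{u,v\}\in E$. Then $v\in N(u)$, and since $N_0(u),N_1(u)$ partition $N(u)$, there is a unique $i$ with $v\in N_i(u)$. Symmetrically, $u\in N(v)$ gives a unique $j$ with $u\in N_j(v)$. The construction rule then adds $\{u_i,v_j\}$ to $E'$, which is an edge between $\psi(u)$ and $\psi(v)$. In fact exactly one such edge is added.

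\emph{Backward direction.} Suppose some edge $\{u_i,v_j\}\in E'$ joins $\psi(u)$ and $\psi(v)$ with $u\neq v$. The only edges of $E'$ not produced by the third rule are the internal edges $\{w_0,w_1\}$, and these lie inside a single super vertex, so $\{u_i,v_j\}$ must come from the third rule. That rule is applied only to pairs $\{u,v\}\in E$; moreover, $v\in N_i(u)\subseteq N(u)$ already forces $\{u,v\}\in E$.

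There is no real obstacle here. The only point requiring care is noting that the internal edges $\{v_0,v_1\}$ never join distinct super vertices, and that the partition property makes the indices $i,j$ exist.
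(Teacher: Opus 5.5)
Your proof is correct and matches the paper, which states the observation without proof as immediate from the construction rules. Your argument (unique $i,j$ from the partitions of $N(u)$ and $N(v)$ for the forward direction; for the backward direction, internal edges $\{w_0,w_1\}$ never join distinct super vertices, so any such edge comes from the third rule, which requires $v\in N(u)$) is exactly that intended reasoning, made explicit.
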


We claim that if the original graph $G$ is $2$-vertex-connected, then the resulting graph $G'$ is $2$-edge-connected.
\begin{claim}\label{claim: 2-conn-1}
    If $G$ is $2$-vertex-connected, then $G'$ is $2$-edge-connected.
\end{claim}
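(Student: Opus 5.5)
Suppose, for contradiction, that $G'$ has a bridge $e$. Since $G$ is connected, $G'$ is connected. Indeed, by \Cref{obs: 2-ve-to-2-ec} every edge $\{u,v\}$ of $G$ yields an edge of $G'$ between $\psi(u)$ and $\psi(v)$, and each $\psi(v)$ is internally connected by $\{v_0,v_1\}$. So it suffices to show that no edge of $G'$ is a bridge, and we treat two cases according to the type of $e$.

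\emph{Case 1: $e=\{v_0,v_1\}$ is an internal edge.} Removing $e$ would separate $v_0$ from $v_1$. Since $N_0(v)$ and $N_1(v)$ are both nonempty, pick $a\in N_0(v)$ and $b\in N_1(v)$. By definition, $v_0$ is adjacent to some vertex of $\psi(a)$ and $v_1$ to some vertex of $\psi(b)$. Because $G$ is $2$-vertex-connected, $G-v$ is connected, so there is an $a$--$b$ path in $G-v$. We lift this path to $G'$: consecutive vertices $x,y$ on the path give an edge between $\psi(x)$ and $\psi(y)$ by \Cref{obs: 2-ve-to-2-ec}, and inside each $\psi(x)$ we may traverse $\{x_0,x_1\}$. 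The lifted walk avoids $\psi(v)$ entirely, so together with the two attaching edges it joins $v_0$ to $v_1$ without using $e$. Hence $e$ is not a bridge.

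\emph{Case 2: $e=\{u_i,v_j\}$ is a cross edge coming from $\{u,v\}\in E$.} We reconnect $u_i$ and $v_j$ in $G'-e$. The natural plan is to use that $\{u,v\}$ lies on a cycle of $G$. In a $2$-vertex-connected graph with at least $3$ vertices, $G-\{u,v\}$ edge-removed still has a $u$--$v$ path $P$ avoiding the edge $uv$; such a path exists since $G$ is $2$-edge-connected as well. Lift $P$ to a $\psi(u)$--$\psi(v)$ walk in $G'$ that avoids $e$: its first edge leaves $\psi(u)$ via some $u_{i'}$ and its last enters $\psi(v)$ at some $v_{j'}$. We then close up using the internal edges $\{u_i,u_{i'}\}$ and $\{v_j,v_{j'}\}$ if needed. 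The lifted walk uses only cross edges corresponding to edges of $P$, and these are different from $uv$, so it avoids $e$. Thus $e$ is not a bridge. (The degenerate case $|V(G)|=2$ is excluded because $2$-vertex-connectivity requires at least $3$ vertices; otherwise it is handled by convention.)

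The main subtlety is Case 1, which is where $2$-\emph{vertex}-connectivity (rather than edge-connectivity) is genuinely used: the lifted path must avoid $\psi(v)$ entirely, which requires a path in $G-v$. It is also the only place where nonemptiness of both $N_0(v)$ and $N_1(v)$ matters. Case 2 needs only that no edge of $G$ is a bridge. The lifting step is routine but should be stated once as a general remark: any walk in $G$ lifts to a walk in $G'$ that stays inside the $\psi$-images of its vertices and uses only cross edges corresponding to its edges.
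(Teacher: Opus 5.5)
Your proof is correct and follows the same case split as the paper (the putative bridge is either an internal edge $\{v_0,v_1\}$ or a cross edge), using the same ingredients. The paper argues by contradiction: it projects the two sides of the putative bridge onto a vertex partition of $G$ and reads off a cut vertex, whereas you argue directly by lifting an $a$--$b$ path of $G-v$, respectively a $u$--$v$ path of $G$ minus the edge $\{u,v\}$, to a detour in $G'$; you also check explicitly that $G'$ is connected, which the paper leaves implicit.
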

\begin{proof}
    Suppose that $G'$ is not $2$-edge-connected, then there exists a bridge $e=\{x,y\}\in E'$ that splits the graph into two connected components $X$ and $Y$ where $x\in X,y\in Y$. Suppose that $x\in \psi(u)$ and $y\in \psi(v)$ for some $u,v\in V$, then for each $w\in V$, $w\not\in\{u,v\}$ the two vertices in $\psi(w)\subseteq V'$ must belong to the same side of the cut $(X,Y)$. Let $S=\{s\in V:\psi(s)\subseteq X\}$ and $T=\{t\in V:\psi(t)\subseteq Y\}$. If $u\neq v$, then according to \Cref{obs: 2-ve-to-2-ec}, there is no edge between $S\cup\{u\}$ and $T\cup\{v\}$ except for $\{u,v\}$. Then removing either $u$ or $v$ will disconnect the graph. Otherwise, $u=v$ and $x=u_0,y=u_1$, then there is no edge between $S$ and $T$. Note that in this case, $S$ and $T$ will be non-empty since we have partitioned $N(u)$ into two non-empty sets. Thus, removing $u$ will disconnect the graph. In both cases, we arrive at a contradiction to the fact that $G$ is $2$-vertex-connected. 
\end{proof}

\begin{claim}\label{claim: 2-conn-2}
    If $G$ has a vertex cut $\{v\}$, then there exists a partition of $N(v)$ such that $\{v_0,v_1\}$ is a bridge in $G'$, no matter how we partition the neighborhoods of other vertices.
\end{claim}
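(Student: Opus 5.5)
Since $\{v\}$ is a vertex cut, $G-v$ is disconnected. The plan is to choose the partition of $N(v)$ so that it follows this disconnection, and then to check that every edge of $G'$ other than $\{v_0,v_1\}$ stays on one side of the resulting cut.

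Concretely, let $C_1,\dots,C_m$ with $m\ge 2$ be the connected components of $G-v$. Each $C_\ell$ contains at least one neighbor of $v$, because $G$ is connected. Set $N_0(v)=N(v)\cap C_1$ and $N_1(v)=N(v)\setminus C_1$. Both parts are nonempty: $C_1$ contains a neighbor of $v$, and so does $C_2\subseteq V\setminus(C_1\cup\{v\})$. Partition the neighborhoods of all other vertices arbitrarily. Now define
\[
X=\{v_0\}\cup\bigcup_{w\in C_1}\psi(w),\qquad Y=V'\setminus X=\{v_1\}\cup\bigcup_{w\notin C_1\cup\{v\}}\psi(w).
\]

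The main step is to show that the only edge of $G'$ between $X$ and $Y$ is $\{v_0,v_1\}$. We split into cases by the type of edge in $E'$.
\begin{itemize}
\item An internal edge $\{w_0,w_1\}$ with $w\neq v$ lies inside $\psi(w)$, which is entirely in $X$ or entirely in $Y$.
\item An edge $\{w_i,w'_j\}$ with $w,w'\neq v$ comes, by \Cref{obs: 2-ve-to-2-ec}, from an edge $\{w,w'\}\in E$ of $G-v$. Hence $w$ and $w'$ lie in the same component, so both endpoints are on the same side.
\item An edge $\{v_i,w_j\}$ requires $w\in N_i(v)$ by construction. If $i=0$, then $w\in C_1$, so both endpoints are in $X$. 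If $i=1$, then $w\notin C_1$, so both endpoints are in $Y$.
\end{itemize}
The last case is the only place where the choice of partition of $N(v)$ matters. The partitions of other vertices only decide which of $w_0,w_1$ an edge attaches to, and the side of $\psi(w)$ does not depend on that. This is why the claim holds regardless of how the other neighborhoods are partitioned.

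To conclude, $X$ and $Y$ are both nonempty and $\{v_0,v_1\}$ is the unique edge crossing $(X,Y)$. Removing it therefore leaves no edges between $X$ and $Y$, so $\{v_0,v_1\}$ is a bridge (equivalently, $G'$ is not $2$-edge-connected via this edge). If $G'-\{v_0,v_1\}$ were to have further components this would not matter, since a bridge only needs its removal to increase the number of components and that already follows from the cut. I expect no real obstacle here; the only point needing care is the bookkeeping in the case analysis above.
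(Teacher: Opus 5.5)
Your proof is correct and follows the same route as the paper. You split $N(v)$ along the separation of $G-v$ (the paper uses two separated sets $S,T$; you use $C_1$ versus everything else, which also guarantees the two parts cover all of $N(v)$), and then note that $\{v_0,v_1\}$ is the only edge of $G'$ crossing the induced cut. Your case analysis spells out the verification that the paper leaves implicit, including why the partitions chosen at other vertices do not matter.
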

\begin{proof}
    Let $S,T\subseteq V$ be two vertex sets that are disconnected after removing $v$ from $G$. Let $N_0(v)=N(v)\cap S$ and $N_1(v)=N(v)\cap T$. Then $\{v_0,v_1\}$ is a bridge in $G'$.
\end{proof}

\begin{proof}[Proof of \Cref{thm: 2-vertex-conn}]
    Let $G=(V,E)$ be a graph of maximum degree $\Delta$. Then for each $v\in V$, there exist at most $2^{\deg(v)-1}-1\leq 2^{\Delta-1}-1$ ways of partitioning $N(v)$. Let $t(v)=2^{\deg(v)-1}-1$, and $t=2^{\Delta-1}-1$. Let the set of partitions be $\Gamma(v)=\{\Gamma^i(v)\}_{i=1}^{t}$, where $\Gamma^i=\{N_0^i(v),N_1^i(v)\}$, $\Gamma^i(v)\neq\Gamma^j(v)$ for every $1\leq i<j\leq t(v)$ and $\Gamma^i(v)=\Gamma^{t(v)}(v)$ for all $t(v)\leq i\leq t$.

    For each $i\in[t]$, we construct a graph $G_i$ using the partition $\Gamma^i(v)$ for every vertex $v\in G$. According to \Cref{claim: 2-conn-1,claim: 2-conn-2}, if $G$ is $2$-vertex-connected, all $\{G_i\}_{i=1}^t$ are $2$-edge-connected. Otherwise, there exists some $G_{i^*},i^*\in[t]$ such that $G_{i^*}$ has a bridge. Therefore, it suffices to simulate the certification of $2$-edge-connectivity simultaneously for each $G_i$. Our certification scheme is as follows:
    \begin{itemize}
        \item For each $v\in V$, assign the list of partitions $\Gamma(v)=\{\Gamma^i(v)\}_{i=1}^t$ to $v$. Each of the partitions can be stored in $O(1)$ bits by first doing a $2$-hop coloring on $V$, and then encoding the partition of $N(v)$ by a partition of the colors in $N(v)$. 
        \item For each $i\in[t]$, let $c_i$ be the certificate assignment for $2$-edge-connectivity when the underlying graph is $G_i$. For each $v\in V$, assign $v$ the list of certificates $\{(c_i(v_0),c_i(v_1))\}_{i=1}^t$.
        \item Each vertex $v\in V$ runs the following verification algorithm:
        \begin{itemize}
            \item First check that the $2$-hop coloring (for encoding the partition) is proper, this can be done by having each vertex check that no color appears twice in their neighborhood. Then check that the set of partitions $\Gamma(v)$ indeed contains all possible partitions of its neighborhood, which can be done with the knowledge of its neighborhood in the original graph.
            \item For each $i\in[t]$, each vertex $v\in V$ has the knowledge of the local graph topology of $G_i$ by looking at the partition of its neighbors and itself. Therefore, it can simulate the verification algorithm of $v_0$ and $v_1$ on $G_i$ with the certificates $\{(c_i(v_0),c_i(v_1))\}$. If the verification fails for some $i$, it outputs \reject. Otherwise, it outputs \accept.
        \end{itemize}
    \end{itemize}
    The correctness of this certification scheme follows from the correctness of certifying $2$-edge-connectivity on bounded-degree graphs. The size of the certificates blows up by a factor of $O(t)=O(2^{\Delta})$, which gives us the size bound of certificates.
\end{proof}

\subsubsection{$2$-edge-connectivity for bounded-expansion graph classes}\label{subsec: 2-ec-expansion}
In this section, we show that the constant-bit certification for $2$-edge-connectivity in bounded degree graphs (\Cref{thm: 2-edge-conn}) can be further generalized to bounded-expansion graphs, which include several well-known graph classes such as bounded-degree graphs and minor-closed graph classes. The formal definition is stated below.

\begin{definition}[$t$-shallow minor]
    Let $G$ and $G'$ be two graphs. We say that $G'$ is a $t$-shallow minor of $G$ if there exists a subgraph $H$ of $G$ and a partition of the vertices of $H$ into connected sets with radius (measured in $H$) at most $t$, such that $G'$ is obtained from $H$ by contracting these connected sets.
\end{definition}

\begin{definition}[bounded expansion]
    A class of graphs $\Fcal$ is said to have bounded expansion if there exists a function $f$ such that for each graph $G\in\Fcal$, each $t\ge 1$, every $t$-shallow minor $G'$ of $G$ satisfies that $\frac{|E(G')|}{|V(G')|}\le f(t)$.
\end{definition}

\begin{restatable}{proposition}{thmTwoECnew}\label{thm: 2-ec-bdd-expansion}
    Let $\Fcal$ be a bounded-expansion graph class. Given the promise that the input graph belongs to $\Fcal$, there exists a local certification scheme for $2$-edge-connectivity with $O(1)$ per-vertex label size, where the constant only depends on the graph class $\Fcal$.
\end{restatable}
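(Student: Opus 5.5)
The plan is to bypass \Cref{lem:EdgeCert-degeneracy}, whose $\log n$ term comes from naming endpoints by identifiers, and to redistribute the $O(1)$-bit edge certificate of \Cref{lem: 2-ec-EdgeCert} onto vertices using only constant-size ``local names''. These names will be colors of an auxiliary graph built from a low out-degree orientation of $G$. Bounded expansion is used only to show that this auxiliary graph has bounded chromatic number.

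\textbf{Step 1 (orientation and auxiliary graph).} Since $\Fcal$ has bounded expansion, every $G\in\Fcal$ has degeneracy at most some $d=d(\Fcal)$, because $G$ is a $0$-shallow minor of itself and every subgraph has edge density at most $f(0)$. Fix an acyclic orientation $\vec{G}$ of $G$ with out-degree at most $d$. Define the auxiliary graph $G^+$ on $V(G)$ with edge set $E(G)$ together with every pair $\{x,y\}$ for which there is a vertex $v$ with $(v,x)\in\vec{G}$ and $\{v,y\}\in E(G)$, $y\neq x$. In words, each out-neighbor of $v$ must be distinguishable from every other neighbor of $v$.

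I expect the main obstacle to be proving that $\chi(G^+)\le c(\Fcal)$. I would first try to charge each new pair $\{x,y\}$ to the witness $v$ and orient the pair according to the orientation of $\{v,y\}$.
\begin{itemize}
\item If $v\to y$ as well, then $\{x,y\}$ is a fraternal pair. Each $v$ creates at most $\binom{d}{2}$ such pairs.
\item If $y\to v\to x$, then $\{x,y\}$ is a transitive pair. Orient it $y\to x$; since $y$ has at most $d$ out-neighbors $v$, each with at most $d$ out-neighbors $x$, we get out-degree at most $d^2$.
\end{itemize}
This handles all pairs of $G^+$ except the fraternal ones. For those, I would invoke the known result of Ne\v{s}et\v{r}il and Ossona de Mendez that one step of transitive-fraternal augmentation of a bounded-expansion class, for a suitable low out-degree orientation, again has bounded degeneracy, via $1$-shallow minors. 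Hence $G^+$ has bounded degeneracy, and so bounded chromatic number $c$. If necessary, I would choose the orientation produced by their theorem rather than an arbitrary degeneracy orientation.

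\textbf{Step 2 (the certificate).} Fix the Eulerian cover $C_1,C_2,C_3$ from \Cref{lem: Eulerian-cover} and a proper coloring $\gamma$ of $G^+$ with $c$ colors. The label of $v$ consists of $\gamma(v)$ together with a list of at most $d$ pairs $(\gamma(x),S_{vx})$, one for each out-neighbor $x$, where $S_{vx}\subseteq[3]$ records which $C_i$ contain the edge $vx$. This takes $O(d\log c)=O(1)$ bits and needs no identifiers.

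\textbf{Step 3 (verification).} Vertex $u$ decodes the label of every incident edge as follows.
\begin{itemize}
\item For each neighbor $w$, let $u$ claim the edge $uw$ if $u$'s list contains $\gamma(w)$, and let $w$ claim it if $w$'s list contains $\gamma(u)$.
\item The vertex $u$ checks that, across all its neighbors, the colors $\gamma(w)$ named in its own list are pairwise distinct and are each attained by exactly one neighbor. It also checks that, for every neighbor $w$, exactly one of ``$u$ claims $uw$'' and ``$w$ claims $uw$'' holds.
\item Having thereby read a set $S_{uw}$ for each incident edge, $u$ checks that every $S_{uw}$ is nonempty and that, for each $i\in[3]$, an even number of incident edges have $i\in S_{uw}$.
\end{itemize}

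\textbf{Correctness.} Completeness follows from the properness of $\gamma$ on $G^+$. Because $u$'s out-neighbors receive colors distinct from all of $u$'s other neighbors, each of $u$'s list entries names a unique neighbor. Moreover, whenever $w\to u$, the color $\gamma(u)$ appears in $w$'s list while $\gamma(w)$ does not appear in $u$'s list, since $w$ is not an out-neighbor of $u$.

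For soundness, the local checks force every edge to be claimed by exactly one endpoint, and both endpoints resolve the claim to the same set $S_{uw}$. Thus any accepted labeling defines a genuine edge-labeling that passes the checks of \Cref{lem: 2-ec-EdgeCert}: the edges are covered by three even-degree subgraphs. By \Cref{lem: Eulerian-cover} this makes $G$ $2$-edge-connected, or equivalently bridgeless, where we use that $G$ is assumed connected.
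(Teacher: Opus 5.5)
Your proof is correct, and it takes a genuinely different route from the paper's.

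\textbf{The paper's route.} The paper never labels all of $E(G)$. It uses the pseudoforest encoding of \cite{BFVZ26} (\Cref{lem: psforest}) as a black box. With it, the prover encodes two BFS forests $T_1,T_2$ that form a sparse certificate (\Cref{lem: sparse-cert}). It also encodes a forest $F'$ whose degree-parity check forces every bridge of $G$ to survive (\Cref{lem: odd-matching}, \Cref{claim: bridge-preserving-subgraph}). The Eulerian-cover edge labels are then placed only on $F'\cup T_1\cup T_2$, and each child carries the label of its parent edge in each of the three pseudoforests (\Cref{lem: EdgeCert-oriented}). This detour through a bridge-preserving subgraph is forced because \Cref{lem: psforest} only guarantees pseudoforests against a cheating prover.

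\textbf{Your route.} You make \emph{every} out-neighbor of $v$ uniquely nameable inside $N(v)$, not just a single parent as a conflict-free coloring does. This comes from a proper coloring of the one-step transitive--fraternal augmentation $G^+$. As a result, the labels on all of $E(G)$ can be pushed to vertices directly. Soundness is then purely local: each edge is claimed by exactly one endpoint, and the claimed list entry is unique. It does not need the orientation or the coloring to be verified at all.

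\textbf{Trade-off.} You pay with a structural input about bounded expansion, namely bounded $\chi(G^+)$, in place of the black box of \cite{BFVZ26}, Jaeger's lemma, and the sparse certificate. In return you get a reusable general statement: on a bounded-expansion class, any $f$-bit edge certification becomes an $O(f)$-bit vertex certification, even anonymously. This removes the additive $\log n$ of \Cref{lem:EdgeCert-degeneracy} and is not specific to $2$-edge-connectivity.

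\textbf{Making Step 1 self-contained.} You can avoid the fraternal-augmentation theorem by choosing a specific orientation. Take an ordering witnessing the weak $2$-coloring number of $G$, and orient each edge from its later to its earlier endpoint. Then for every pair $\{x,y\}$ of $G^+$ with witness $v$, the path through $v$ has its earlier endpoint as its minimum. So the later endpoint weakly $2$-reaches the earlier one. Greedy coloring along the ordering therefore uses at most the weak $2$-coloring number many colors, which is bounded on bounded-expansion classes by Zhu's theorem.

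\textbf{Minor point.} The paper's definition of bounded expansion quantifies over $t\ge 1$. So bound the density of subgraphs by $f(1)$ rather than $f(0)$; a subgraph is a $1$-shallow minor with singleton branch sets.
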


Before proving~\Cref{thm: 2-ec-bdd-expansion}, we first provide some intuition and review the necessary graph-theoretic background.

Our approach again builds on the edge certification scheme in~\Cref{lem: 2-ec-EdgeCert}. The main obstacle was that when the degree is unbounded, transforming edge-certification schemes to vertex-certification schemes requires vertices to know which edges it is simulating, which is precisely what gives rise to the additive $\Theta(\log n)$ term in~\Cref{lem:EdgeCert-degeneracy}. We show that for bounded expansion graphs, this obstacle can be overcome via an application of a very recent certification scheme for pseudoforest~\cite{BFVZ26}.

A \emph{pseudoforest} is a graph where each connected component has at most one cycle. It is easy to see that every pseudoforest admits an edge-orientation with maximum out-degree at most $1$. Given such an orientation, we say that a vertex $u$ is the parent of vertex $v$ if and only if $v$ is oriented into $u$; otherwise, $u$ is a child of $v$. Given a labeled graph $G$, we say that the labeling \emph{encodes} a pseudoforest in $G$ if for every vertex in $G$, by investigating the labels of its neighborhood, the vertex knows which neighbor is its parent and which neighbors are its children in that pseudoforest.

In~\cite{BFVZ26}, they show that for bounded expansion graphs, there exists an $O(1)$-bit encoding for a BFS spanning forest. Moreover, their encoding scheme guarantees that even if the prover lies, the certificates still at least encode a pseudoforest.

\begin{lemma}[{\cite{BFVZ26}}]\label{lem: psforest}
    Let $\Fcal$ be a bounded-expansion graph class. There exists an $O(1)$-bit local certification scheme that satisfies the following properties:
    \begin{itemize}
        \item (Completeness.) If the prover is honest, the certificate assignment encodes a BFS spanning forest.
        \item (Soundness.) If the verifier accepts on every vertex, then the certificate assignment encodes a pseudoforest.
    \end{itemize}
\end{lemma}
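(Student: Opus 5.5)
This result is imported from~\cite{BFVZ26}. If I had to prove it myself, the plan is to encode parent pointers using only $O(1)$ bits. I would use a low out-degree orientation together with a coloring that makes every out-neighborhood rainbow. For the coloring I would use the transitive fraternal augmentation of Ne\v{s}et\v{r}il and Ossona de Mendez, which is not stated in this paper and which I take as an external input. It says that every graph in a bounded-expansion class $\Fcal$ admits an orientation $\vec{G}$ of out-degree at most some $d=d(\Fcal)$, and that the graph obtained by adding the fraternal edges (pairs of out-neighbors of a common vertex) still has bounded degeneracy. Hence it has a proper coloring $\chi$ with $K=K(\Fcal)$ colors. Under $\chi$, adjacent vertices get distinct colors, and so do any two out-neighbors of the same vertex.

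\emph{Completeness (honest prover).} Fix a BFS spanning forest $F$ and give each vertex $v$ the following constant-size label: $\chi(v)$; $\delta(v)=\dist_F(v,\text{root})\bmod 3$; and a pointer field. The pointer field is one of three things: \emph{root}; \emph{up}$(c)$, meaning that $v$'s parent in $F$ is an out-neighbor of $v$ of color $c$; or \emph{down}, meaning that $v$'s parent is an in-neighbor $p$ of $v$. In every case $v$ also stores the set $D(v)\subseteq[K]$ of colors of those out-neighbors of $v$ that are its children in $F$. The label has $O(K+\log K)=O(1)$ bits. Each vertex defines its parent and children through its neighbors' labels. Vertex $w$ regards a neighbor $u$ as its parent if either $w$ has \emph{up}$(\chi(u))$ and $\delta(u)=\delta(w)-1 \bmod 3$, or $w$ has \emph{down}, $\chi(w)\in D(u)$ and $\delta(u)=\delta(w)-1\bmod 3$. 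The children of $u$ are the neighbors that regard $u$ as parent. In the honest assignment, the rainbow property of out-neighborhoods together with the distance-mod-3 layers should make this identification exactly $F$. The verifier accepts a non-root vertex only if exactly one neighbor qualifies as its parent under the rule above, and accepts a vertex marked \emph{root} only if no neighbor qualifies.

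\emph{Soundness.} If all vertices accept, then every vertex sees exactly one parent (or none if it is marked \emph{root}). The parent relation is defined from the child's view, and each parent reads the same labels to determine its children, so parent and child views agree by construction. The resulting structure therefore has out-degree at most $1$, and every connected graph with out-degree at most $1$ has at most one cycle in each component, i.e.\ it is a pseudoforest. This is exactly what soundness requires; in particular we do not need to rule out cycles, which is why we settle for a pseudoforest rather than a forest.

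\emph{Main obstacle.} The delicate step is disambiguation in the \emph{down} case. Here $p$ is an in-neighbor of $w$, so many in-neighbors of $w$ may share $p$'s color and several of them may list $\chi(w)$ in their $D$-sets. The honest labeling must make the parent unique in $w$'s view. I would first try to argue that among $w$'s neighbors only $p$ satisfies both $\delta=\delta(w)-1$ and $\chi(w)\in D(\cdot)$. Any other in-neighbor $u$ of $w$ that is not $w$'s parent simply does not list $w$'s color, because $\chi(w)\in D(u)$ would mean $w$ is a child of $u$. Conflicts can still arise when $u$ lists a \emph{different} out-neighbor of the same color as $w$. The fraternal coloring excludes this: all out-neighbors of $u$ have distinct colors, so $\chi(w)\in D(u)$ refers to $w$ itself. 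Making this rigorous, and checking that the required augmentation depth stays bounded for bounded-expansion classes, is where I expect the real work to lie.
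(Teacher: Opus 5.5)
Your soundness argument is fine: each accepting vertex has a unique parent, which gives a functional graph and hence a pseudoforest, and children are computable from pairs of labels. The gap is in completeness. Your coloring only separates adjacent vertices and two out-neighbors of a common vertex. Your parent rule, however, also needs to separate the two ends of a directed path $x\to y\to z$.

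\emph{Down case.} You only examine neighbors $u$ of $w$ with $u\to w$, and you overlook a neighbor $u$ in the parent layer with $w\to u$. Such a $u$ may have a child $w'$ with $u\to w'$ and $\chi(w')=\chi(w)$. Then $\chi(w)\in D(u)$, and the rainbow property of $u$'s out-neighborhood does not help, because $w$ is an in-neighbor of $u$. Here is a concrete instance.
\begin{itemize}
\item Take edges $rp,ru,pw,uw,uw'$, oriented $p\to r$, $u\to r$, $p\to w$, $w\to u$, $u\to w'$, so the out-degree is at most $2$.
\item Take the BFS tree in which the parent of $w$ is $p$ and the parent of $w'$ is $u$.
\item The only fraternal edges are $wr$ and $w'r$, so $\chi(r)=1$, $\chi(p)=2$, $\chi(u)=3$, $\chi(w)=\chi(w')=4$ is a legal coloring.
\item Then $D(p)=D(u)=\{4\}$ and $\delta(p)=\delta(u)=\delta(w)-1$, so $w$ (marked \emph{down}) finds two qualifying parents and rejects the honest labeling.
\end{itemize}

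\emph{Up case.} It fails symmetrically. If $w\to p$, an in-neighbor $x$ of $w$ in the parent layer may have $\chi(x)=\chi(p)$. Nothing forbids this, since $x$ and $p$ need be neither adjacent nor out-neighbors of a common vertex.

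The repair is to use the transitive part of the augmentation you named, not only the fraternal part: also require $\chi(x)\neq\chi(z)$ whenever $x\to y\to z$. These transitive arcs add at most $d^2$ out-arcs per vertex. The Ne\v{s}et\v{r}il--Ossona de Mendez theorem still gives bounded degeneracy of the whole one-step augmentation for bounded-expansion classes, so $K=O(1)$ is preserved. Both cases then go through.
\begin{itemize}
\item Up case: other out-neighbors of $w$ differ in color from $p$ by the rainbow property, and an in-neighbor $x$ differs from $p$ via $x\to w\to p$.
\item Down case: if $u\to w$, the rainbow property forces $w'=w$, so $u$ is the parent of $w$. If $w\to u$, then $w\to u\to w'$ forces $\chi(w')\neq\chi(w)$.
\end{itemize}

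With this change your scheme becomes a legitimate, self-contained route. The paper gives no proof to compare against: it simply imports the lemma from \cite{BFVZ26}, and its remark attributes the constant there to conflict-free colorings.
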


We then observe that, as long as a partition of edges into a constant number of pseudoforests is given as input to the vertices, transforming edge-certification to vertex-certification incurs only a constant overhead by letting the children take the responsibility of simulating the edge connected to the parent. More formally, a \emph{$d$-pseudoforest decomposition} of a graph is a partition of its edges into d subsets, each inducing a pseudoforest. Similarly, we say that a $d$-pseudoforest decomposition is encoded in a labeled graph if for each $i\in[d]$, each vertex can determine, by inspecting the labels in its neighborhood, which of its neighbors are its parent and children in the i-th pseudoforest. In the following lemma, we show that if a $d$-pseudoforest decomposition is encoded in the input graph, then we can transform an edge-certification scheme into a vertex-certification scheme with only a factor of $d$ overhead.

\begin{lemma}\label{lem: EdgeCert-oriented}
    Consider a graph family $\Fcal$ and an edge-certification for $\Fcal$ of size $f(n)$. If a $d$-pseudoforest decomposition is encoded for every graph in $\Fcal$, then there exists a (vertex) certification of size $d\cdot f(n)$.  
\end{lemma}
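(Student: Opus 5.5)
The plan is to assign every edge label to the child endpoint of that edge in the pseudoforest containing it. Since each vertex has at most one parent in each pseudoforest, each vertex stores at most $d$ edge labels. The verifier then simulates the edge-certification verifier locally. Concretely, fix an edge-certification scheme $\mathcal{S}$ of size $f(n)$ for $\Fcal$. Let $F_1,\dots,F_d$ be the encoded $d$-pseudoforest decomposition, with each $F_i$ oriented so that every vertex has out-degree at most $1$. For each edge $e=\{u,v\}$, let $i(e)$ be the unique index with $e\in F_{i(e)}$, and orient $e$ from child to parent, say $v\to u$. The prover gives $v$ its own vertex label from $\mathcal{S}$ and, for each $i\in[d]$, a slot $\ell_i(v)$ holding the $\mathcal{S}$-label of the unique out-edge of $v$ in $F_i$, or a null symbol if $v$ has no parent in $F_i$. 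This gives a label of size $O(d\cdot f(n))$. The vertex label adds only an additive $f(n)$, which is absorbed, or can be charged to one of the slots by the convention that the label size is $d\cdot f(n)$ up to constants.

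For verification, each vertex $w$ reconstructs the labels of all its incident edges. The encoding of the decomposition is part of the input, so $w$ knows, for each $i$, which neighbor is its parent in $F_i$ and which neighbors are its children. For the edge to its parent in $F_i$, $w$ reads $\ell_i(w)$ from its own certificate. For an edge to a child $x$ in $F_i$, $w$ reads $\ell_i(x)$ from $x$'s certificate. This is unambiguous, because $x$ has only one parent in $F_i$, and that parent is $w$. Every incident edge lies in exactly one $F_i$ and is either a parent-edge or a child-edge there, so $w$ obtains exactly one label per incident edge. It also sees the vertex labels of its neighbors. So $w$ holds exactly the view it would have in $\mathcal{S}$, and it runs $\mathcal{S}$'s verifier.

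Completeness follows by placing the honest $\mathcal{S}$-labels as above. For soundness, take any vertex-labeling. It determines a well-defined edge-labeling: each edge receives the slot content stored at its child endpoint. Both endpoints read this same value, so there is no inconsistency to exploit. Hence if all vertices accept, the $\mathcal{S}$-verifier accepts on this edge labeling, and soundness of $\mathcal{S}$ applies. The one delicate point is this consistency: I must make sure both endpoints agree on which endpoint stores the label. This holds because the decomposition encoding is given and is locally consistent, so $u$ sees $v$ as a child in $F_i$ exactly when $v$ sees $u$ as its parent.

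When this lemma is later combined with \Cref{lem: psforest}, where the encoding is only certified rather than given, that agreement would need to be checked as part of the verification. For the lemma as stated it is assumed.
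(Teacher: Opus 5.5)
Your proposal is correct and follows essentially the same route as the paper. The paper also moves each edge label of the $i$-th pseudoforest to the child endpoint, tagged by $i$, and has each vertex recover its incident edge labels from its own certificate and its children's certificates; the bound $d\cdot f(n)$ comes from each vertex having at most one parent per pseudoforest. Your remarks on the vertex's own label from the edge-certification scheme (an additive $f(n)$) and on parent/child consistency of the encoding are details the paper leaves implicit, and you handle them correctly.
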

\begin{proof}
    Let $G=(V,E)$ be a graph in $\Fcal$. For each edge $e\in E$, denote by $c(e)$ the certificate assignment on $e$. Denote by $T_i$ the $i$-th pseudoforest in the decomposition. We now construct a vertex-certification scheme as follows:
    \begin{itemize}
        \item (Prover.) For each edge $e=\{u,v\}\in T_i$, suppose that $v$ is the parent of $u$ in $T_i$, remove the certificate $c(e)$ from $e$ and assign $(i,c(e))$ to the vertex $u$.
        \item (Verifier.) For each vertex $v\in V$, recover the edge certification for each incident edge by investigating the certificates of its neighbors, and then run the verification scheme as in the original edge certification scheme.
    \end{itemize}
    Since each vertex has at most $1$ parent in each pseudoforest, it receives at most $d$ certificates from its incident edges, and therefore the total certificate size is at most $d\cdot f(n)$.
\end{proof}

Combining with the edge-certification scheme for $2$-edge-connectivity in~\Cref{lem: 2-ec-EdgeCert}, we derive the following lemma.
\begin{lemma}\label{lem: 2-ec-oriented}
    Let $\Fcal$ be a labeled graph family such that a $d$-pseudoforest decomposition is encoded for every graph in $\Fcal$. Then there exists an $O(1)$-bit local certification scheme for $2$-edge-connectivity on $\Fcal$.
\end{lemma}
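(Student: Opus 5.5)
The plan is to compose \Cref{lem: 2-ec-EdgeCert} with \Cref{lem: EdgeCert-oriented}. Since a $d$-pseudoforest decomposition is encoded in every graph of $\Fcal$, each vertex can see, from the labels of its neighborhood and for each $i\in[d]$, which neighbor is its parent in the $i$-th pseudoforest $T_i$ and which are its children. Together the $T_i$ partition $E(G)$.

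\Cref{lem: 2-ec-EdgeCert} provides an edge-certification scheme with $f(n)=O(1)$: each edge carries the subset of $[3]$ naming the Eulerian subgraphs $C_1,C_2,C_3$ from \Cref{lem: Eulerian-cover} that contain it, which is $3$ bits. Each vertex checks that every incident edge has a nonempty color set, and that for each $i\in[3]$ an even number of its incident edges carry color $i$. \Cref{lem: EdgeCert-oriented} then turns this into a vertex-certification scheme of size $d\cdot O(1)=O(1)$. For each edge $e=\{u,v\}$ in $T_i$ with $v$ the parent of $u$, the child $u$ stores $(i,c(e))$. Since $u$ has at most one parent in each $T_i$, it stores at most $d$ such entries.

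The step needing care is checking that the verifier can reconstruct the label of every incident edge unambiguously. Take a vertex $v$ and an incident edge $e=\{v,w\}$. The edge $e$ lies in exactly one $T_i$, and $v$ can determine this $i$ and the orientation of $e$ from the encoded decomposition. If $v$ is the child, it reads $c(e)$ from its own $i$-th slot. If $w$ is the child, $v$ reads the $i$-th slot of $w$'s label; because $w$ has a unique parent in $T_i$, that slot refers to $e$. With this, each vertex recovers $c(e)$ for all incident edges and runs the parity and coverage checks.

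Correctness is inherited from \Cref{lem: 2-ec-EdgeCert} and \Cref{lem: Eulerian-cover}. For completeness, an honest prover stores the true colors. For soundness, suppose all vertices accept. The recovered edge labels are consistent, since both endpoints of $e$ read the same stored value. The color classes then define three Eulerian subgraphs covering $E(G)$, and by \Cref{lem: Eulerian-cover} $G$ is $2$-edge-connected.

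One point to handle is that this scheme uses no identifiers and no degree bound. All of the locality is supplied by the encoded decomposition, so the label size is $O(d)$, which is $O(1)$ for fixed $d$.
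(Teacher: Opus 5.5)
Your proposal is correct and takes the same approach as the paper: it composes the $O(1)$-bit edge-certification of \Cref{lem: 2-ec-EdgeCert} with the child-stores-its-parent-edge transformation of \Cref{lem: EdgeCert-oriented}, giving $d\cdot O(1)=O(1)$ bits. Your additional discussion of how each vertex reconstructs incident edge labels, and of why soundness carries over, unpacks the proofs of those two lemmas rather than adding a new ingredient.
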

\begin{proof}
    First, apply~\Cref{lem: 2-ec-EdgeCert} to obtain an $O(1)$-bit edge-certification scheme and then transform it into a vertex-certification scheme using~\Cref{lem: EdgeCert-oriented} with a constant overhead.
\end{proof}

Our main idea is to use the pseudoforest certification scheme of~\Cref{lem: psforest} to simultaneously encode a sparse subgraph that preserves 2-edge-connectivity and a pseudoforest decomposition of this subgraph. We then apply the edge-certification scheme of~\Cref{lem: 2-ec-EdgeCert} to this subgraph and use the pseudoforest decomposition to redistribute the edge labels to the vertices, thereby obtaining a constant-bit vertex-certification scheme.

To extract this sparse core via~\Cref{lem: psforest}, we will invoke the following graph-theoretic tools. Let $G=(V,E)$ be a graph, a \emph{certificate} for $k$-edge-connectivity ($k$-vertex-connectivity, resp.) of $G$ is a subset of edges $E'\subseteq E$ such that the subgraph $G'=(V,E')$ is $k$-edge-connected ($k$-vertex-connectivity, resp.) if and only if $G$ is $k$-edge-connected ($k$-vertex-connectivity, resp.). It has been shown that for both $k$-edge-connectivity and $k$-vertex-connectivity, there exists a certificate of degeneracy at most $k$~\cite{NI92,CT90} by finding BFS (breadth-first search) spanning forest iteratively.

\begin{lemma}[\cite{NI92,CT90}]\label{lem: sparse-cert}
    Let $G=(V,E)$ be a graph, and for $i=1,\ldots,k$, let $F_i$ be the edge set of a BFS spanning forest of $G\setminus(\cup_{j=1}^{i-1}F_j)$. Then $\cup_{i=1}^k F_i$ is a certificate for both $k$-edge-connectivity and $k$-vertex-connectivity. 
\end{lemma}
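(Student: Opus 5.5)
We treat the two connectivity notions separately. In both cases only one direction needs work: $G'=(V,\bigcup_{i\le k}F_i)$ is a spanning subgraph of $G$, so if $G'$ is $k$-edge-connected (resp.\ $k$-vertex-connected) then so is $G$. Throughout, write $G_i=G\setminus(\bigcup_{j<i}F_j)$, so that $F_i$ is a BFS spanning forest of $G_i$.

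\emph{Edge connectivity.} Let $(X,V\setminus X)$ be any nontrivial cut with $|\delta_G(X)|\ge k$. Fix $i\le k$.
\begin{itemize}
\item If $G_i$ still has an edge crossing the cut, its two endpoints lie in a common component of $G_i$. The spanning tree of that component in $F_i$ must then contain a crossing edge.
\item Otherwise every edge of $\delta_G(X)$ already lies in $F_1\cup\dots\cup F_{i-1}$, so $G'$ contains all of $\delta_G(X)$, hence at least $k$ crossing edges.
\end{itemize}
If the second case never occurs, the forests $F_1,\dots,F_k$ are edge-disjoint and each contributes a crossing edge. Either way $|\delta_{G'}(X)|\ge k$, and by Menger's theorem $G'$ is $k$-edge-connected. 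This part works for arbitrary spanning forests; BFS is not needed.

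\emph{Vertex connectivity.} Here the BFS (scan-first) structure is essential. We will prove the following statement by induction on $i$:
\begin{quote}
($\star_i$) For every $C\subseteq V$ with $|C|\le i-1$ and every edge $uv$ of $G_i$ with $u,v\notin C$, the vertices $u$ and $v$ are connected in $(F_1\cup\dots\cup F_{i-1})-C$.
\end{quote}
Granting ($\star_{k}$), let $|C|\le k-1$ with $G-C$ connected. Every edge of $G-C$ either lies in $G'$ or lies in $G_k$ (it avoids $F_1,\dots,F_{k-1}$) with endpoints outside $C$, hence is bypassed in $G'-C$. So $G'-C$ is connected, which gives $k$-vertex-connectivity of $G'$.

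For the inductive step, take $uv\in G_{i+1}$, so $uv\in G_i$ but $uv\notin F_i$, and let $|C|\le i$. We use the scan-first property of $F_i$ in $G_i$: order the vertices of the component by BFS scanning order and suppose $u$ is scanned before $v$. Since $uv\notin F_i$, the parent $p$ of $v$ in $F_i$ was scanned strictly before $u$. Hence the $F_i$-paths from $u$ and from $v$ toward the root reach their common ancestor through vertices scanned no later than $u$.
\begin{itemize}
\item If the $u$--$v$ path in $F_i$ avoids $C$, we are done.
\item Otherwise we pick a vertex $c\in C$ on that path and consider $C'=C\setminus\{c\}$, with $|C'|\le i-1$. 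We then apply ($\star_i$) to suitable edges of $G_i$ that bypass $c$ while avoiding $C'$, concatenating the resulting paths with $F_i$-subpaths outside $C$.
\end{itemize}

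The main obstacle is this last step: exhibiting, from the BFS order, edges of $G_i$ that bypass the chosen $c$ while avoiding $C'$, and ensuring their endpoints are connected in $(F_1\cup\dots\cup F_{i-1})-C$ rather than only $-C'$. A single induction on $|C|$ may not suffice. I would first try to strengthen ($\star_i$) to a statement about whole components of $G_i-C$: each component of $G_i-C$ is contained in a single component of $(F_1\cup\dots\cup F_i)-C$ whenever $|C|\le i$, mirroring the argument of Cheriyan, Kao and Thurimella. I would then use that the first-scanned vertex of each component of $G_i-C$ has its BFS subtree in $F_i$ covering everything reachable before scanning hits $C$.
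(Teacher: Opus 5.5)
Your edge-connectivity argument is correct, and as you note it works for any iterated spanning forests. The paper only quotes this lemma from the literature and only uses it for $2$-edge-connectivity, so that half is fine. The vertex-connectivity half, however, is not proved: the inductive step, which is the only place the BFS structure enters, is left open as ``the main obstacle.''

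Moreover, the invariant $(\star_i)$ is false as stated, because it is off by one. It asks that up to $i-1$ deleted vertices be bypassed using only $F_1\cup\dots\cup F_{i-1}$. For $i=1$ this demands that the endpoints of every edge be connected in the empty graph. Your deduction from $(\star_k)$ would also show that $F_1\cup\dots\cup F_{k-1}$ alone certifies $k$-vertex-connectivity, so for $k=2$ a single spanning tree would be $2$-connected. Concretely, take the $4$-cycle $w_1w_2w_3w_4$ with BFS from $w_1$. Then $F_1=\{w_1w_2,w_1w_4,w_2w_3\}$ and $G_2=\{w_3w_4\}$, and with $C=\{w_1\}$ the vertices $w_3,w_4$ are not connected in $F_1-C$. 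The same example refutes your proposed strengthening with $|C|\le i$ at $i=1$. The correct invariant allows $|C|\le i-1$ against $F_1\cup\dots\cup F_i$.

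The induction should also run in the other direction. If you extend by the last forest $F_i$ and pick $c\in C$ from its BFS structure, nothing controls the edges at $c$ in $F_1\cup\dots\cup F_{i-1}$, so removing $c$ from $C$ may reconnect the two sides there. What works is to peel off $F_1$. Prove by induction on $k$ that if $uv\in E(G)$, $u,v\notin C$, and $C$ separates $u$ from $v$ in $F_1\cup\dots\cup F_k$, then $|C|\ge k$.

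\begin{enumerate}
\item Let $A$ be the component of $u$ in $(F_1\cup\dots\cup F_k)-C$, and let $B$ be the rest, so $v\in B$.
\item The $F_1$-tree path from $u$ to $v$ meets $C$. Let $c$ be the first vertex of $C$ scanned by the BFS that built that tree.
\item The vertices scanned before $c$ form a subtree avoiding $C$, so they lie in a single side $X\in\{A,B\}$.
\item Any neighbor $y$ of $c$ with $cy\notin F_1$ was already marked when $c$ was scanned. Hence $y$ lies in that subtree or is an $F_1$-child of it, so $y\in X$ unless $y\in C$. If $c$ is the root, it has no such neighbor at all.
\item $F_2,\dots,F_k$ are iterated BFS forests of $G-F_1$ and use only non-$F_1$ edges. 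So in $(F_2\cup\dots\cup F_k)-(C\setminus\{c\})$ the vertex $c$ attaches only to $X$, and $C\setminus\{c\}$ still separates $u$ from $v$.
\item Since $uv\in G-F_1$, the induction hypothesis applied to $G-F_1$ gives $|C|-1\ge k-1$.
\end{enumerate}

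The base case $k=1$ holds because $F_1$ is a spanning forest.
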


Intuitively, letting $k=2$, we can use~\Cref{lem: psforest} to encode the sparse certificate in~\Cref{lem: sparse-cert} by encoding the BFS spanning forest iteratively. However, since provers may cheat, in which case we only get pseudoforests that do not span the entire graph, we need to extract another subset of edges to ensure that the bridge is preserved. 

\begin{lemma}[{\cite{JAEGER1979205}}]\label{lem: odd-matching}
    For any graph $G$ and any spanning tree $T$ of $G$, there exists a forest $F\subseteq T$ such that the graph $G'=G\setminus F$ obtained by removing every edge in $F$ from $G$ is an Eulerian subgraph of $G$.
\end{lemma}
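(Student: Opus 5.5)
The statement to prove is \Cref{lem: odd-matching}: for any graph $G$ and any spanning tree $T$ of $G$, there is a forest $F\subseteq T$ such that $G\setminus F$ is Eulerian. The approach is a parity-correction argument along $T$: pick a subset $F$ of tree edges so that, at every vertex, the number of removed incident tree edges has the same parity as $\deg_G(v)$. Any $F\subseteq T$ is automatically a forest, so only the parity condition needs work.

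Call a vertex $v$ \emph{odd} if $\deg_G(v)$ is odd, and let $O\subseteq V$ be the set of odd vertices. By the handshake lemma $|O|$ is even, and this is the key fact. We need $F\subseteq E(T)$ with $\deg_F(v)\equiv \deg_G(v)\pmod 2$ for all $v$, that is, an $O$-join contained in $T$. Then $\deg_{G\setminus F}(v)=\deg_G(v)-\deg_F(v)$ is even at every vertex, so $G\setminus F$ is Eulerian.

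To build $F$, root $T$ at an arbitrary vertex $r$. For each non-root vertex $v$ with parent edge $e_v$, put $e_v\in F$ if and only if the subtree $T(v)$ contains an odd number of vertices of $O$. We then check the parity at each vertex.

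\begin{itemize}
\item At a non-root $v$, the incident $F$-edges are $e_v$ (if chosen) and the edges $e_c$ to children $c$ with $|T(c)\cap O|$ odd. Since $|T(v)\cap O|=[v\in O]+\sum_c |T(c)\cap O|$, the number of odd children has parity $|T(v)\cap O|+[v\in O]$. Adding $[e_v\in F]\equiv |T(v)\cap O|$ gives $\deg_F(v)\equiv [v\in O]$, as required.
\item At the root, $\deg_F(r)\equiv \sum_c |T(c)\cap O| = |O|-[r\in O]\equiv [r\in O]$, using that $|O|$ is even.
\end{itemize}

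The only nontrivial point is the root's parity, which is exactly where the handshake lemma is used. Everything else is routine bookkeeping. Alternatively, one could pair up the vertices of $O$ arbitrarily and take the symmetric difference of the tree paths between paired vertices; this yields the same $O$-join.

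If $G$ is disconnected, the same argument applies to a spanning forest, component by component, since each component has an even number of odd vertices.
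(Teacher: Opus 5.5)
Your proof is correct. Rooting $T$ and putting the parent edge of $v$ into $F$ exactly when $T(v)$ contains an odd number of odd-degree vertices yields the (unique) $O$-join inside $T$. The non-root parity computation checks out, and the handshake lemma is exactly what closes the root case. You correctly read ``Eulerian'' as the paper defines it, namely every degree even and possibly disconnected. Connectivity of $G\setminus F$ could not be guaranteed anyway: if $G=T$, then $F=T$ is forced.

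The paper gives no proof of its own; it imports the lemma from Jaeger. So the only comparison is with the other standard route, via the cycle space. Let $D$ be the symmetric difference of the fundamental cycles $C_e$ over all non-tree edges $e$. Then $D$ is even, and each non-tree edge lies in exactly one $C_e$, so $E(G)\setminus E(T)\subseteq D$. Hence $F:=E(G)\setminus D\subseteq E(T)$ works.

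That version needs no counting. It also makes plain that $F$ is determined by $T$: two valid choices differ by an even subgraph of a tree, which must be empty. Your version instead gives an explicit description of $F$ through subtree parities, and your closing remark about pairing odd vertices along tree paths implicitly uses the same uniqueness fact.
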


Observe that every graph with a bridge is not an Eulerian graph, as there are at least two vertices with odd degree. Consequently, a subgraph that preserves bridges can be verified by a degree-parity check.

\begin{claim}\label{claim: bridge-preserving-subgraph}
    Given a graph $G=(V,E)$ that contains a bridge. For any edge set $E'\subseteq E$ satisfying that for each vertex $v\in V$, $\deg_{E'}(v)\equiv\deg_G(v)\mod 2$, $E'$ contains every bridge in $G$.
\end{claim}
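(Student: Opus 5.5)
We argue by contradiction. Suppose some bridge $e=\{x,y\}$ of $G$ is not contained in $E'$. Deleting $e$ from $G$ separates the vertex set into two sides; let $X$ be the vertex set of the component of $G-e$ that contains $x$. The whole proof is a parity count of degrees summed over $X$.

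First we compute the $G$-degrees over $X$. Every edge of $G$ with at least one endpoint in $X$ either lies inside $X$ or is the bridge $e$, because $e$ is the only edge leaving $X$. Hence
\[
\sum_{v\in X}\deg_G(v)=2|E(G[X])|+1,
\]
which is odd.

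Next we compute the $E'$-degrees over $X$. Since $E'\subseteq E$ and $e\notin E'$, every edge of $E'$ with an endpoint in $X$ has both endpoints in $X$. Therefore
\[
\sum_{v\in X}\deg_{E'}(v)=2|E'\cap E(G[X])|,
\]
which is even.

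By hypothesis, $\deg_{E'}(v)\equiv\deg_G(v)\pmod 2$ for every $v\in V$. Summing these congruences over $X$ shows that the two sums above have the same parity. This contradicts the fact that the first sum is odd and the second is even. So every bridge of $G$ belongs to $E'$.

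The only point needing care is that $X$ must be a side of the cut determined by the bridge, so that $e$ is the unique edge crossing it. Taking $X$ to be the component of $G-e$ containing $x$ guarantees this, and $G$ need not be connected elsewhere for the argument to work. Beyond this, the argument is simply the handshake lemma applied to the cut $(X, V\setminus X)$.
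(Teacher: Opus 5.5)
Your proof is correct and is essentially the paper's argument. The paper observes that the complement $E\setminus E'$ has every degree even (an Eulerian subgraph), and therefore cannot contain a bridge of $G$; you carry out the underlying handshake count over the side $X$ of the bridge directly, which makes explicit the step the paper leaves implicit.
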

\begin{proof}
    Let $G'$ be the subgraph of $G$ induced by the edge set $E-E'$. Observe that for each vertex $v\in V$, $\deg_{G'}(v)\equiv(\deg_G(v)-\deg_{E'}(v))\equiv0\mod 2$. Therefore, $G'$ is an Eulerian subgraph of $G$, and thus does not contain any bridge of $G$.
\end{proof}

Now we are ready to prove~\Cref{thm: 2-ec-bdd-expansion}.
\begin{proof}[Proof of~\Cref{thm: 2-ec-bdd-expansion}]~
\begin{description}
    \item[Certificate assignments.] Prover assigns the following certificates to the vertices:
    \begin{itemize}
        \item [1.] Specify two BFS spanning trees using~\Cref{lem: psforest}, with distinct sets of labels. Denoted as $T$ and $T_1$.
        \begin{itemize}
            \item [1-1.] Specify a forest $F\subseteq T$ whose removal makes the graph Eulerian. The existence of $F$ is guaranteed by~\Cref{lem: odd-matching}. To encode $F$, the prover does the following: For each edge $e\in F$, suppose that $e=(u,v)$ is oriented from $u$ to $v$, then the prover puts a special symbol $*$ on $u$.
            \item [1-2.] Specify a BFS spanning forest $T_2$ on the graph $G-T_1$ using~\Cref{lem: psforest} with a distinct set of labels from $T_1$.
        \end{itemize}
        Note that $T$ and $T_1$ can be arbitrary BFS spanning trees; they can also be identical. We distinguish them here only for the ease of presentation.
        \item [2.] Certify $2$-edge-connectivity on the subgraph $G'$ induced by the edge set $F\cup T_1\cup T_2$. Since we have certified three pseudoforests in step 1., we can apply the $O(1)$-bit certification scheme in~\Cref{lem: 2-ec-oriented}.
    \end{itemize}
    \item[Verification algorithm.] The verifier recovers the edge set encoded in item 1-1 by investigating the certificates assigned in its neighborhood. This edge set is denoted as $F'$ to distinguish between the set $F$ guaranteed in~\Cref{lem: odd-matching}.
    \begin{itemize}
        \item (Degree parity check.) Each vertex $v$ checks that $\deg_{F'}(v)\equiv\deg_G(v)\mod 2$.
        \item Each vertex runs the verification algorithm of $2$ on the subgraph induced by the edge set $F'\cup T_1\cup T_2$.
    \end{itemize}
\end{description}
Let us now prove the correctness of the algorithm. For completeness, we note that according to~\Cref{lem: sparse-cert}, the subgraph of $G$ induced by the edge set $T_1\cup T_2$ is a certificate for $2$-edge-connectivity. $G[F\cup T_1\cup T_2]$ is a subgraph of $G$ that contains $G[T_1\cup T_2]$, and therefore also a certificate for $2$-edge-connectivity of $G$. Since prover has encoded three pseudoforests $F$, $T_1$, and $T_2$, the completeness follows from applying~\Cref{lem: 2-ec-oriented} to $G[F\cup T_1\cup T_2]$.

For soundness, according to~\Cref{claim: bridge-preserving-subgraph}, if the degree parity check passes on every vertex, then $F'$ should contain all the bridges in $G$. Moreover, since prover has encoded three pseudoforests $F'$, $T_1$, and $T_2$, applying the soundness of the certification scheme in~\Cref{lem: 2-ec-oriented} to $G[F'\cup T_1\cup T_2]$, the verifier will reject when $F'$ contains a bridge.

\end{proof}

\begin{remark}
    The constant in~\Cref{thm: 2-ec-bdd-expansion} comes from the use of conflict-free colorings~\cite{BFVZ26}. Since a 2-hop coloring is a conflict-free coloring on bounded-degree graphs,~\Cref{thm: 2-edge-conn} is subsumed by~\Cref{thm: 2-ec-bdd-expansion}.
\end{remark}

\subsection{An $\omega(1)$ lower bound for $2$-vertex-connectivity in the anonymous model}
In contrast to the constant-size certification scheme for bounded-degree graphs in~\Cref{thm: 2-vertex-conn}, we give an $O(\log(\log^*n))$-bit lower bound for certifying $2$-vertex-connectivity in the anonymous model. The lower bound construction is a slight modification of the parity lower bound in~\cite{BFVZ26} with the observation that their accepting instance can be made $2$-vertex-connected. See~\Cref{fig: 2VC-lb} for an illustration.

\thmTwoVCLB*

To prove~\Cref{thm: 2-vc-lb}, we use the same approach as in~\cite{BFVZ26}. The following lemma is an implication of the Finite Union Theorem (see~\cite[Theorem 12, Theorem 13]{BFVZ26} and the reference therein) and Lemma 15 in~\cite{BFVZ26}. 
\begin{lemma}[{~\cite{graham1991ramsey, setyawan1998combinatorial, BFVZ26}}]\label{lem: FiniteUnionThm}
    For every integer $m$, let $c=f(m)$ be the minimum number of colors for which there exists a coloring of all non-empty subsets with $c$ colors such that no two disjoint subsets $\emptyset\subsetneq S_1,S_2\subseteq[m]$ satisfy that $S_1,S_2$, and $S_1\cup S_2$ all receive the same color. Then $c\in\Omega(\log^*m)$.
\end{lemma}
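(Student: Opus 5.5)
The plan is to reduce the statement to the multicolor Ramsey theorem for triangles, by restricting attention to a special family of subsets of $[m]$: the intervals. The bound this gives is much stronger than $\Omega(\log^* m)$, so there is room to spare. Fix any coloring $\chi$ of the nonempty subsets of $[m]$ with $c$ colors. Consider the complete graph $K_{m+1}$ on vertex set $\{1,\ldots,m+1\}$. For $1\le i<j\le m+1$, color the edge $\{i,j\}$ with $\chi(\{i,i+1,\ldots,j-1\})$, the color of the nonempty interval $[i,j)$. This is a $c$-coloring of the edges of $K_{m+1}$.

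The key observation is that a monochromatic triangle on $i<j<l$ yields exactly the forbidden configuration. Set $S_1=[i,j)$ and $S_2=[j,l)$. Both are nonempty, they are disjoint, and $S_1\cup S_2=[i,l)$. The three edges $\{i,j\},\{j,l\},\{i,l\}$ carry the colors of $S_1$, $S_2$ and $S_1\cup S_2$ respectively, so a monochromatic triangle means all three sets receive the same color. Hence a valid coloring (one with no such triple) forces the edge coloring of $K_{m+1}$ to contain no monochromatic triangle. That is, $m+1<R_c(3)$, where $R_c(3)$ is the $c$-color Ramsey number for triangles.

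Next I will invoke the standard bound $R_c(3)\le \lfloor e\cdot c!\rfloor+1$. The proof is the usual induction: a vertex has degree at least $R_{c-1}(3)\cdot c$ when $N\ge c(R_{c-1}(3)-1)+2$, so by pigeonhole it has many neighbors joined to it in a single color, and then one applies induction inside that color class or finds a triangle. This gives $m< e\cdot c!$, hence $c\log c\ge \log c!\ge \log(m/e)$. Therefore $c=\Omega(\log m/\log\log m)$, which is in particular $\Omega(\log^* m)$, proving the lemma.

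I do not expect a serious obstacle. The only points to check are that the intervals used are genuinely nonempty and disjoint, which is guaranteed by requiring $i<j<l$, and that the Ramsey bound is quoted correctly. If one instead insisted on following the route in \cite{BFVZ26} via the Finite Union Theorem, the hard step would be extracting a tower-type quantitative bound for Folkman-type numbers. The interval reduction avoids that entirely.
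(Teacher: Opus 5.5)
Your proof is correct, and it takes a genuinely different and much more elementary route than the paper. The paper does not prove this lemma itself. It imports it from \cite{BFVZ26}, as a consequence of the Finite Union Theorem (Folkman's theorem; Theorems~12--13 there) together with their Lemma~15. The tower-type quantitative bounds behind that theorem are what produce the $\log^*$.

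You observe that only the two-set case is needed. Restricted to intervals $[i,j)$, a forbidden triple is exactly a monochromatic triangle in an edge-coloured $K_{m+1}$; this is the classical Ramsey-to-Schur reduction. That gives $m<e\cdot c!$ and hence $f(m)=\Omega(\log m/\log\log m)$. Colouring $S$ by $\lfloor\log_2|S|\rfloor$ shows $f(m)\le\lfloor\log_2 m\rfloor+1$, so your bound is tight up to a $\log\log m$ factor. The bound the paper uses is exponentially weaker.

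There is one small slip in your sketch of the Ramsey bound. For $N=c(R_{c-1}(3)-1)+2$, a vertex has degree $c(R_{c-1}(3)-1)+1$, not at least $c\,R_{c-1}(3)$. That degree is still exactly what pigeonhole needs to find $R_{c-1}(3)$ neighbours joined in one colour. So $R_c(3)\le c(R_{c-1}(3)-1)+2$ holds, and so does $R_c(3)\le\lfloor e\cdot c!\rfloor+1$.

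What each approach buys: the citation route rests on a much more general theorem (arbitrarily many disjoint sets with all unions monochromatic). For this statement, yours is self-contained and quantitatively far better. Plugged into the proof of \Cref{thm: 2-vc-lb}, it would raise that bound from $\Omega(\log(\log^* n))$ to $\Omega(\log\log\log n)$. The gluing argument there only needs disjoint $S_1,S_2$ with $v_{S_1}$, $v_{S_2}$, $v_{S_1\cup S_2}$ all present and equally labelled. So one could create gadget vertices $v_S$ only for intervals $S$; the graph stays $2$-vertex-connected and has $O(m^2)$ vertices. That version would even give $\Omega(\log\log n)$.
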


\begin{proof}[Proof of~\Cref{thm: 2-vc-lb}]
    We first construct a graph $H=(V,E)$ as follows:
    \begin{itemize}
        \item First, we have a ground set $M=[m]$. Let $B=M\cup\{i+m:i\in M\}=\{1,\ldots,m, m+1,\ldots,2m\}\subseteq V$. The set $B$ induces a clique $K_{2m}$ in $G$. That is, every vertex in $B$ is connected to all other vertices in $B$. 
        \item For each $S\subseteq M$, let $S_{\parallel m}=\{s+m:s\in S\}$ we introduce a new vertex $v_S\in V$ that connects to all vertices in $S\cup S_{\parallel m}$. Note that for every $S$, $(S\cup S_{\parallel m})\subseteq B$.
    \end{itemize}
    Observe that $H$ is indeed $2$-vertex-connected. According to~\Cref{lem: FiniteUnionThm}, if the number of labels is $o(\log^*m)$, then there exist two disjoint subsets $S_1$ and $S_2$ in $[m]$ such that $v_{S_1}$, $v_{S_2}$, and $v_{S_1\cup S_2}$ receive the same labels.

    Take two copies of $H$ with the same set of accepting certificate assignments. Denote the two copies by $H_1$ and $H_2$, respectively. We construct the graph $G$ by identifying the vertices $v_{S_1}$ in $H_1$ and $v_{S_2}$ in $H_2$, call this new vertex $v^*$. Then $v^*$ is a cut vertex in $G$. Observe that $v^*$ has the same label as $v_{S_1}$ and $v_{S_2}$, and therefore identifying these two vertices will not change the view of every other vertex than $v^*$. The view of $v^*$ is identical to the view of $v_{S_1\cup S_2}$ in both $H_1$ and $H_2$; therefore, all vertices still output \accept in $G$, a contradiction.
\end{proof}

\begin{figure}[ht]
    \centering
    \begin{subfigure}{0.4\textwidth}
        \centering
        \includegraphics[width=0.5\textwidth]{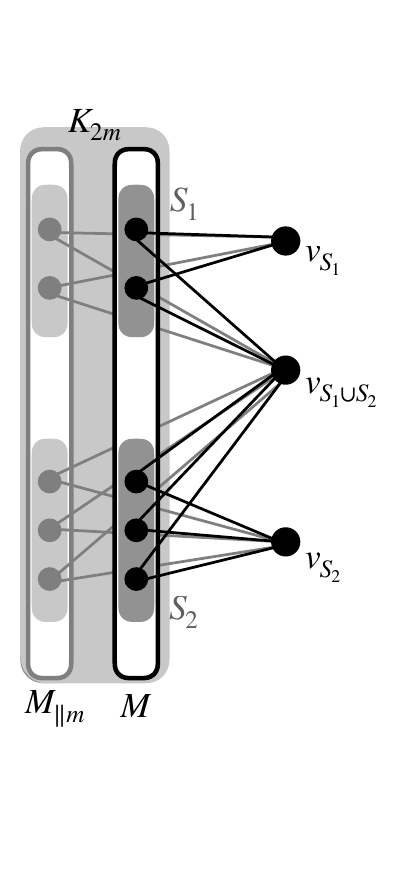}
        \caption{A $2$-vertex-connected graph. The vertices $v_{S_1}$, $v_{S_2}$, and $v_{S_1\cup S_2}$ receive the same certificates.}
        \label{fig: 2VClb-half}
    \end{subfigure}
    \hfill
    \begin{subfigure}{0.45\textwidth}
        \centering
        \includegraphics[width=0.7\textwidth]{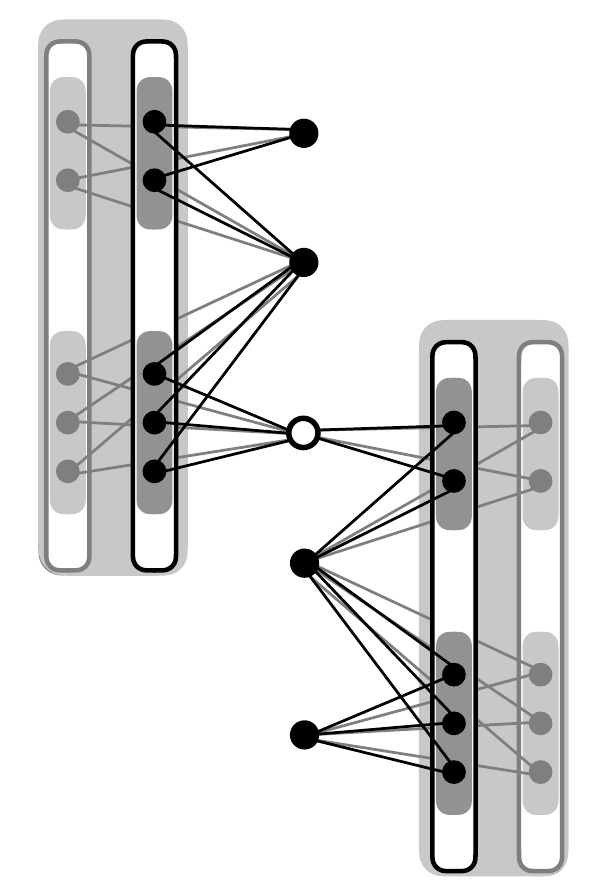}
        \caption{The graph obtained by taking two copies of the graph on the left-hand side and identifying $v_{S_1}$ in the first copy and $v_{S_2}$ in the second copy.}
        \label{fig: 2VClb-glue}
    \end{subfigure}
    \caption{A lower bound graph for $2$-vertex-connectivity. Graph (a) is an accepting instance. Vertices in graph (b) have the same view as in graph (a), while graph (b) should be rejected since the merged (white) vertex is a cut vertex.}
    \label{fig: 2VC-lb}
\end{figure}

Notice that our upper bound for $2$-vertex-connectivity also applies to the anonymous model. This theorem establishes a separation between bounded-degree and unbounded-degree graphs.

%% file: conclusion_v2.tex
\section{Conclusion}\label{sec: conclusion}
In this work, we present a systematic study of the local certification of $k$-edge/vertex-connectivity, establishing new upper and lower bounds. Our results sharpen our understanding of the landscape of what is achievable, but they also leave several intriguing questions open.

\subsection{$k$-edge-connectivity}\label{sec: OQ-kEC}
For $k$-edge-connectivity, we establish a tight bound of $\Theta_k(\log n)$ with respect to $n$. However, the dependence on $k$ is still far from being understood. 
In~\Cref{sec:kec-upper}, we show that for $k \ge 3$, $k$-edge-connectivity can be certified using $O(k \log n)$ bits, while on the lower bound side, we only obtain $\Omega\!\left(\frac{\log n}{k}\right)$. Closing the gap between the current upper and lower bounds, in terms of their dependence on $k$, is an interesting direction for future work.


\subsection{$k$-vertex-connectivity} 
In~\Cref{sec:kvc-upper}, we show that $k$-vertex-connectivity admits certificates of  $\Tilde{O}_k(\sqrt{n})$ bits under the Itai--Zehavi conjecture, while our lower bound remains $\Omega_k(\log n)$. Determining the correct complexity of certifying $k$-vertex-connectivity therefore remains a central open problem arising from this work.


\subsection{Approximate local certification schemes}
\citet{CHPP20} introduced the notion of an \emph{approximation proof-labeling scheme}, which distinguishes graphs satisfying a property from graphs that are far from satisfying it. For example, they showed that certifying graph diameter $\le k$ requires $\Omega\!\left(\frac{n}{k}\right)$ bits, whereas a $2$-approximation requires only $O(\log n)$ bits.

Our results also have implications in this approximate setting. The lower bound constructions in~\Cref{sec: lower} can be adapted to obtain the same lower bound for distinguishing graphs with $(\ge k)$-edge/vertex-connectivity from those with $(\le k-2)$-edge/vertex-connectivity.


Similar to how edge-connectivity is characterized by edge-disjoint spanning trees, a natural analogue for vertex-connectivity is a \emph{connected dominating set partition} (CDS partition), introduced in~\cite{CHGK13}. The best known result~\cite{CHGG+17} shows that every $k$-vertex-connected graph admits a partition into $\Omega\!\left(\frac{k}{\log^2 n}\right)$ connected dominating sets (CDSs). Since each CDS can be certified using $O(\log n)$ bits, this implies that one can certify $\Omega\!\left(\frac{k}{\log^2 n}\right)$-vertex-connectivity using $O\!\left(\frac{k}{\log n}\right)$ bits, assuming the graph is $k$-connected. In other words, this yields an $O(\log^2 n)$-approximation for $k$-vertex-connectivity.

If any exact certification of $k$-vertex-connectivity turns out to require large certificates, is it possible to obtain significantly better approximation ratios using a small number of bits? Additionally, for edge-connectivity, following the discussion in~\Cref{sec: OQ-kEC} on the dependency of $k$ in our certification scheme, it would be interested in investigating whether one can get rid of the dependency on $k$ at the cost of approximation. Understanding this trade-off between accuracy and certificate size for connectivity remains an interesting direction for future work.


\subsection{$2$-edge/vertex-connectivity}
In~\Cref{sec: lower}, our lower bounds for $k$-edge- and $k$-vertex-connectivity are based on indistinguishability from graphs of edge or vertex connectivity at most $2$. The construction has maximum degree $O(k)$, and hence bounded degree for constant $k$. By contrast, \Cref{sec: 2-conn-bdd-deg} shows that on bounded-degree graphs, both $2$-edge-connectivity and $2$-vertex-connectivity admit constant-bit certification schemes. This sharp contrast points to a qualitative difference between certifying $2$-connectivity and certifying $k$-connectivity for $k\ge 3$.

For $2$-edge-connectivity, we extend the constant-bit upper bound from bounded-degree graphs to bounded-expansion graphs. Our approach builds on the pseudoforest certification scheme of~\citet{BFVZ26}, which relies on constant conflict-free colorings. Since such colorings may require $\Omega(\log n)$ bits on general graphs, pushing this approach substantially further appears challenging. For $2$-vertex-connectivity, we prove a superconstant lower bound in the anonymous model; whether a similar lower bound holds in the ID-based setting remains open.

More broadly, the certification complexity of $2$-connectivity is still far from settled. Determining the optimal complexities of certifying $2$-edge-connectivity and $2$-vertex-connectivity, and whether the two problems are inherently different, are natural directions for future work.
